\newcommand{\blue}[1]{{\color{blue} #1}}
\numberwithin{equation}{section}
\newtheorem{theorem}{\bf \blue{Theorem}}[section]
\titlespacing\section{0pt}{0pt}{-10pt}
\titlespacing\subsection{0pt}{0pt}{-8pt}
\newcommand\sixteen{\@setfontsize\sixteen{17pt}{6}}
\renewcommand{\maketitle}{\bgroup\setlength{\parindent}{0pt}
\begin{flushleft}
\sixteen\bfseries \@title
\medskip
\end{flushleft}
\textit{\@author}
\egroup}
\title{\begin{center}Crucial Inflammatory Mediators and Efficacy of Drug Interventions in Pneumonia Inflated COVID-19:\vspace{.15cm} \\ An Invivo Mathematical Modelling Study\end{center}}
\author{
Bishal Chhetri$^{a}$, D. K. K. Vamsi$^{*a}$, Vijay M. Bhagat$^{b}$, Ananth V. S.$^{a}$, Bhanu Prakash$^{a}$, Roshan Mandale$^{c}$, Swapna Muthusamy$^{b}$, Carani B Sanjeevi$^{d, e}$\\\\ \medskip 
$^{a}$Department of Mathematics and Computer Science, Sri Sathya Sai Institute of Higher Learning - \\ SSSIHL, India  \\ 
$^{b}$Central Leprosy Teaching and Research Institute - CLTRI, Chennai, India\\ 
$^{c}$ Sri Sathya Sai Higher Secondary School - SSSHSS, Puttaparthi, India\\ \medskip 
$^{d}$ Vice Chancellor, Sri Sathya Sai Institute of Higher Learning, India.\\
$^{e}$ Department of Medicine, Karolinska Institute, Stockholm, Sweden\\
$*_{ {\textit{Corresponding  \ Author}}}$\\ \medskip 
bishalchhetri@sssihl.edu.in,  dkkvamsi@sssihl.edu.in*, vijaydr100@gmail.com, ananthvs@sssihl.edu.in, \\ prakashdmacs@gmail.com, roshanmandle1996@gmail.com,    swapnamuthuswamy@gmail.com, sanjeevi.carani@sssihl.edu.in, sanjeevi.carani@ki.se \\
}
\begin{document}
\numberwithin{equation}{section}
\vspace*{.01 in}
\maketitle
\vspace{.12 in}

\section*{abstract} \vspace{.25cm}
The virus {\emph{SARS-COV-2}} caused disease {\emph{COVID-19}} has been declared a pandemic by WHO. Currently, over 210 countries and territories have been affected. Careful, well-designed drugs and vaccine for the total elimination of this virus seem to be the need of the hour. In this context, the invivo mathematical modelling studies can be extremely helpful in understanding the efficacy of the drug interventions. These studies can also help understand the role of the crucial inflammatory mediators and the behaviour of immune response towards this novel coronavirus. Motivated by these facts, in this paper, we study the invivo dynamics of Covid-19. We initially model and study the natural history, the course of the infection and its dynamics. We then validate the model by generating two-parameter heat plots that represent the characteristics of Covid-19. We also do the sensitivity analysis to identify the sensitive parameters of the system. Lastly, we study the efficacy of drug interventions for Covid-19 by formulating an Optimal Control Problem. The outcomes of these studies are multi-fold. The system admits two steady states:  the disease-free equilibrium and the infected equilibrium. The dynamics of the system show that the disease takes its course to one of these steady states based on the reproduction number $R_0$. The system undergoes a transcritical bifurcation at $R_0 = 1$. From the sensitivity analysis, it is seen that the burst rate of the virus particles and the natural death rate of the virus are the sensitive parameters of the system. Results from the optimal control studies suggest that the antiviral drugs that target viral replication and the drugs that enhance the immune system response both reduce the infected cells and viral load when taken individually. However, it is observed that these drugs yield the best possible results when administered together. Hence, it is concluded that the optimal control strategy would be to use the combination of both these drugs which not only help in patient's recovery  but also reduce the side effects caused to the patient because of the  minimal/optimal dosage administered. The results obtained here are inline with some of the clinical findings for Covid-19.  This invivo modelling study involving the crucial biomarkers of Covid-19 is the first of its kind and the results obtained from this can be helpful to researchers, epidemiologists, clinicians and doctors who are working in this field.

\section*{keywords} \vspace{.25cm}
Covid-19; Epithelial Cells; Pneumocytes; Inflammatory Mediators; Cytokines; Chemokines;  Drug Interventions; Optimal Control Problems;
\vspace{.12 in}


\section{introduction} \vspace{.25cm}

\qquad The Coronavirus caused disease COVID-19 has been declared a pandemic by WHO. Currently, over 210 countries and territories have been affected. As on 02nd May 2020, 2, 37, 996 people have lost their lives and more than 33 lakh people have been affected due to Covid-19 all over the world \cite{1}.  This pandemic has catalysed the development of novel coronavirus vaccines across pharmaceutical companies and research organizations. Drugs such as remdesivir, favipiravir, ivermectin, lopinavir/ritonavir, mRNA-1273, phase I trial (NCT04280224) and AVT technology are being used as therapeutic agents by different countries for treating Covid-19  \cite{10,8,9,tu2020review}. \vspace{.25cm}

 \qquad In this context, the invivo mathematical modelling studies can be extremely helpful in understanding the efficacy of the drug/medicine administered. These studies can also help in understanding the behaviour of cytokines, chemokines and immune system response of the body towards this virus  \cite{2,3}. The results of these studies can suggest the optimal drug regime for treating Covid-19. 

   \vspace{.25cm}


 \qquad Motivated by the above observations, in this paper, we propose to model and study the invivo dynamics of Covid-19 with and without drug interventions to understand the efficacy of drugs administered. We study these problems as Optimal Control Problems. In the recent work  \cite{ea2020host}, an in-host modelling study deals with the qualitative characteristics and estimation of standard parameters of corona viral infections. Some of the mathematical models that deal with transmission and spread of COVID-19 at the population level can be found in  \cite{5, 7, 4, 6}.  Modelling the  Invivo dynamics of Covid-19 involving the crucial biomarkers, which is being attempted here is the first of its kind for Covid-19. \vspace{.25cm}

  \qquad The section-wise split-up of the paper is as follows: In the next section, we discuss the pathogenesis of Covid- 19 and formulate the mathematical model dealing with natural history. Further, we discuss the antiviral drug interventions that are being tried across the nations and then we develop the invivo model incorporating these interventions. In the later sections, we discuss the disease dynamics, we frame the optimal control problem and do the optimal control studies. Finally, we present the discussions and conclusions followed by a few pointers to future research in this direction. \vspace{.25cm} 

\section{mathematical models formulation} \vspace{.25cm}

{\flushleft{  \textbf{PATHOGENSIS OF COVID-19} }}\vspace{.25cm}

\qquad On Feb. 11, 2020 World Health Organization named novel corona viral pneumonia induced as Coronavirus disease(COVID-19), which is caused by Severe Acute Respiratory Syndrome Coronavirus-2 (SARS-CoV-2). \vspace{.25cm}

\qquad Out of four coronavirus genera($\alpha,\beta,\gamma,\delta$), $\beta$ CoV strain showed 88\% identity of genetic sequence with two bat derived SARS corona viruses (bat-SL-CoVZXC45,bat-SL-CoVZXC21) and 50\% with virus causing Middle East Respiratory Syndrome(MERSCoV) \cite{p1}. Therefore the functional mechanism in pathogenesis is also quite resembling with SARS CoV and MERSCoV. \vspace{.25cm}

\qquad Human to human transmission of SARS-CoV-2 occurs either through droplet infection or direct contact from an infected person. Transmission from asymptomatics and through faeco-oral route are also reported. After the entry of virus through droplet infection or through contact transmission from hands to the mucous membrane of mouth, nose or eyes. It is reported that Angiotensin Converting Enzyme-2 (ACE-2) plays crucial role in providing binding site to the viral structural protein(Spike protein-S) on the cell surface and subsequent entry into the host cell \cite{p6}. The SARS-CoV-2 have much higher affinity towards ACE-2 receptors as compared with its previous conterparts(SARS-CoV and MERS-CoV). The ACE-2 expresses in lung alveolar Type-2 cells(AT2), liver cholangiocytes, colon colonocytes, esophagus, keratinocytes, endothelial cells of ileum, rectum and stomach, and proximal tubules of kidneys. AT2 secret surfactant, which reduces surface tension preventing collapsing of the alveoli and playing crucial role in oxygen diffusion across lungs and blood vessels.Viral antigens are presented by antigen presenting cells(APC) which are Human Leucocyte Antigen(HLA) cytotoxic T cell(also known as T-killer cell, cytotoxic T-lymphocyte, CD8+) \cite{p2}. The virus enters the cell with fusion of its membranes with host cell and begins transcription with ssRNA acting as template. Synthesis of the viral proteins takes place in the cytoplasm of the pneumocytes, new virus is released by budding and ready to infect new cell which is confirmed by presence of abundent viral antigens in the cytoplasm of the pneumocytes(AT2) in case of SARS-CoV \cite{p5}. Viremia (viral particle in the blood/serum) was also noticed by some authors along with very high levels of IL-6 especially among severe cases of Covid-19 illness leading to increase vascular permeability and impairment of organs \cite{p4}. Acute Respiratory Distress Syndrome(ARDS) is the common immunopathological event for all the aforesaid corona viral diseases.One of the main mechanisms in causation of ARDS is cytokine storm,deadly uncontrolled systemic inflammatory response due to the release of large quantity of pro-inflammatory cytokines and chemokines by immune effector cells. These cytokines are identified as  IFN-$\alpha$, IFN-$\gamma$, IL-1b, IL-6, IL-2, IL-18, IL-33, TNF-$\alpha$, TGF-$\beta$ and chemokines as CCL-2, CCL-3, CCL-5, CXCL-8, CXCL-9, CXCL-10. Severe infections correlated high levels of IL-6, IFN $\alpha$, CCL-5, CXCL-8 and CXCL-10 \cite{p4,p2}. Also the viral load was noted to be crucial in determining severity of the disease and strongly correlated with the lung injury Murray score \cite{p3}. The cytokine storm is a violent attack by the immune system causing ARDS, multi-organ failure and eventually death \cite{p2}.  \vspace{.25cm}

\qquad From the above discussed pathogenesis  it can be understood that the study of both the epithelial (pneumocytes) cells (including both the healthy and infected) and virus population levels and their changes over the time due to inflammatory mediators play a crucial in understanding the dynamics of pneumonia inflated  COVID-19. \vspace{.25cm}

\qquad Motivated by these  we first consider the following mathematical  model.
   \vspace{.25cm}
   {\flushleft \bf{Model 1 : Model without Interventions/Medication }}
   
    \begin{eqnarray}
   	\frac{dS}{dt}& =&  \omega \ - \beta SV  - \mu S  \label{sec2equ1} \\
   	\frac{dI}{dt} &=& \beta SV \ -  { \bigg(d_{1}  + d_{2}  + d_{3} +  d_{4} + d_{5}+ d_{6}\bigg)I }  \ - \mu I   \label{sec2equ2}\\ 
   	\frac{dV}{dt} &=&  \alpha I   \ -  \bigg( b_{1}  + b_{2}  + b_{3}  +  b_{4}+ b_{5} + b_{6}\bigg)V    \ -  \mu_{1} V \label{sec2equ3}
   \end{eqnarray} 
   \newpage
  
     \begin{table}[ht!]
     	
     	\centering 
     	\begin{tabular}{|l|l|} 
     		\hline\hline
     		
     		\textbf{Parameters} &  \textbf{Biological Meaning} \\  
     		\hline\hline 
     		$S$ & Healthy Type II Pneumocytes  \\
     		\hline\hline
     		
     		$I$ & Infected Type II Pneumocytes  \\
     		\hline\hline
     		$\omega$ & Natural birth rate of Type II Pneumocytes \\
     		\hline\hline
     		$V$ & Viral load  \\
     		\hline\hline
     		$\beta$ & Rate at which healthy Pneumocytes are infected  \\
     		
     		\hline\hline
     		$\alpha$ & Burst rate of virus particles \\
     		\hline\hline
     		$\mu$ & Natural death rate of Type II Pneumocytes \\
     		\hline\hline
     		$\mu_{1}$ & Natural death rate of virus \\
     		\hline\hline
     		
     		$d_{1}, \hspace{.25cm} d_{2}, \hspace{.25cm} d_{3}, \hspace{.25cm} d_{4}, \hspace{.25cm} d_{5}, \hspace{.25cm} d_{6}$ & Rates at which Infected Pneumocytes are removed because\\
     	& the release of cytokines and chemokines  IL-6\\
     	&  TNF-$\alpha$, \hspace{.2cm}INF-$\alpha$,  \hspace{.2cm}CCL5, \hspace{.2cm}CXCL8 , \hspace{.2cm}CXCL10   \hspace{.2cm} respectively   \\ 
     	\hline\hline
     		
     		$b_{1}, \hspace{.25cm} b_{2}, \hspace{.25cm} b_{3}, \hspace{.25cm} b_{4}, \hspace{.25cm} b_{5}, \hspace{.25cm} b_{6}$ & Rates at which Virus is removed because of\\
     		& the release of cytokines and chemokines  IL-6\\
     		&  TNF-$\alpha$, \hspace{.2cm}INF-$\alpha$,  \hspace{.2cm}CCL5, \hspace{.2cm}CXCL8 , \hspace{.2cm}CXCL10   \hspace{.2cm} respectively   \\ 
     		\hline\hline
     	$	u_2(t) $ & Rate at which virus replication/birth is decreased due to medication\\
     		\hline\hline

     	\end{tabular}
     \end{table} \vspace{.25cm}
   
   \qquad  The purpose of drug interventions can be two fold, the first to target the virus replication cycle and the second based on immunotherapy approaches either aimed to boost innate antiviral immune responses or alleviate damage induced by dysregulated inflammatory responses.  Based on this the therapeutic agents for virus infections can be divided into two categories each serving the designated purpose \cite{tu2020review}.
   
     \qquad Drugs such as {\emph{remdesivir, favipiravir}} inhibit RNA-dependent RNA polymerase and drugs  {\emph{ivermectin, lopinavir/ritonavir}}  inhibit the viral protease there by reducing the viral replication. On the other hand clinical trials such as phase I trial (NCT04280224) in China aim to enhance the innate immune system by increasing  the production of cytokines and chemokines with the end goal of increasing NK cells \cite{tu2020review}.   \vspace{.25cm}
   
     \qquad Motivated by the above clinical findings we consider an  control problem with the following drug interventions as controls.
      \begin{eqnarray*}
      	u_{11}(t) &=& d_1(t) + d_1(t) + d_2(t) + d_3(t) + d_4(t) + d_5(t) +  d_6(t) \\*
      	u_{12}(t) &=& b_1(t) + b_1(t) + b_2(t) + b_3(t) +  b_4(t) + b_5(t) +  b_6(t) \\*
      	&and&  u_2(t)
      \end{eqnarray*}
       \qquad Here the controls $u_{11}$ and $u_{12}$ incorporate the effect of drug interventions which enhance the innate immune response which in turn leads to decrease in the infected population and viral load. The control $u_2$ incorporates the effect of drug interventions which prevent viral replication thereby reducing the virus birth rate.   
   \newpage
   {\flushleft \bf{Model 2 :  Model with Interventions/Medication as Controls }}
   
   \begin{eqnarray}
   	\frac{dS}{dt} &=&   \omega \  - \beta SV \ - \mu S \label{sec2eqn4} \\
   	\frac{dI}{dt} &=& \beta SV \ - \bigg(\bigg(d_{1}(t)  + d_{2}(t)  + d_{3}(t) +  d_{4}(t) + d_{5}(t) + d_{6}(t)\bigg) = u_{11}(t)\bigg)I  \  - \mu I \label{sec2eqn5}\\ 
   	 	\frac{dV}{dt} &=& (\alpha - u_2(t))I \ - \bigg(\bigg( b_{1}(t)  + b_{2}(t)  + b_{3}(t)  +  b_{4}(t)+ b_{5}(t) + b_{6}(t) \bigg) = u_{12}(t) \bigg)V 
   	 	- \mu_{1} V  \label{sec2eqn6}
  \end{eqnarray}

\section{natural history, the course of the infection and its dynamics} \vspace{.25cm}
In this section we consider model 1 which deals with the natural history and course of infection. We study its dynamics.

{\flushleft{  \textbf{POSITIVITY AND BOUNDEDNESS} }}\vspace{.25cm}

The positivity and boundedness of the solutions of the model  is the fundamental thing which needs to be established before doing any other analysis. \vspace{.25cm}

\underline{\textbf{Positivity}}\textbf{:}
 We now show that if the initial conditions of the system (\ref{sec2equ1}-\ref{sec2equ3}) are positive, then the solution remain positive for any future time. Using the  equations (\ref{sec2equ1}-\ref{sec2equ3}), we get,
\begin{align*}
\frac{dS}{dt} \bigg|_{S=0} &= \omega \geq 0 ,  &  
\frac{dI}{dt} \bigg|_{I=0} &= \beta S V  \geq 0,
\\ \\
\frac{dV}{dt} \bigg|_{V=0} &= \alpha I \geq 0.  &
\end{align*}

\vspace{1.5mm}
\noindent
\\ 
Thus all the above rates are non-negative on the bounding planes (given by $S=0$, $I=0$, and $V=0$) of the non-negative region of the real space. So, if a solution begins in the interior of this region, it will remain inside it throughout time $t$. This  happens because the direction of the vector field is always in the inward direction on the bounding planes as indicated by the above inequalities. Hence, we conclude that all the solutions of the the system (\ref{sec2equ1}-\ref{sec2equ3}) remain positive for any time $t>0$  provided that the initial conditions are positive. This establishes the positivity of the solutions of the system (\ref{sec2equ1}-\ref{sec2equ3}). Next we will show that the solution is bounded.  \vspace{.25cm}

\underline{\textbf{Boundedness}}\textbf{:}
Let  $N(t) = S(t)+I(t)+V(t) $ \\

Let $x =d_{1}+d_{2}+d_{3}+d_{4}+d_{5}+d_{6} $
and $y=b_{1}+b_{2}+b_{3}+b_{4}+b_{5}+b_{6}$\\

Now,  
\begin{equation*}
\begin{split}
\frac{dN}{dt} & = \frac{dS}{dt} +  \frac{dI}{dt}+  \frac{dV}{dt}  \\[4pt]
& = \omega -\mu(S+I)-\mu_{1}V-(x-\alpha)I-yV \\
& \le \omega -\mu(S+I+V)  \\
\end{split}
\end{equation*}
with the assumption that $x > \alpha$ and $\mu=\mu_{1}.$ \\ 
\noindent
Here the integrating factor is $e^{\mu t}.$ Therefore after integration we get, \\

$N(t)\le \frac{\omega}{\mu} + ce^{-\mu t}.$ Now  as $t \rightarrow \infty$ we get, \\

$$N(t)\le \frac{\omega}{\mu}$$

Thus we have shown that the system (\ref{sec2equ1}-\ref{sec2equ3}) is positive and bounded. Therefore the biologically feasible region is given by the following set, 
\begin{equation*}
\Omega = \bigg\{\bigg(S(t), I(t), B(t)\bigg) \in \mathbb{R}^{3}_{+} : S(t)  + I(t) + V(t) \leq \frac{\omega}{\mu}, \ t \geq 0 \bigg\}
\end{equation*}
{\flushleft{  \textbf{EQUILIBRIUM POINTS AND REPRODUCTION NUMBER ($R_{0}$}) }}\vspace{.25cm}	

Model 1 has two equilibrium points namely, the infection free equilibrium $E_{0}=\bigg(\frac{\omega}{\mu},0,0 \bigg)$ and the infected equilibrium $E_{1}=(S^*, I^*, V^*),$  where,\\ 

$$\hspace{-1cm}S^*=\frac{(y+\mu_{1})(x+\mu)}{\alpha \beta}$$\\

$$I^*=\frac{\alpha \beta \omega-\mu (y+\mu_{1})(x+\mu)}{\alpha \beta (x+\mu)}$$\\

$$V^*=\frac{\alpha \beta \omega-\mu(y+\mu_{1})(x+\mu)}{ \beta (x+\mu)(y+\mu_{1})}$$\\

and $$x =d_{1}+d_{2}+d_{3}+d_{4}+d_{5}+d_{6}$$ 
$$y=b_{1}+b_{2}+b_{3}+b_{4}+b_{5}+b_{6}$$\\

The basic reproduction number is calculated using the next generation matrix method \cite{diekmann2010construction} and the expression for $R_{0}$ is given by \vspace{.5cm}\\
\begin{equation}
\mathbf{ R_{0}}= \mathbf{\frac{\beta \alpha \omega}{\mu (x+\mu) (y+\mu_{1})}} \label{sec3equ1}\\
\end{equation}

{\flushleft{  \textbf{LOCAL DYNAMICS AND GLOBAL DYNAMICS OF THE MODEL 1} }}\vspace{.25cm}

In the following we discuss the local stability analysis of the infection free equilibrium $E_{0}$ and infected equilibrium $E_{1}$. \vspace{.25cm}

{\flushleft{  \textbf{STABILITY ANALYSIS OF  $E_{0}$} }}\vspace{.25cm}

The jacobian matrix of the system (\ref{sec2equ1}-\ref{sec2equ3}) at the infection free equilibrium $E_0$ is given by, \\

\begin{equation*}
J_{E_{0}} = 
\begin{pmatrix}
-\mu & 0 & \frac{-\beta\omega}{\mu} \\
0 & -(x+\mu) & \frac{\beta\omega}{\mu} \\
0 & \alpha & -(y+\mu_{1})
\end{pmatrix}
\end{equation*} \\

The characteristics equation is given by,

\begin{equation}
\bigg(-\mu_{1}-\lambda\bigg)\bigg(\lambda^2+\bigg(x+y+\mu+\mu_{1}\bigg)\lambda+(x+\mu)(y+\mu_{1})-\frac{\beta \alpha \omega}{\mu}\bigg)=0 \label{sec3equ2}  
\end{equation}

The first root of equation (\ref{sec3equ2}) is $\lambda_{1}=-\mu_{1}$ \\

Now from the definition of $\mathbf{R_{0}}$ (\ref{sec3equ1}) we get, \vspace{.5cm}\\
\begin{eqnarray}
\beta \alpha \omega -\mu(x+\mu)(y+\mu_{1}) y=\mu(R_{0}-1)(x+\mu)(y+\mu_{1}) \label{sec3equ3}
\end{eqnarray}

Using the relation (\ref{sec3equ3}) the roots of the  quadratic part of  equation (\ref{sec3equ2}), are given by, \\

$$\lambda_{2,3}=A \pm \sqrt{A^2+ 4(R_{0}-1)D}  $$    where, \\

$$A= (x+y+\mu+\mu_{1})$$ and $$ D=(x+\mu)(\mu_{1}+ y)$$ \\

There are two cases that we need to consider here.\\

\textbf{case I: When $R_{0} < 1$}\\
\vspace{.25cm}
When $R_{0} < 1$ there are further two subcases, \\

\hspace{2cm}\textbf{(a)}: $A^2+ 4(R_{0}-1)D > 0$ \vspace{.3cm}\\

\hspace{2cm}\textbf{(b)}: $A^2+ 4(R_{0}-1)D < 0$ \\

\textbf{Sub-case (a)}: When $A^2+ 4(R_{0}-1)D > 0$, both  the eigenvalues $\lambda_{2,3}$ are negative, \\

$$\lambda_{2,3}=A \pm \sqrt{A^2+ 4(R_{0}-1)D} < 0 $$ \\

Hence all the eigen values of the characteristics equation (\ref{sec3equ2}) are negative. Therefore the infection free equilibrium point $E_{0}$ is asymptotically stable. \\

\textbf{Sub-case (b)}: When $A^2+ 4(R_{0}-1)D < 0$ the eigenvalues of the quadratic part of equation (\ref{sec3equ2}) are complex conjugates with the negative real parts. Therefore we again have $E_{0}$ to be locally asymptotically stable.\\

Hence we conclude that $E_{0}$ is locally asymptotically stable provided $R_{0} < 1$ \vspace{.25cm}

\textbf{case II: When $R_{0} > 1$}\\
For the case $R_{0}>1$, the characteristics equation (\ref{sec3equ2}) has two negative eigenvalues and one positive eigenvalue. Therefore whenever $R_{0}>1$ the infection free equilibrium $E_{0}$ is unstable.\\

{\flushleft{  \textbf{STABILITY ANALYSIS OF  $E_{1}$} }}\vspace{.25cm}

With the definition of $R_{0}$ (\ref{sec3equ1}) the infected equilibrium is given by, \\

$$E_{1}=(S^*, I^*, B^*)$$ where,\\ 

$$\hspace{.3cm}S^*=\frac{(x+\mu)(y+\mu_{1})}{\alpha \beta}$$\\

$$\hspace{.8cm}I^*=\frac{\mu(\mu_{1}+ y) \bigg(R_{0}-1\bigg)}{\alpha \beta}$$\\

$$\hspace{-.4cm}V^*=\frac{\mu \bigg(R_{0}-1\bigg)}{\beta}$$\\

Therefore the infected equilibrium exists only if $R_{0}>1$, otherwise $E_{1}$ will become negative which does not make sense.\\

The jacobian matrix of the system (\ref{sec2equ1}-\ref{sec2equ3}) is given by, \\

\begin{equation*}
J = 
\begin{pmatrix}
-\beta V-\mu & 0 & -\beta S \\
\beta V & -(x+\mu) & \beta S \\
0 & \alpha & -(y+\mu_{1})
\end{pmatrix}
\end{equation*} \\

The characteristic equation of the jacobian $J$ evaluated at $E_{1}$ is given by, \\
\begin{equation}
\lambda^3 + \bigg(p+\mu R_{0}\bigg)\lambda^2 + \bigg(p\mu R_{0}\bigg)\lambda + q\mu \bigg(R_{0}-1\bigg) = 0 \label{sec3equ4}  \\
\end{equation}
where $p=x+y+\mu+\mu_{1}$ and $q=(x+\mu)(\mu_{1}+y).$ \\

Since $R_{0} > 1$,  $(p+\mu R_{0})> 0, (p\mu R_{0}) > 1$ and $ q\mu (R_{0}-1) > 0 $ (because $R_{0}>1$ for $E_{1}$ to exists). Therefore if we substitute $\lambda = -k $ in  equation (\ref{sec3equ4}) using Descartes rule of sign change we get all the roots of (\ref{sec3equ4}) to be negative. Hence we conclude that the infected equilibrium point $E_{1}$ exists and remains asymptotically stable provided $R_{0} > 1.$\\

{\flushleft{  \textbf{GLOBAL DYNAMICS OF THE MODEL 1} }}\vspace{.25cm}

{\flushleft{  \textbf{GLOBAL STABILITY OF  $E_{0}$} }}\vspace{.25cm}

To establish the global stability of the infection free equilibrium $E_{0}$ we make use of the method discussed in Castillo-Chavez \textit{et al}  {\cite{GLB}}.\\ 

\begin{theorem}

Consider the following general system,\\

\begin{equation}
\begin{split}
\frac{dX}{dt} & = F(X,Y) \\[6pt] \label{sec3equ5}
\frac{dY}{dt} & = G(X,Y) \qquad   
\end{split}
\end{equation}
where $ X $ denotes the uninfected population compartments and $Y$ denotes the infected population compartments including latent, infectious etc. Here the function $G$ is such that it satisfies $G(X,0)=0$. Let $U_{0} = (X_{0}, \bar{0}) $ denote the equilibrium point of the above general system.

If the following  two conditions are satisfied then the infection free equilibrium point $U_{0}$ is globally asymptotically stable for the above general system provided $R_{0}<1$

$A_{1}$: For the subsystem $\frac{dX}{dt} = F(X,0)$, $X_{0}$ is globally asymptotically stable.
\vspace{1cm}

$A_{2}$: The function $G=G(X,Y)$ can be written as $G(X,Y)=AY - \widehat{G}(X,Y)$, where $\widehat{G_{j}}(X,Y) \geq 0$ $\forall \; (X,Y)$ in the biologically feasible region $\Omega$ for j=1,2 and $A = D_{Y}G(X,Y)$ at $(X_{0},\bar{0})$ is a M-matrix(matrix with non-negative off diagonal element).
\end{theorem}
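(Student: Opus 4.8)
The plan is to reduce global stability of $U_0$ to a comparison argument on the infected block, fed into hypothesis $A_1$ on the uninfected block, following Castillo-Chavez \textit{et al.} \cite{GLB}. The first step is to record the role of $R_0$: the condition $R_0<1$ is equivalent to the Metzler matrix $A = D_Y G(X,Y)|_{(X_0,\bar 0)}$ being a stable matrix, i.e.\ all its eigenvalues having negative real part. This follows from the next-generation construction used to define $R_0$ in \eqref{sec3equ1}, since $A$ splits as $F-V$ with $FV^{-1}$ having spectral radius $R_0$, and for such decompositions $s(A)<0 \iff \rho(FV^{-1})<1$.

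Second, I would bound the infected compartments. From $A_2$ we have, componentwise,
\[
\frac{dY}{dt} \;=\; AY - \widehat G(X,Y) \;\le\; AY \qquad \text{for all } (X,Y)\in\Omega,
\]
because $\widehat{G}_j(X,Y)\ge 0$. Since $A$ is a Metzler matrix, the linear system $\dot Z = AZ$ is cooperative, so the comparison principle gives $0\le Y(t)\le Z(t)$, where $Z$ solves $\dot Z = AZ$ with $Z(0)=Y(0)\ge 0$. By step one, $A$ is stable, hence $Z(t)\to\bar 0$ and therefore $Y(t)\to\bar 0$ as $t\to\infty$. Here the positivity and boundedness established earlier are used to guarantee that the trajectory stays in $\Omega$, where the inequality $\widehat G\ge 0$ is assumed.

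Third, I would feed this into the $X$-equation. Since $Y(t)\to\bar 0$, the subsystem $\dot X = F(X,Y(t))$ is asymptotically autonomous with limit system $\dot X = F(X,0)$, for which $A_1$ guarantees that $X_0$ is globally asymptotically stable; invoking the theory of asymptotically autonomous systems (Markus / Thieme) then yields $X(t)\to X_0$. Combining with step two gives $(X(t),Y(t))\to(X_0,\bar 0)$, i.e.\ global attractivity of $U_0$. Local asymptotic stability of $U_0$ follows separately: the Jacobian at $U_0$ is block lower-triangular with diagonal blocks $D_X F(X_0,0)$ (stable by $A_1$) and $A$ (stable by step one). Global asymptotic stability is the conjunction of these two facts.

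The step I expect to be the main obstacle is the third one: the passage from $Y(t)\to\bar 0$ to $X(t)\to X_0$ is not automatic and genuinely requires either the asymptotically-autonomous-systems machinery or a purpose-built Lyapunov function derived from the global stability asserted in $A_1$; one must also be careful that the $\omega$-limit set of the full trajectory is contained in the basin provided by $A_1$. By contrast, the equivalence $R_0<1 \Leftrightarrow A$ stable and the comparison estimate in step two are routine once the next-generation setup and the Metzler structure are in hand.
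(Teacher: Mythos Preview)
Your sketch is a correct outline of the Castillo-Chavez--Feng--Huang argument and would prove the general theorem as stated. However, the paper does not prove this theorem at all: it is quoted verbatim as a known result from \cite{GLB}, and the text that follows the theorem is not a proof of the general statement but an \emph{application} of it to the specific model (\ref{sec2equ1})--(\ref{sec2equ3}). Concretely, the paper identifies $X=S$, $Y=(I,V)$, writes out $F$ and $G$, verifies $A_1$ by solving the scalar linear ODE $\dot S = \omega - \mu S$ explicitly, computes the matrix $A$ and checks its off-diagonal entries are nonnegative, and finally computes $\widehat G(X,Y) = (\beta V(\omega/\mu - S),\,0)^{T}$ and checks it is nonnegative on $\Omega$ using the a priori bound $S\le \omega/\mu$.

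So there is a mismatch of targets rather than a mathematical error: you proved the cited lemma, while the paper only verifies its hypotheses for the model at hand. Your three steps (the $R_0<1 \Leftrightarrow s(A)<0$ equivalence, the cooperative comparison $\dot Y \le AY$, and the asymptotically-autonomous passage via Thieme/Markus) are exactly the ingredients of the original proof in \cite{GLB}, and your identification of step three as the delicate point is accurate. If the intent was to reproduce what the paper does after stating the theorem, you would instead want to specialize to the model and check $A_1$, $A_2$ directly as above.
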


\vspace{2cm}
Now we will prove the global stability of $E_{0}=(\frac{\omega}{\mu},0,0)$ of system (\ref{sec2equ1}-\ref{sec2equ3}) by showing that system (\ref{sec2equ1}-\ref{sec2equ3}) can be written as the above general form and both the conditions $A_{1}$ and $A_{2}$ are satisfied .\\
\vspace{.5cm}
	


Comparing the above general system (\ref{sec3equ5}) to the system (\ref{sec2equ1} - \ref{sec2equ3}) the functions F and G are given by
\\
$$   F(X,Y) = \omega - \beta S V - \mu S$$
$$\hspace{1.3cm}G(X,Y) = \bigg( \beta V S-(x+\mu) I, \; \alpha I-(y+\mu_{1}) V\bigg)$$ 

\noindent
\\ 
where $X=S$ and $Y=(I, V)$ \\
The disease free equilibrium point is $U_{0} = (X_{0}, \bar{0}), $ where,
\\
$$X_{0} = \frac{\omega}{\mu} \quad \text{and} \quad \bar{0} = (0, 0)$$

\noindent
\\
From the stability analysis of $E_{0}$, we know that  $U_{0}$ is locally asymptotically stable iff $R_{0} < 1$. Clearly, we see that $G(X,\bar{0}) = (0,\bar{0})$. Now, we show that $ X_{0} = (\frac{\omega}{\mu})$ is globally asymptotically stable for the subsystem 

\begin{equation}
    \frac{dS}{dt}=F(S,\bar{0})=\omega - \mu S  \label{sec3equ6}
\end{equation}
\noindent
The integrating factor is $e^{\mu t}$ and therefore after performing integration on the above equation (\ref{sec3equ6}) we get, \\

$$S(t)e^{\mu t}=\frac{\omega e^{\mu t}}{\mu }  + c$$
\\\noindent

As $t \rightarrow \infty$ we get, \\

$$S(t)= \frac{\omega}{\mu}$$
\noindent
which is independent of c. This independency implies that $X_{0}=\frac{\omega}{\mu_{1}}$ is globally asymptotically stable for the subsystem $\frac{dS}{dt}=\omega - \mu S$. So, the assumption  $A_{1}$ is satisfied.\\

\noindent
\\
Now, we will show that assumption $A_{2}$  holds. First, we will find the matrix $A$. As per the theorem, $A = D_{Y}G(X,Y)$ at $X=X_{0}$ and $Y=\bar{0}$. Now\\ 
\vspace{2mm}
\[ D_{Y}G(X,Y) = 
\begin{bmatrix}
-(x+\mu) & \beta S \\[6pt]
\alpha & -(y+\mu_{1})
\end{bmatrix}\]

\noindent
\\ 
At $X=X_{0}$ and $Y=\bar{0}$, we obtain,
\\ \vspace{1mm}
\[A = 
\begin{bmatrix}
-(x+\mu) & \frac{\beta \omega}{\mu} \\[6pt]
\alpha & -(y+\mu_{1}) 
\end{bmatrix}\]
\vspace{2mm}
\noindent
\\ 
Clearly, matrix $A$ has non-negative off-diagonal elements. Hence, $A$ is a M-matrix. Using 
$\widehat{G}(X,Y) = AY - G(X,Y)$, we get,
\vspace{2mm}
\[\widehat{G}(X,Y) \;\; = \;\;
\begin{bmatrix}
\widehat{G_{1}}(X,Y) \\[6pt]    
\widehat{G_{2}}(X,Y)    
\end{bmatrix} 
\;\; = \;\;
\begin{bmatrix}
\beta V(\frac{\omega}{\mu}-S) \\[6pt]
0
\end{bmatrix} \] \\

Hence $\widehat{G_{1}}(X,Y) = \beta V(\frac{\omega}{\mu}-S)\geq 0 $ because $S(t)\le\frac{\omega}{\mu}$ since $S(t)+I(t)+V(t)\le\frac{\omega}{\mu}$ and  $\widehat{G_{2}}(X,Y) =0$ \\

Thus both the assumptions $A_{1}$ and $A_{2}$ are satisfied and therefore infection free equilibrium point $E_{0}$ is globally asymptotically stable provided $R_{0} < 1.$

{\flushleft{  \textbf{GLOBAL STABILITY OF  $E_{1}$} }}\vspace{.25cm}

From the definition of infected equilibrium $E_{1}$ we get the following relations

\begin{enumerate}
	
	\item $\mu S^*= \omega-(x+\mu)I^*$ \\
	
	\item $\beta S^* V^*=(x+\mu)I^*$ \\
	
	\item $\alpha I^* = (y+\mu_{1})V^*$ \\ 
\end{enumerate}

We define the Lyaponov function as,

$L(S,I,V)=\frac{\bigg((S-S^*)+(I-I^*)\bigg)^2}{2}+ \frac{(I-I^*)^2}{2}+ \frac{(V-V^*)^2}{2}.$

Differentiating $L$ with respect to time we get, \\

$\frac{dL}{dt}=\bigg((S-S^*)+(I-I^*)\bigg)\frac{d(S+I)}{dt}+ (I-I^*)\frac{dI}{dt}+(V-V^*)\frac{dV}{dt}$

$$ = \bigg((S-S^*)+(I-I^*)\bigg)\bigg(\omega-\mu S -(x+\mu)I\bigg) +(I-I^*)\bigg(\beta SV-(x+\mu)I\bigg) \\
+ (V-V^*)\bigg(\alpha I-(y+\mu_{1})\bigg)$$

Using the above relations 1, 2 and 3 we get the following,

$\frac{dL}{dt}=\bigg((S-S^*)+ (I-I^*)\bigg)\bigg( \mu S^* +I^*(x+\mu)-\mu S -(x+\mu)I\bigg)\\
+(I-I^*)\bigg(\beta S(V-V^*) + \beta V^*(S-S^*)-(x+\mu)(I-I^*)\bigg)$
$ + (V-V^*)\bigg(\alpha (I-I^*)-(y+\mu_{1})(V-V^*)\bigg)$
$$\hspace{1cm}=\bigg((S-S^*)+ (I-I^*)\bigg)\bigg(-\mu(S-S^*)-(x+\mu)(I-I^*)\bigg) + \beta S(V-V^*)(I-I^*)+\beta V^*(S-S^*)(I-I^*)$$ 
$$-(x+\mu)(I-I^*)^2+\alpha(V-V^*)(I-I^*)-(y+\mu_{1})(V-V^*)^2$$

Further simplifying we get,\\
$$\frac{dL}{dt}=-\bigg(x+2\mu+\beta V^*\bigg)(S-S^*)(I-I^*)+(\beta S+\alpha)(V-V^*)(I-I^*)-\mu(S-S^*)^2-(x+\mu)(I-I^*)^2-(y+\mu_{1}(V-V^*)^2$$

Now we assume $\bigg(x+2\mu+\beta V^*\bigg)(S-S^*)(I-I^*) > 0$
and $\bigg(x+2\mu+\beta V^*\bigg)(S-S^*)(I-I^*) > (\beta S+\alpha)(V-V^*)(I-I^*)$
whenever $(\beta S+\alpha)(V-V^*)(I-I^*) > 0$ \\

Thus $\frac{dL}{dt}\le 0$  and  $\frac{dL}{dt}=0$ iff $S=S^*$, $I=I^*$, and $V=V^*$ \\

Hence by Lyapunov Lasalle theorem \cite{misra2011modeling}, the infected equilibrium $E_{1}$ is globally asymptotically stable. \\

{\flushleft{  \textbf{BIFURCATION ANALYSIS} }}\vspace{.25cm}

In this section we use the method given by Chavez and Song in  \cite{BIF} to do the bifurcation  analysis.

\vspace{.25cm}

\begin{theorem}

	Consider a system, 
	$$\frac{dX}{dt}=f(X,\phi)$$
	where $X \in \mathbb{R}^{n}$, $\phi \in \mathbb{R}$ is the bifurcation parameter and $f : \mathbb{R}^{n} \times \mathbb{R} \rightarrow \mathbb{R}^{n} $ where $f \in \mathbb{C}^2 (\mathbb{R}^n, \mathbb{R})$. Let $\bar{0}$ be the equilibrium point of the system such that $f(\bar{0},\phi) = \bar{0}, \forall \; \phi \in \mathbb{R}$. 
	Let the following conditions hold :
	
	\begin{enumerate}
		\item For the matrix $A = D_{X}f(\bar{0},0)$, zero is the simple eigenvalue and all other eigenvalues have negative real parts.
		
		\item Corresponding to zero eigenvalue, matrix A has non-negative right eigenvector, denoted as $u$ and non-negative left eigenvectors, denoted as $v$.
	\end{enumerate}
	
	\noindent
	Let $f_{k}$ be the $k^{th}$ component of $f$. Let $a$ and $b$ be defined as follows -
	
	$$ a = \sum_{k,i,j=1}^{n} \Bigg[ v_{k} w_{i} w_{j} \bigg(\frac{\partial^{2} f_{k}}{\partial x_{i} \partial x{j}} (\bar{0},0)\bigg) \Bigg]$$
	
	$$ \hspace*{-0.65cm} b = \sum_{k,i=1}^{n} \Bigg[ v_{k} w_{i} \bigg(\frac{\partial^{2} f_{k}}{\partial x_{i} \partial \phi}(\bar{0},0)\bigg) \Bigg]$$

	\noindent
	Then local dynamics of the system near the equilibrium point $\bar{0}$ is totally determined by the signs of $a$ and $b$. Here are the following conclusions :
	
	\begin{enumerate}
		\item If $a > 0$ and $b > 0$, then whenever $\phi < 0$ with $\mid \phi \mid \ll 1$, the equilibrium $\bar{0}$ is locally asymptotically stable, and moreover there exists a positive unstable equilibrium. However when $0 < \phi \ll 1$, $\bar{0}$ is an unstable equilibrium and there exists a negative and locally asymptotically stable equilibrium.
		
		\item If $ a < 0$, $b < 0$, then whenever $\phi < 0$ with $\mid \phi \ll 1 $, $\bar{0}$ is an unstable equilibrium whereas if $0 < \phi \ll 1$, $\bar{0}$ is locally asymptotically stable equilibrium and there exists a positive unstable equilibrium.
		
		\item If $a > 0$, $b < 0$, then whenever $\phi < 0$ with $\mid \phi \mid \ll 1$, $\bar{0}$ is an unstable equilibrium, and there exists a locally asymptotically stable negative equilibrium. However if $0 < \phi \ll 1$, $\bar{0}$ is stable, and a there appears a positive unstable equilibrium.
		
		\item If $a < 0$, $b > 0$, then whenever $\phi$ changes its value from negative to positive, the equilibrium $\bar{0}$ changes its stability from stable to unstable. Correspondingly a negative  equilibrium, unstable in nature, becomes positive and locally asymptotically stable.
	\end{enumerate}

\end{theorem}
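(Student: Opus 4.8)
The plan is to prove this by a centre–manifold reduction, following the original argument of Castillo-Chavez and Song. Let $w$ denote the right null vector of $A$ (written $u$ in hypothesis (2) and as $w$ inside the sums $a,b$) and $v$ the left null vector, normalised so that $v\cdot w = 1$. The first step is to adjoin the parameter to the state: append the trivial equation $\dot\phi = 0$ and regard $(X,\phi)$ as a trajectory of an autonomous system on $\mathbb{R}^{n+1}$ with equilibrium $(\bar 0,0)$. The Jacobian of this augmented system at $(\bar 0,0)$ is block–triangular, with blocks $A = D_X f(\bar 0,0)$ and the scalar $0$ from the $\phi$–equation; hence its spectrum is a double zero together with the $n-1$ eigenvalues of $A$ that have negative real part, and no eigenvalue with positive real part. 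So the origin of the augmented system has a two–dimensional centre eigenspace, namely $\mathrm{span}(w)$ together with the $\phi$–axis, and an $(n-1)$–dimensional stable eigenspace.

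Next I would invoke the centre manifold theorem: locally near $(\bar 0,0)$ there is a two–dimensional invariant centre manifold $W^{c}$, tangent to the centre eigenspace, and by the reduction principle the stability of $(\bar 0,0)$ — and of nearby equilibria — for the full system coincides with that for the flow restricted to $W^{c}$. Parametrising $W^{c}$ by $(c,\phi)$ through $X = c\,w + h(c,\phi)$ with $h(0,0)=0$ and $Dh(0,0)=0$, and projecting $\dot X = f(X,\phi)$ onto $v$, one gets the scalar reduced equation $\dot c = v\cdot f\!\left(c\,w + h(c,\phi),\phi\right)$. Taylor–expanding about $(\bar 0,0)$ to second order: the $c$–linear term vanishes because $Aw=0$ and $v^{T}A=0$; the $\phi$–linear and $\phi^{2}$ terms vanish because $f(\bar 0,\phi)\equiv\bar 0$ kills all pure $\phi$–derivatives; and the contributions involving $h$ are of third order by tangency. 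What survives is
\begin{equation*}
\dot c = \tfrac{1}{2}\,a\,c^{2} + b\,c\,\phi + \text{higher-order terms},
\end{equation*}
where $a = \sum v_{k}w_{i}w_{j}\,\partial^{2}f_{k}/\partial x_{i}\partial x_{j}(\bar 0,0)$ and $b = \sum v_{k}w_{i}\,\partial^{2}f_{k}/\partial x_{i}\partial\phi(\bar 0,0)$ are exactly the quantities in the statement.

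The last step is to analyse this scalar normal form, which is the canonical transcritical picture. Its equilibria are $c=0$ and $c^{*} = -2b\phi/a$; linearising gives eigenvalue $b\phi$ at $c=0$ and $-b\phi$ at $c^{*}$, so $c=0$ exchanges stability with $c^{*}$ as $\phi$ passes through $0$, while $c^{*}$ is positive precisely when $b\phi/a<0$. Running through the four sign patterns of $(a,b)$, and translating $c>0$ (resp. $c<0$) into a positive (resp. negative) equilibrium of the original system via $X\approx c\,w$ with $w\ge 0$, and carrying stability back to $\mathbb{R}^{n}$ by the reduction principle, produces exactly conclusions 1–4. I expect the main obstacle to be the bookkeeping in deriving the reduced equation: one must verify carefully that the centre manifold correction $h$ and the $\phi$–dependence enter only at orders that do not perturb the quadratic coefficients, so that $a$ and $b$ really are given by the stated Jacobian/Hessian formulas; once that is settled, the remaining analysis of $\dot c = \tfrac{1}{2}ac^{2}+bc\phi$ is routine.
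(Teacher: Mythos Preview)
Your sketch is the standard centre-manifold argument of Castillo-Chavez and Song and is correct in outline. However, the paper does not prove this theorem at all: it is quoted verbatim from the reference \cite{BIF} and then \emph{applied} to Model~1 (computing the right and left null vectors $u,v$ of $D_{x}f(x^*,\beta^*)$, evaluating the second partials, and checking that $a<0$, $b>0$ so that conclusion~(4) holds). So there is no proof in the paper to compare your proposal against; what you have written is essentially a compressed version of the proof in the original source, whereas the paper treats the statement as a black-box tool.
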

\underline{\textbf{Applying the Theorem 3.2 to our Model 1 }}\textbf{: }
\noindent

In our case, we have $x = (S, I, B) \in \mathbb{R}^3$ where $x_{1} = S$, $x_{2} = I$ and $x_{3} = V$. Let us consider $\beta$ (transmission rate of the infection) to be the bifurcation parameter.\\

We know that $R_{0}=\frac{\beta \alpha \omega}{\mu(p+\mu)(q+\mu_{1})}$ where $p=d_{1}+d_{2}+d_{3}+d_{4}+d_{5}+d_{6}$ and $q=b_{1}+b_{2}+b_{3}+b_{4}+b_{5}+b_{6}.$\\

Therefore we have, \\

\vspace{2mm}
\begin{equation*}
\beta = \frac{ R_{0}\mu(p+\mu)(q+\mu_{1})}{\alpha \omega}
\end{equation*}
\noindent
\\
Let $\beta = \beta^{*}$ at $R_{0}=1$. So, we have,
\vspace{2mm}
\begin{equation}
\beta^* = \frac{\mu(p+\mu)(q+\mu_{1})}{\alpha \omega} \label{sec3equ7}
\end{equation}

\noindent
With $x=(x_{1},x_{2},x_3)=(S,I,V)$
system (\ref{sec2equ1} -\ref{sec2equ3}) can be written as follows :

\begin{align*}
\frac{dx_{1}}{dt} &= \omega-\beta x_{1}x_{3}-\mu x_{1} = f_{1}\\[6pt] 
\frac{dx_{2}}{dt} &=  \beta x_{1}x_{3}-(P+\mu)x_{2} = f_{2}\\[6pt]
\frac{dx_{3}}{dt} &= \alpha x_{2}-(q+\mu_{1}) x_{3} \quad= f_{3}
\end{align*}

\noindent
The disease free equilibrium point $E_{0}$ is given by,

$$x^* = \bigg(\frac{\omega}{\mu}, 0, 0\bigg) = (x_{1}^*, x_{2}^*, x_{3}^*)$$
\vspace{2mm}
\noindent
\\ 
Clearly, $f(x^*,\beta) = 0, \; \forall \; \beta \in \mathbb{R}$, where $ f = (f_{1},f_{2}, f_{3})$. Let $D_{x}f(x^*,\beta^*)$ denote the Jacobian matrix of the above system at the equilibrium point $x^*$ and $R_{0} = 1$. Now we see that,
\vspace{2mm}

\[
D_{x}f(x^*,\beta^*) =
\begin{bmatrix}
-\mu & 0 & \frac{-\beta^*\omega}{\mu} \\[6pt]
0 & -(p+\mu) & \frac{\beta^*\omega}{\mu} \\[6pt]
0 & \alpha & -(q+\mu_{1})
\end{bmatrix}
\]
\vspace{2mm}
\noindent
\\
The characteristic polynomial of the above matrix is obtained as 

\begin{equation}
    (-\mu-\lambda)\bigg[(-(p+\mu)-\lambda)(-(q+\mu_{1})-\lambda) - \bigg(\frac{\alpha \beta^*\omega}{\mu}\bigg)\bigg] = 0  \label{sec3equ8}
 \end{equation}

\vspace{2mm}
\noindent
\\
Hence, we obtain the first eigenvalue  of (\ref{sec3equ8})
as $$\boldsymbol{\lambda_{1}= -\mu < 0} $$

\noindent
\\
The other eigenvalues $\lambda_{2,3}$ of (\ref{sec3equ8}) are the solution the following equation,\\ 
\begin{equation}
    \lambda^2 +\bigg(p+q+\mu+\mu_{1}\bigg)\lambda+(p+\mu)(\mu_{1}+q)-\frac{\beta^* \alpha\omega}{\mu}=0  \label{sec3equ9}
\end{equation}

substituting the expression for  $\beta^*$ from (\ref{sec3equ7}) in (\ref{sec3equ9}) we get, \\

\begin{equation}
    \lambda^2 +\bigg(p+\mu +q+\mu_{1}\bigg)\lambda = 0 \label{sec3equ10}
\end{equation}

The  eigen values of (\ref{sec3equ10}) are $\lambda_{2}=0$ and $\lambda_{3} = -(p+q+\mu+\mu_{1})$

\noindent
Hence, the matrix $D_{x}f(x^*,\beta^*)$ has zero as its simple eigenvalue and all other eigenvalues with negative real parts. Thus, the condition 1 of the Theorem 3.2 is satisfied.
\vspace{1cm}\\
Next, for proving condition 2, we need to find the right and left eigenvectors of the zero eigenvalue ($\lambda_{2}$). Let us denote the right and left eigenvectors by $\boldsymbol{u}$ and $\boldsymbol{v}$ respectively. To find $u$, we use $(D_{x}f(x^*,\beta^*) - \lambda_{2} I_{d})u = 0 $, which implies that 
\\

\[\
\begin{bmatrix}
-\mu & 0 & \frac{-\beta^*\omega}{\mu} \\[6pt]
0 & -(p+\mu) & \frac{\beta^*\omega}{\mu} \\[6pt]
0 & \alpha & -(q+\mu_{1})

\end{bmatrix}
\begin{bmatrix}
u_{1} \\[6pt]
u_{2} \\[6pt]
u_{3} \\[6pt]
\end{bmatrix}
\;\; = \;\;
\begin{bmatrix}
0 \\[6pt]
0 \\[6pt]
0 
\end{bmatrix}
\]

\vspace{2mm}
\noindent
\\ 
where $ u = (u_{1}, u_{2}, u_{3})^T$. As a result, we obtain the system of simultaneous equations as follows :

\begin{equation}
    -\mu u_{1} - \frac{\beta^*\omega}{\mu}u_{3} = 0 \label{sec3equ11}
\end{equation}
 
 \begin{equation}
 -(p+\mu)u_{2} + \frac{\beta^*\omega}{\mu}u_{3} = 0 \label{sec3equ12}
\end{equation}

 \begin{equation}
 \alpha u_{2} - (q+\mu_{1}) u_{3} = 0 \label{sec3equ13}
\end{equation}
\noindent
\\

By choosing $u_{3} = \mu$ in the above simultaneous equation  (\ref{sec3equ11}-\ref{sec3equ13}) we obtain 
$$ u_{2} = \frac{\beta^*\omega}{(p+\mu)} \;\;\; \text{and} \;\;\;{u_{1} = -\frac{\beta^*\omega}{\mu}} $$

\noindent
Therefore, the right eigen vector of zero eigenvalue is given by

\begin{equation*}
\boldsymbol{u = \bigg(-\frac{\beta^*\omega}{\mu}, \; \frac{\beta^*\omega}{(p+\mu)}, \; \mu \bigg)} 
\end{equation*}

\noindent

Similarly, to find the left eigenvector $v$, we use $v(D_{x}f(x^*,\beta^*) - \lambda_{2} I_{d}) = 0 $, which implies that

\[
\begin{bmatrix}
v_{1} & v_{2} & v_{3}
\end{bmatrix}
\begin{bmatrix}
-\mu & 0 & \frac{-\beta^*\omega}{\mu} \\[6pt]
0 & -(p+\mu) & \frac{\beta^*\omega}{\mu} \\[6pt]
0 & \alpha & -(q+\mu_{1})\\[6pt]

\end{bmatrix}
\;\; = \;\;
\begin{bmatrix}
0 & 0 & 0
\end{bmatrix}
\]

\noindent

where $v = (v_{1}, v_{2}, v_{3})$. The simultaneous equations obtained thereby are as follows :

\begin{equation}
     -\mu_{1}v_{1} = 0  \label{sec3equ14}
\end{equation}

\begin{equation}
 -\mu_{1}v_{2} + \alpha v_{3} = 0    \label{sec3equ15}
\end{equation}

\begin{equation}
 \frac{-\beta^*\omega}{\mu_{1}}v_{1}+ \frac{\beta^*\omega}{\mu_{1}}v_{2} -qv_{3} = 0   \label{sec3equ16} 
\end{equation}
\vspace{2mm}
\noindent
\\
Therefore solving the above simultaneous equation (\ref{sec3equ14} - \ref{sec3equ16}) we obtain $v_{1} = 0$.\\

By choosing $v_{2}=1$ we get $$ v_{3}= \frac{\beta^*\omega}{(p+\mu_{1})} $$

\noindent
\\
Hence, the left eigen vector is given by 

\begin{equation*}
\boldsymbol{v = \bigg(0, \; 1, \; \frac{\beta^*\omega}{(q+\mu_{1})}\;\bigg)} 
\end{equation*}

\noindent
Now, we need to find $a$ and $b$. As per the Theorem 3.2, $a$ and $b$ are given by
\\ \vspace{2mm}
$$ \hspace*{-10mm} \boldsymbol{a = \sum_{k,i,j=1}^{3} \Bigg[v_{k}u_{i}u_{j} \bigg(\frac{\partial^2 f_{k}}{\partial x_{i} \partial x_{j}}(x^*, \beta^*)\bigg) \Bigg]}$$
\vspace{2mm}
$$ \hspace*{-15mm} \boldsymbol{b =\sum_{k,i =1}^{3} \Bigg[v_{k}u_{i}\bigg(\frac{\partial^2 f_{k}}{\partial x_{i} \partial \beta}(x^*, \beta^*)\bigg)\Bigg]}$$

\noindent
\\
Expanding the summation in the expression for $a$, it reduces to

$$ a = u_{1}u_{3} \frac{\partial^2 f_{2}}{\partial x_{1} \partial x_{3}} \thinspace + \thinspace \thinspace  u_{3}\bigg(u_{1}\frac{\partial^2 f_{2}}{\partial x_{3} \partial x_{1}} + \thinspace  u_{3}\frac{\partial^2 f_{2}}{\partial x_{3} \partial x_{3}}\bigg) \thinspace \thinspace  + \thinspace \thinspace  v_{3}u_{3}u_{3}\frac{\partial^2 f_{3}}{\partial x_{3} \partial x_{3}} $$

\noindent
\\ where partial derivatives are found at $(x^*, \beta^*)$. Since we know $u$ and $v$ we only need to find the partial derivatives in above expression. They are found to be
\\
$$ \frac{\partial^2 f_{2}}{\partial x_{1} \partial x_{3}}(x^*, \beta^*) = \beta^* \qquad \frac{\partial^2 f_{2}}{\partial x_{3} \partial x_{1}}(x^*, \beta^*) = \beta^*  $$

$$ \frac{\partial^2 f_{2}}{\partial x_{3} \partial x_{3}}(x^*, \beta^*) = 0 \qquad \frac{\partial^2 f_{3}}{\partial x_{3} \partial x_{3}}(x^*, \beta^*) = 0 $$

\noindent
Substituting these partial derivatives along with $u$ and $v$ in the expression of $a$, we get,

$$ \boldsymbol{a = -2\beta^{*2} \omega < 0}$$

\noindent
\\
Next, expanding the summation in the expression for $b$, we get,

$$b = v_{2}u_{3}\bigg(\frac{\partial^2 f_{2}}{\partial x_{3} \partial \beta}(x^*, \beta^*)\bigg) $$

\noindent
\\
Now
$$ \frac{\partial^2 f_{2}}{\partial x_{3} \partial \beta}(x^*, \beta^*) = \frac{\omega}{\mu} $$

\noindent
which implies,

$$ \boldsymbol{b = \omega > 0}$$

\noindent
We notice that condition (iv) of the theorem is satisfied. Hence, we conclude that the system undergoes bifurcation at $\beta = \beta^*$ implying $R_{0} = 1$.

\noindent
\\ 
Thus, we conclude that when $R_{0} < 1$, there exists a unique disease free equilibrium  which is globally asymptotically stable  and negative infected equilibrium which is unstable . Since negative values of population is not practical, therefore we ignore it in this case. Further, as $R_{0}$ crosses unity from below, the disease free equilibrium point loses its stable nature and become unstable, the bifurcation point being at $\beta = \beta^*$ implying $R_{0}=1$ and there appears a positive locally asymptotically stable infected equilibrium point. There is an exchange of stability between disease free equilibrium and infected equilibrium at $R_{0} = 1$. Hence, a \textbf{trans-critical bifurcation} takes place at the break point $\beta = \beta^*$. \\

{\flushleft{  \textbf{PARAMETER ESTIMATION} }}\vspace{.25cm}

We now numerically depict and verify the results obtained in Local and  Global dynamics and Bifurcation  Analysis sections.  The theoretical results obtained  are validated for a  set of model parameters using MATLAB software. The values of $\omega,$ $\mu$ and $\mu_{1},$ $\alpha$ are approximated and chosen to be  from \cite{2} and \cite{ea2020host} respectively.  The rest of the  parameter values of the model are estimated minimizing the root mean square difference between the model predictive output and the experimental data from \cite{ehmannvirological, qin2020dysregulation}. All the parameter values chosen for the model 1  are summarized in the following table. \vspace{1cm}

\begin{table}[ht!]
	\caption{Values of the Model 1 parameters} 
	
	\begin{center}
		\begin{tabular}{|l|l|l|}
			\hline
			\textbf{S.No.} & \textbf{Parameters} & \textbf{Value}\\
			\hline 
			1 & $\omega$ & 10 \\
			\hline
			2 & $\beta$ & 0.005 \\
			\hline
			3 & $\mu$ & .05 \\
			\hline
			4 & $\mu_{1}$ & 1.1 \\
			\hline
			5 & $\alpha$ & 8.2 \\
			\hline
			6 & $d_{1}$ & 0.027 \\
			\hline
			7 & $d_{2}$ & 0.22 \\
			\hline
			8 & $d_{3}$ & 0.1 \\
			\hline
			9 & $d_{4}$ & 0.428 \\
			\hline
			
			10 & $d_{5}$ & 0.01 \\
			\hline
			11 & $d_{6}$ & .01 \\
			\hline
			12 & $b_{1}$ & 0.1\\
			\hline
			13 & $b_{2}$ & 0.1\\
			\hline
			14 & $b_{3}$ & 0.08\\
			\hline
			15 & $b_{4}$ & 0.11\\
			\hline
			16 & $b_{5}$ & 0.1\\
			\hline
			17 & $b_{6}$ & 0.07\\
			
			\hline
		\end{tabular}
	\end{center}
\end{table}

\newpage

{\flushleft{  \textbf{NUMERICAL SIMULATIONS} }}\vspace{.25cm}

{\flushleft{  \textbf{DISEASE FREE EQUILIBRIUM $E_0$} }}\vspace{.25cm}

The following table shows different values of parameters chosen so as to obtain the phase portraits of the system for the case $R_{0} < 1$.

\begin{table}[ht!]
	\caption{Values of the Model parameters : $R_{0} < 1$}
	\begin{center}
		\begin{tabular}{|l|l|l|}
			\hline
			\textbf{S.No.} & \textbf{Parameters} & \textbf{Value}\\
			\hline 
			1 & $\omega$ & 2 \\
			\hline
			2 & $\beta$ & 0.05 \\
			\hline
			3 & $\mu$ & .1 \\
			\hline
			4 & $\mu_{1}$ & .1 \\
			\hline
			5 & $\alpha$ & 0.5 \\
			\hline
			6 & $d_{1}$ & 0.027 \\
			\hline
			7 & $d_{2}$ & 0.22 \\
			\hline
			8 & $d_{3}$ & 0.1 \\
			\hline
			9 & $d_{4}$ & 0.428 \\
			\hline
			10 & $d_{5}$ & 0.01 \\
			\hline
			11 & $d_{6}$ & 0.01 \\
			\hline
			12 & $b_{1}$ & 0.1\\
			\hline
			13 & $b_{2}$ & 0.1\\
			\hline
			14 & $b_{3}$ & 0.08\\
			\hline
			15 & $b_{4}$ & 0.11\\
			\hline
			16 & $b_{5}$ & 0.1\\
			\hline
			17 & $b_{6}$ & 0.07\\
			
			\hline
		\end{tabular}
	\end{center}
\end{table}

For the above set of parameter values, $\boldsymbol{R_{0} = 0.77 < 1}.$  Since $R_{0}<1$, we can say that the disease free equilibrium point, $\boldsymbol{E_{0} = (20,0,0)}$  is globally asymptotically stable. The following figures(1-2) shows the global stability of the infection free equilibrium $E_{0}$ .\\

Figure 2 shows the global asymptotic stability of $\boldsymbol{E_{0} = (20, 0, 0)}$. It is obtained using different initial conditions for the same set of the model parameters given in Table 2
\newpage
\begin{center}
	\begin{figure}[hbt!]
		\includegraphics[height = 6cm, width = 15.5cm]{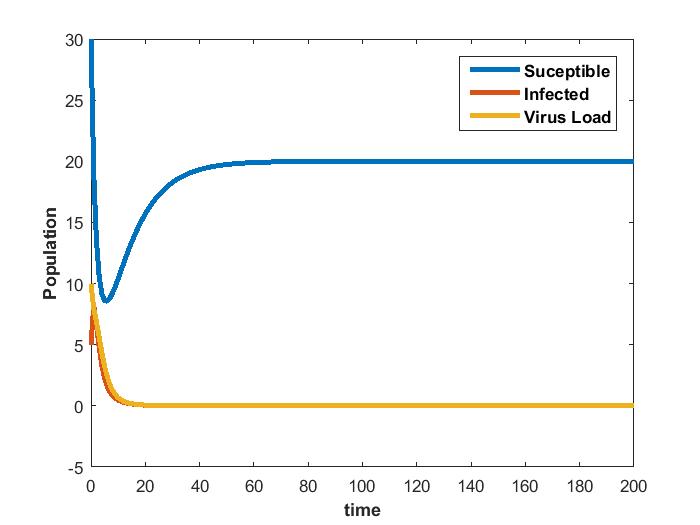} 
		\caption{\label{first}Initial condition:  $(S_{0}, E_{0}, I_{0}) = (30, 5, 10)$ }
	\end{figure} 
\end{center}

\begin{figure}[hbt!]
	\includegraphics[height = 8cm, width = 15.5cm]{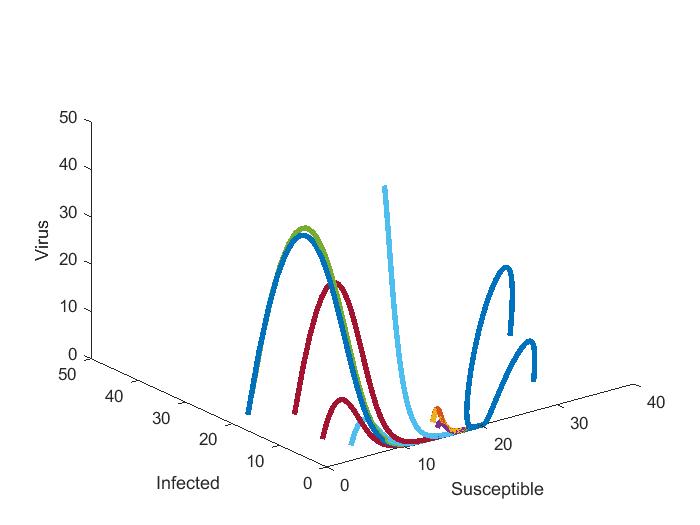} 
	\vspace{3mm}
	\caption{\label{second}3-d plot of the system for $R_{0} < 1$.}
	
\end{figure}   

\newpage
{\flushleft{  \textbf{INFECTED/ENDEMIC  EQUILIBRIUM $E_1$} }}\vspace{.25cm}

We know that the infected equilibrium $E_{1}$ exists only if $R_{0}>1$   and it  is also locally asymptotically stable when $R_{0}>1$. For the parameter values in the following table 3 the value of $R_{0}$ was calculated to be 1.929 and  $E_{1}=(25.9,2.44,1.85)$. Figure 3 demonstrates that $E_{1}$ is locally asymptotically stable whenever $R_{0} > 1$.

\vspace{1cm}
\begin{table}[ht!]
	\caption{Values of the Model parameters : $R_{0} > 1$}
	\begin{center}
		\begin{tabular}{|l|l|l|}
			\hline
			\textbf{S.No.} & \textbf{Parameters} & \textbf{Value}\\
			\hline 
			1 & $\omega$ & 5 \\
			\hline
			2 & $\beta$ & 0.05 \\
			\hline
			3 & $\mu$ & .1 \\
			\hline
			4 & $\mu_{1}$ & .1 \\
			\hline
			5 & $\alpha$ & 0.5 \\
			\hline
			6 & $d_{1}$ & 0.027 \\
			\hline
			7 & $d_{2}$ & 0.22 \\
			\hline
			8 & $d_{3}$ & 0.1 \\
			\hline
			9 & $d_{4}$ & 0.428 \\
			\hline
			10 & $d_{5}$ & 0.01 \\
			
			\hline
			11 & $d_{6}$ & 0.01 \\
			\hline
			12 & $b_{1}$ & 0.1\\
			\hline
			13 & $b_{2}$ & 0.1\\
			\hline
			14 & $b_{3}$ & 0.08\\
			\hline
			15 & $b_{4}$ & 0.11\\
			\hline
			16 & $b_{5}$ & 0.1\\
			\hline
			17 & $b_{6}$ & 0.07\\
			
			\hline
		\end{tabular}
	\end{center}
\end{table}

\begin{figure}[hbt!]
	\includegraphics[height = 6cm, width = 15.5cm]{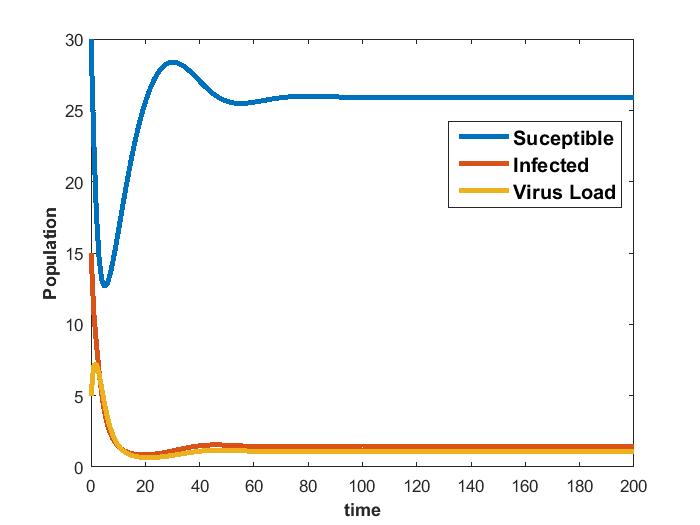}
	\caption{\label{first}Initial condition:  $(S_{0}, E_{0}, I_{0}) = (30, 15, 5)$ }
	\vspace{6mm}
\end{figure}

{\flushleft{  \textbf{TRANSCRITICAL BIFURCATION} }}\vspace{.25cm}

In this section, we numerically show that as $R_{0}$ crosses unity from below, there is an exchange of stability between the disease free equilibrium and infected equilibrium. Here the parameter $\omega$ is varied in the interval $(0, 5)$ in steps of 0.1. The other parameters chosen are given in the following table 5. As a result $R_{0}$ varies in the interval $(0,3)$. Figure 4 depicts the trans-critical bifurcation at $R_{0}=1$. 
\vspace{1cm}
\begin{table}[ht!]
	\caption{Values of the parameters}
	\begin{center}
		\begin{tabular}{|l|l|l|}
			\hline
			\textbf{S.No.} & \textbf{Parameters} & \textbf{Value}\\
			\hline 
			
			1 & $\beta$ & 0.05 \\
			\hline
			2 & $\mu$ & .1 \\
			\hline
			3 & $\mu_{1}$ & .1 \\
			\hline
			4 & $\alpha$ & 0.5 \\
			\hline
			5 & $d_{1}$ & 0.027 \\
			\hline
			6 & $d_{2}$ & 0.22 \\
			\hline
			7 & $d_{3}$ & 0.1 \\
			\hline
			8 & $d_{4}$ & 0.428 \\
			\hline
			9 & $d_{5}$ & 0.01 \\
			\hline
			10 & $d_{6}$ & 0.01 \\
			\hline
			11 & $b_{1}$ & 0.1\\
			\hline
			12 & $b_{2}$ & 0.1\\
			\hline
			13 & $b_{3}$ & 0.08\\
			\hline
			14 & $b_{4}$ & 0.11\\
			\hline
			15 & $b_{5}$ & 0.1\\
			\hline
			16 & $b_{6}$ & 0.07\\
			\hline
		\end{tabular}
	\end{center}
\end{table}

\begin{figure}[hbt!]
	\includegraphics[height = 6cm, width = 15.5cm]{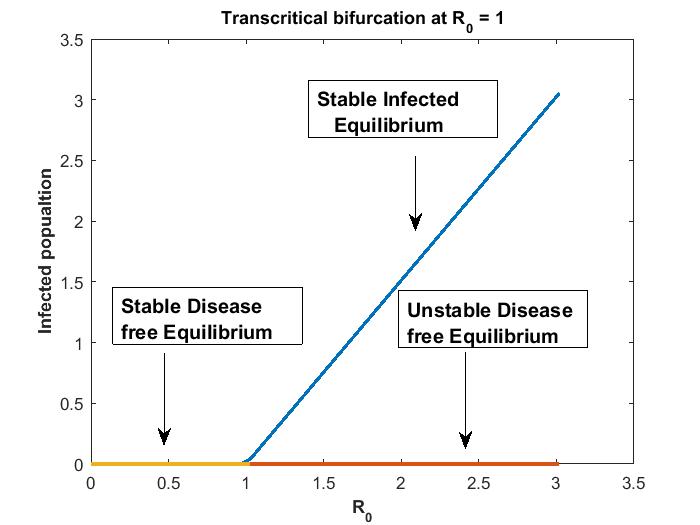}
	\caption{\label{first}Trans-critical Bifurcation }
	
\end{figure}

\newpage

\vspace{4cm}
{\flushleft{  \textbf{CHARACTERISTICS OF COVID-19 AND VALIDATION OF THE PROPOSED MODEL} }}\vspace{.25cm}

The characteristic features of the disease Covid-19 are as follows:\vspace{.25cm}

(a) Target cells(monocytes) deplete to approximately 66 \% to the original level (from $6$x$10^8$ to $4$x$10^8$ in severe cases) \cite{qin2020dysregulation}.\\

(b) The median incubation period of SARS-Cov-2 is approximately 4-5 days and peak viremia  occurs within 5-6 days of  disease onset \cite{tay2020trinity}.

\vspace{.5cm}
In this section we vary the parameters $x=(d_{1}+d_{2}+d_{3}+d_{4}+d_{5}+d_{6})$ and $y=(b_{1}+b_{2}+b_{3}+b_{4}+b_{5}+b_{6})$ and do a two parameter heat plot to validate our model 1. The parameters $x$ and  $y$ were varied in the interval $(0,5)$ and  model 1 was able to reproduce both the above characteristic  of the Covid-19. Other fixed parameter values  were taken from the following table 5.

\vspace{1cm}

\begin{table}[ht!]
	\caption{Parameter values } 
	
	  \begin{center}
		\begin{tabular}{|l|l|l|}
			\hline
			\textbf{S.No.} & \textbf{Parameters} & \textbf{Value}\\
			\hline 
			1 & $\omega$ & 10 \\
			\hline
			2 & $\beta$ & 0.005 \\
			\hline
			3 & $\mu$ & .05 \\
			\hline
			4 & $\mu_{1}$ & 1.1 \\
			\hline
			5 & $\alpha$ & 8.2 \\
			
			\hline
		\end{tabular}
	\end{center}
\end{table}

We choose the parameters $x$  and $y$ in the $x-$ axis and $y-$ axis respectively and vary them to check for reproduction  of  the characteristics (a) and (b).

\vspace{.25cm}

In figure 5 model 1 is able to recover the first characteristic (a) exactly for the parameter values in the region pointed by the arrow. In this  region with yellow colour the final fraction of uninfected cells after infection lies between $(60-70)\%$ and for the parameter values in this region the model 1 is able to reproduce the first characteristic (a). \vspace{.25cm}

The region with yellow colour in figure 6 pointed by the arrow is the region where Model 1 is able to recover the second characteristics (b). From \cite {tay2020trinity} the peak viremia occurs approximately during the second week of disease onset. In figure 6 time to peak viremia lies between 8-16 days and hence for the parameter values in this region Model 1 is able to reproduce the second characteristics (b).
\vspace{.25cm}

\begin{figure}[hbt!]
	\includegraphics[height = 8cm, width = 15.5cm]{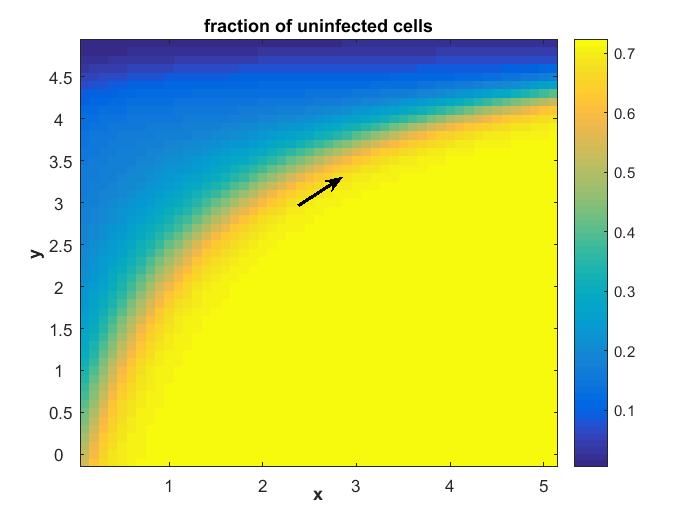} 
	\caption{\label{first}Initial condition: $(S_{0}, E_{0}, I_{0}) $= ($6$x$10^8$, 0, 1) }
\end{figure} 

\vspace{4cm}

\begin{figure}[hbt!]
	\includegraphics[height = 8cm, width = 15.5cm]{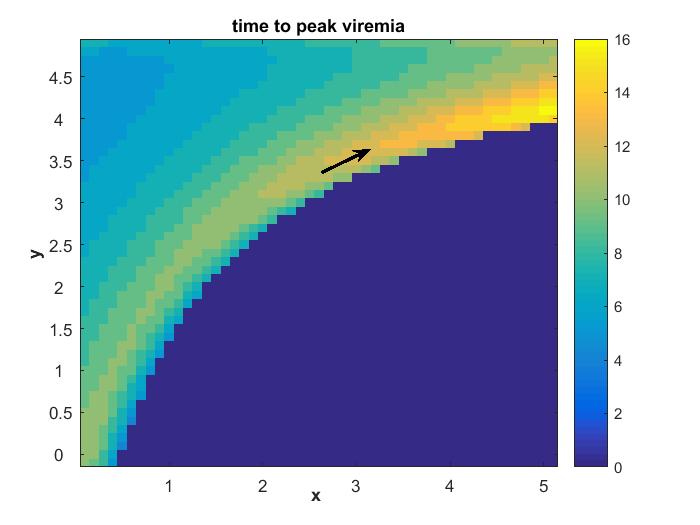} 
	\caption{\label{first}Initial condition:  $(S_{0}, E_{0}, I_{0}) =$ ($6$x$10^8$, $0, 1)$ }
\end{figure}

\newpage

{\flushleft{  \textbf{SENSITIVITY ANALYSIS} }}\vspace{.25cm}

	From the previous sections, it is clear that the infected cell population and the virus load die out  when $R_{0} < 1$. Therefore it is important to control the model parameters in a manner which will make $R_{0}$ less than one. Thus, determining the intervals in which the model parameters are sensitive becomes vital. As each parameter is varied in different intervals, the infected cell population, mean infected cell population and the mean square error are plotted with respect to time. These plots are used to determine the sensitivity of the parameter. The different intervals chosen are given in the following Table \ref{sen_anl}. For all the plots in this section, the time scale is the following: $x-$axis: 10 units = 1 day, $y-$ axis: 1 unit = 1 cell.  \vspace{.75cm}
	
	\begin{table}[hbt!]
		\caption{Sensitivity Analysis}
		\centering
		\label{sen_anl}
		{
			\begin{tabular}{|l|l|l|l|l|}
				\hline
				\textbf{Parameter} & \textbf{Interval} & \textbf{Step Size} & \textbf{Other Parameters} \\
				\hline
				
				$u_{11}$ & 0 to 0.4 & 0.01 & $\omega$ = 10, $\beta$ = 0.005, $\mu$ = 0.05
				\\ \cline{2-2}
				& 0.4 to 0.8 & &  $\mu_{1}$ = 1.1, $u_{12}$ = 0.6240, $\alpha$ = 0.7 
				\\ \cline{2-3}
				& 0.8 to 0.1 & 0.001 & \\ 
				\hline 
				
				$u_{12}$ & 0 to 1 & 0.05 & $\omega$ = 10, $\beta$ = 0.005, $\mu$ = 0.05
				\\ \cline{2-2} 
				& 1 to 2 & & $\mu_{1}$ = 1.1, $u_{11}$ = 1.0238, $\alpha$ = 0.7 
				\\ \hline
				
				$\beta$ & 0 to 0.01 & 0.001 & $\omega$ = 10, $u_{11}$ = 1.0238, $\mu$ = 0.05
				\\ \cline{2-2}
				& 0.01 to 0.02 & &  $\mu_{1}$ = 1.1, $u_{12}$ = 0.6240, $\alpha$ = 0.7 
				\\ \cline{2-2}
				& 0.02 to 0.03 & & \\ 
				\hline 
				
				$\alpha$ & 0.5 to 1.2 & 0.01 & $\omega$ = 10, $\beta$ = 0.005, $\mu$ = 0.05
				\\ \cline{2-2}
				& 1.2 to 3 & &  $\mu_{1}$ = 1.1, $u_{12}$ = 0.6240, $u_{11}$ = 1.0238 
				\\ \cline{2-3}
				& 3 to 4 & 0.05 & \\ 
				\hline 
				
				$\omega$ & 5 to 15 & 0.1 & $\beta$ = 0.005, $u_{11}$ = 1.0238, $\mu$ = 0.05
				\\ \cline{2-2}
				& 20 to 40 & &  $\mu_{1}$ = 1.1, $u_{12}$ = 0.6240, $\alpha$ = 0.7 
				\\ \cline{2-2}
				& 40 to 60 & & \\ 
				\hline 
				
				$\mu$ & 0.001 to 0.015 & 0.0001 & $\omega$ = 10, $\beta$ = 0.005, $u_{11}$ = 1.0238
				\\ \cline{2-2}
				& 0.01to 0.03 & &  $\mu_{1}$ = 1.1, $u_{12}$ = 0.6240, $\alpha$ = 0.7 
				\\ \cline{2-2}
				& 0.03 to 0.08 & & \\ 
				\hline 
				
				$\mu_{1}$ & 0.01 to 0.1 & 0.001 & $\omega$ = 10, $\beta$ = 0.005, $\mu$ = 0.05
				\\ \cline{2-3}
				& 0.1 to 0.4 & 0.01 &  $u_{11}$ = 1.0238, $u_{12}$ = 0.6240, $\alpha$ = 0.7 
				\\ \cline{2-2}
				& 0.4 to 1.8 & & \\ 
				\hline
				
			\end{tabular}
		}
	\end{table}
	
	\subsubsection{Parameter $\boldsymbol{\alpha}$}
	
	\begin{enumerate}
		\item \textbf{Interval I : 0.5 to 1.2 :}
		The results related to sensitivity of $\alpha$, varied from 0.5 to 1.2 in step size of 0.01  are shown in the Figure (\ref{sen_alpha_1}). The parameter $\alpha$ is not sensitive to the system in this interval.
		
		\begin{figure}[hbt!]
			\begin{center}
				\includegraphics[width=3in, height=1.8in, angle=0]{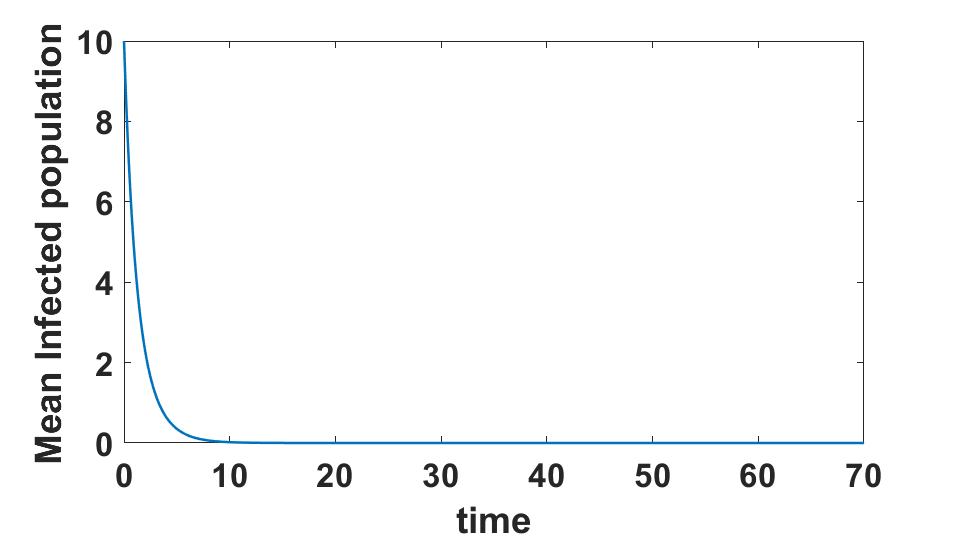}
				\includegraphics[width=3in, height=1.8in, angle=0]{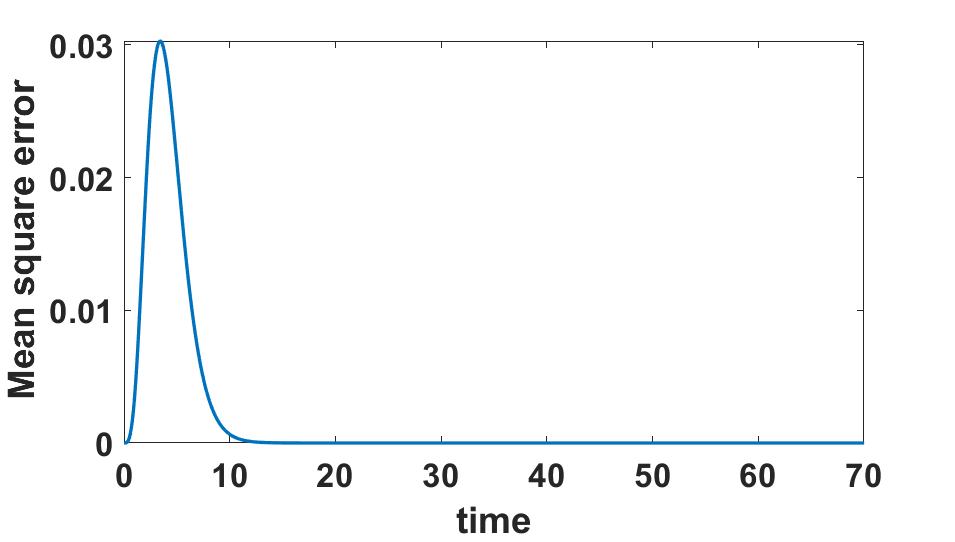}
				\caption{Sensitivity Analysis of $\alpha$ in Interval I.}
				\label{sen_alpha_1}
			\end{center}
		\end{figure}
	\end{enumerate}
	
	\begin{enumerate}
		\setcounter{enumi}{1}
		\item \textbf{Interval II : 1.2 to 3 :} The results related to sensitivity of $\alpha$, varied from 1.2 to 3 in step size of 0.01, are shown in the Figure (\ref{sen_alpha_2}). The parameter $\alpha$ is sensitive to the system in this interval.  
		
		\begin{figure}[hbt!]
			\begin{center}
				\includegraphics[width=3in, height=1.8in, angle=0]{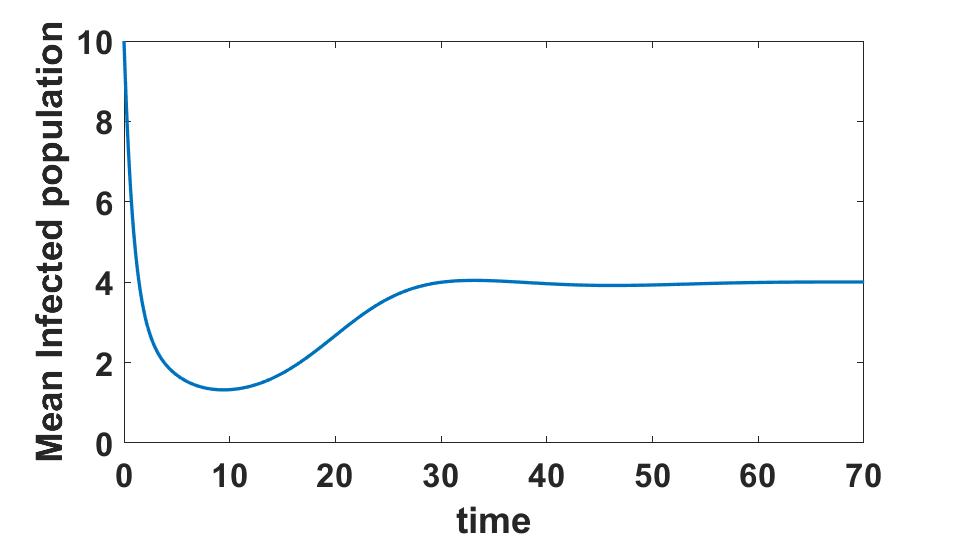}
				\includegraphics[width=3in, height=1.8in, angle=0]{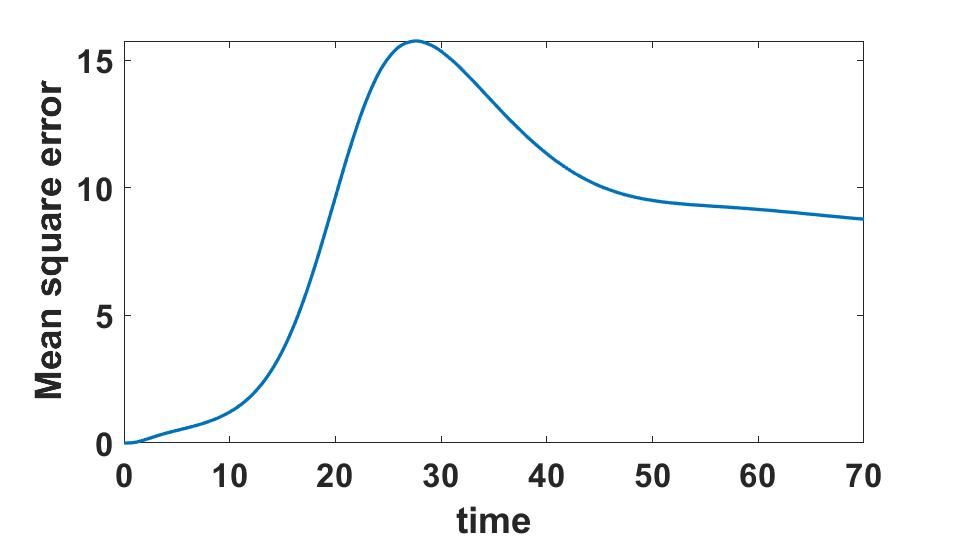}
				\caption{Sensitivity Analysis of $\alpha$ in Interval II.}
				\label{sen_alpha_2}
			\end{center}
		\end{figure}
	\end{enumerate}
	
	\begin{enumerate}
		\setcounter{enumi}{2}
		\item \textbf{Interval III : 3 to 4 :} The results related to sensitivity of $\alpha$, varied from 3 to 4 in step size of 0.05, are shown in the Figure (\ref{sen_alpha_3}). The parameter $\alpha$ is not sensitive to the system in this interval. 
		
		We conclude from these plots that the parameter $\alpha$ is sensitive in interval II and insensitive in I and III. To confirm and validate the same, we have plotted the infected population for each varied value of the parameter $\alpha$ per interval in Figure (\ref{sen_alpha}).
		
		\begin{figure}[hbt!]
			\begin{center}
				\includegraphics[width=3in, height=1.8in, angle=0]{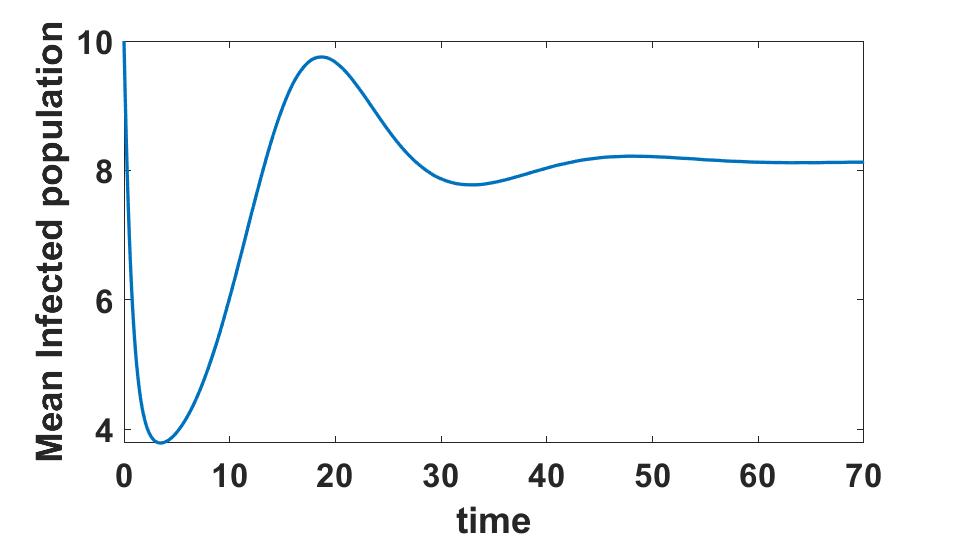}
				\includegraphics[width=3in, height=1.8in, angle=0]{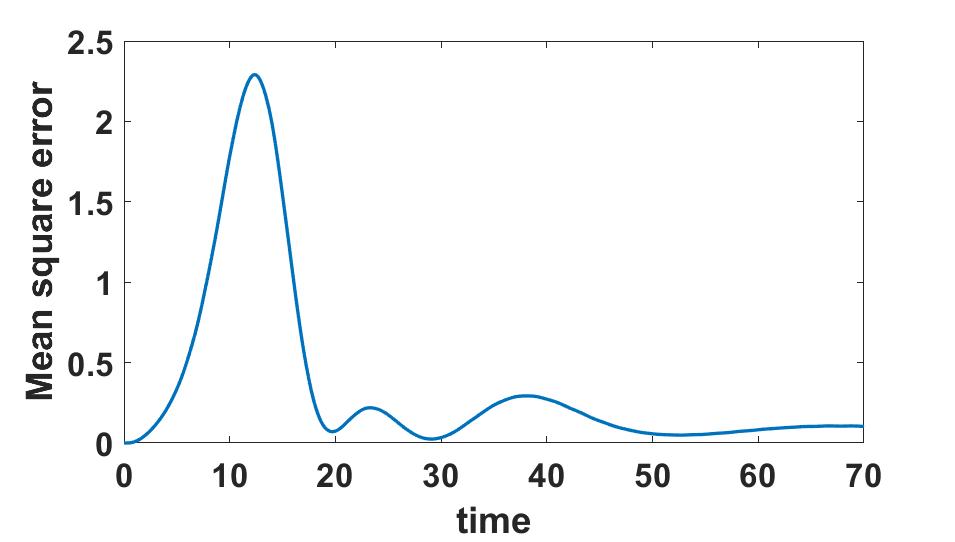}
				\caption{Sensitivity Analysis of $\alpha$ in Interval III.}
				\label{sen_alpha_3}
			\end{center}
		\end{figure}
	\end{enumerate}
	
	\begin{figure}[hbt!]
		\begin{center}
			\subcaptionbox*{(a) Interval I}
			{\includegraphics[width=3in, height=1.8in, angle=0]{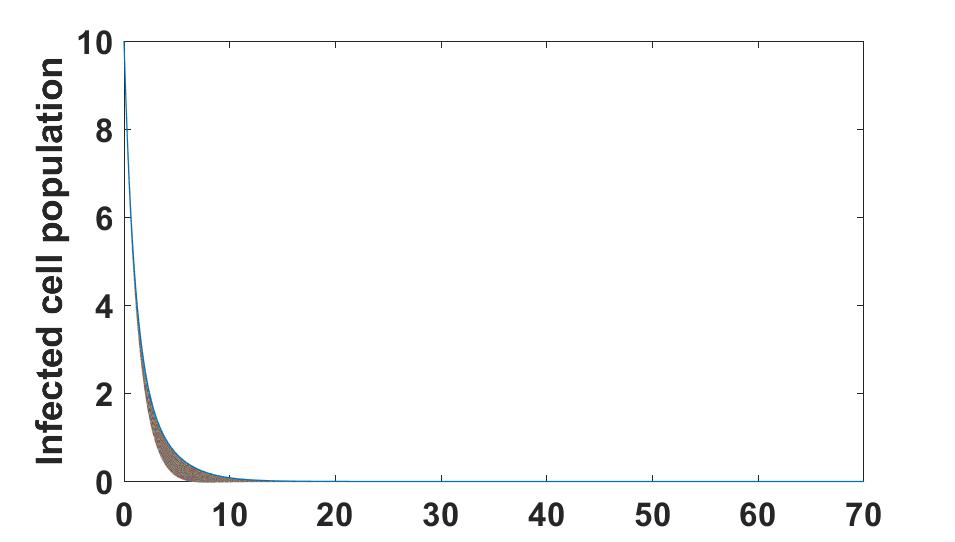}}
		\end{center}
	\end{figure}
	
	\addtocounter{figure}{-1}
	
	\begin{figure}[hbt!]
		\begin{center}
			\subcaptionbox*{(b) Interval II}
			{\includegraphics[width=3in, height=1.8in, angle=0]{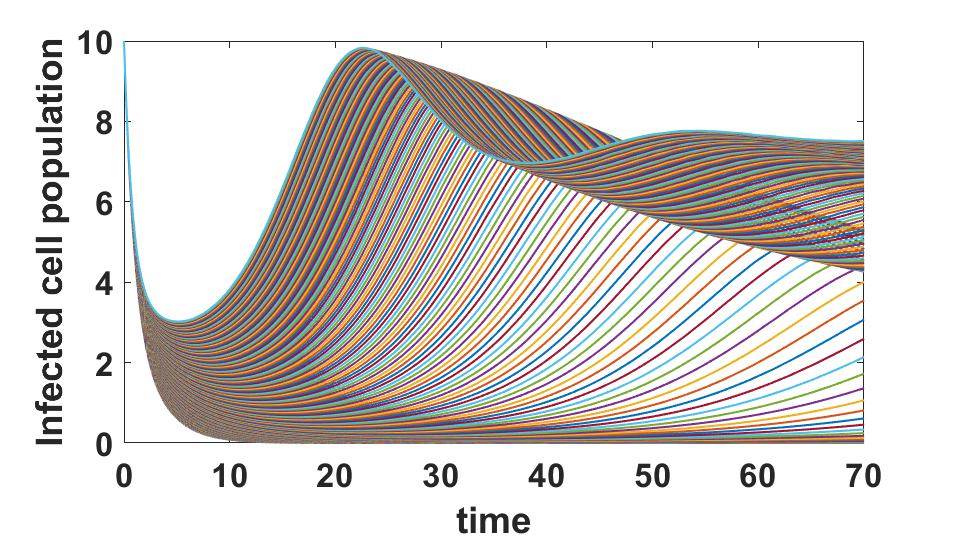}}
			\subcaptionbox*{(c) Interval III}
			{\includegraphics[width=3in, height=1.8in, angle=0]{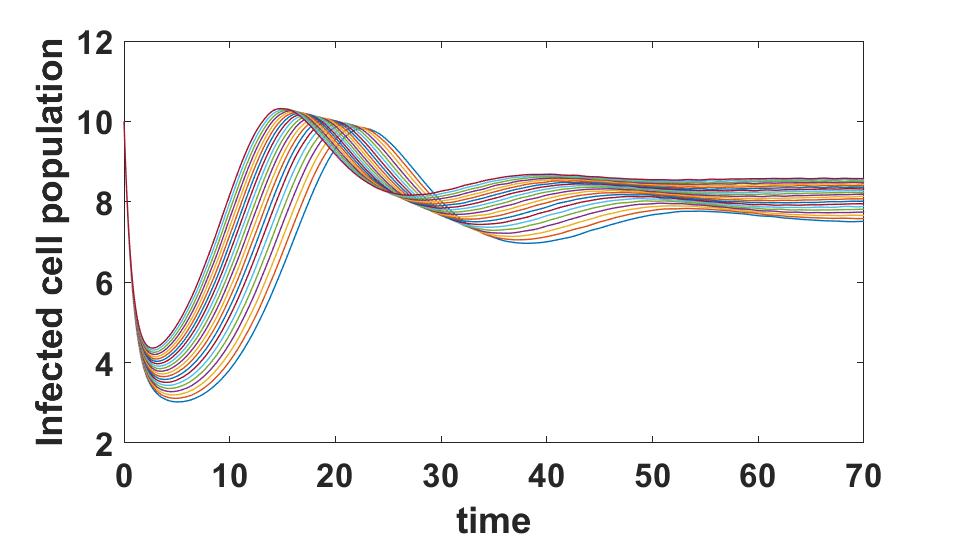}}
            \vspace{1\baselineskip}  
			\caption{Sensitivity Analysis of $\alpha$. Infected cell population in different intervals.}
			\label{sen_alpha}
		\end{center}
	\end{figure}

	\subsubsection{Parameter $\boldsymbol{\mu_{1}}$}
	\begin{enumerate}
		\item \textbf{Interval I : 0.01 to 0.1 :}
		The results related to sensitivity of $\mu_{1}$, varied from 0.01 to 0.1 in step size of 0.001  are shown in the Figure (\ref{sen_mu1_1}). The parameter $\mu_{1}$ is sensitive to the model system in this interval.
		
		\item \textbf{Interval II : 0.1 to 0.4 :}
		The results related to sensitivity of $\mu_{1}$, varied from 0.1 to 0.4 in step size of 0.01  are shown in the Figure (\ref{sen_mu1_2}). The parameter $\mu_{1}$ is sensitive to the model system in this interval.
		
		\item \textbf{Interval III : 0.4 to 1.8 :}
		The results related to sensitivity of $\mu_{1}$, varied from 0.4 to 1.8 in step size of 0.01  are shown in the Figure (\ref{sen_mu1_3}). The parameter $\mu_{1}$ is not sensitive to the model system in this interval.
		
	\end{enumerate}
	
	We conclude from these plots that the parameter $\mu_{1}$ is sensitive in interval I and II and insensitive in interval III. To confirm and validate the same, we have plotted the infected population for each varied value of the parameter $\mu_{1}$ per interval in Figure (\ref{sen_mu1}).
	In similar lines, sensitivity analysis is done for other parameters and the results are summarized in Table (\ref{sen_anl_summary}). The corresponding plots are given in Appendix - A owing to the brevity of the manuscript.

	\begin{figure}[hbt!]
		\begin{center}
			\includegraphics[width=3in, height=1.8in, angle=0]{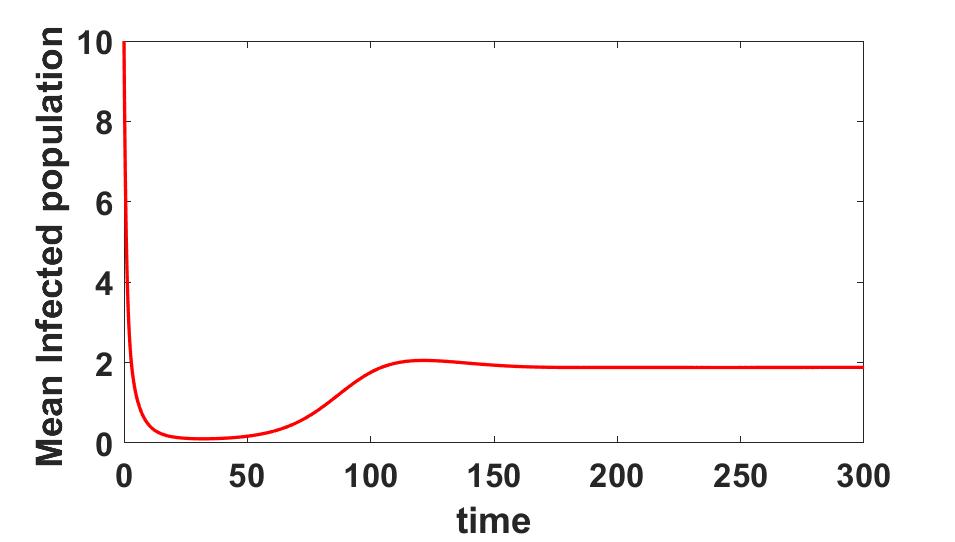}
			\includegraphics[width=3in, height=1.8in, angle=0]{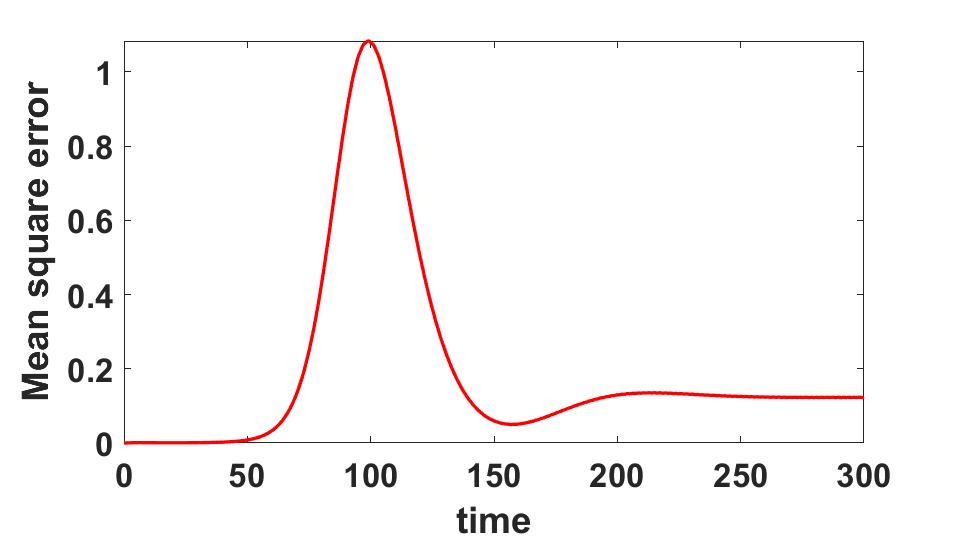}
			\caption{Sensitivity Analysis of $\mu_{1}$ in Interval I.}
			\label{sen_mu1_1}
		\end{center}
	\end{figure}
	
	\begin{figure}[hbt!]
		\begin{center}
			\includegraphics[width=3in, height=1.8in, angle=0]{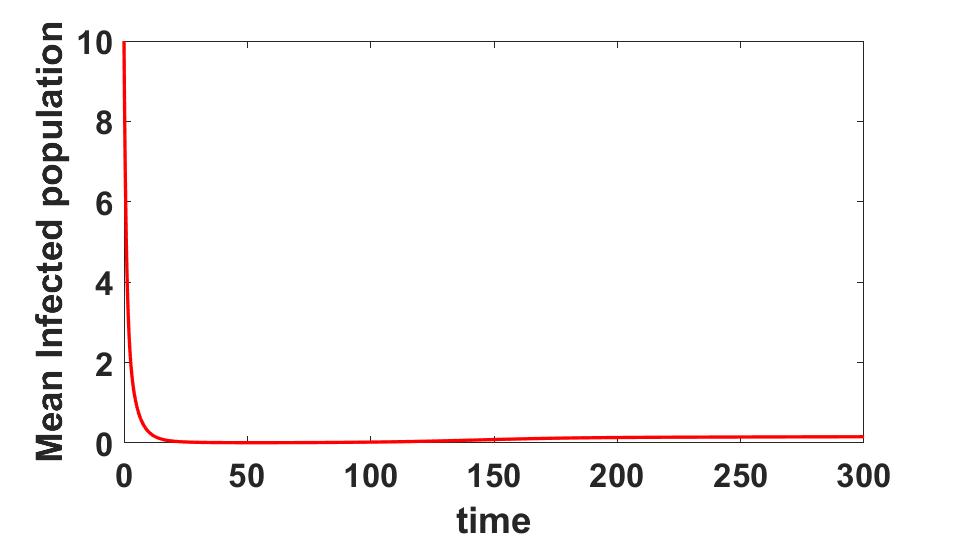}
			\includegraphics[width=3in, height=1.8in, angle=0]{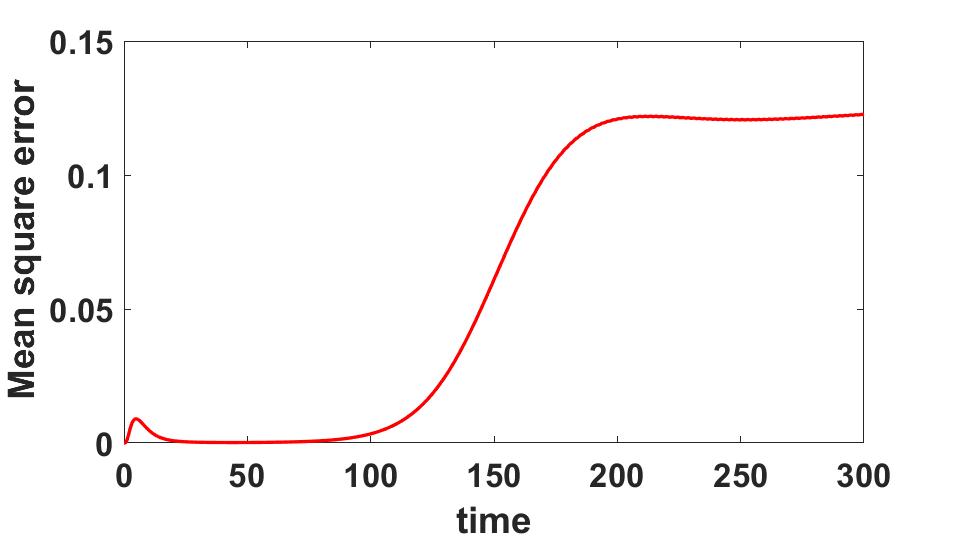}
			\caption{Sensitivity Analysis of $\mu_{1}$ in Interval II.}
			\label{sen_mu1_2}
		\end{center}
	\end{figure}
	
	\begin{figure}[hbt!]
		\begin{center}
			\includegraphics[width=3in, height=1.8in, angle=0]{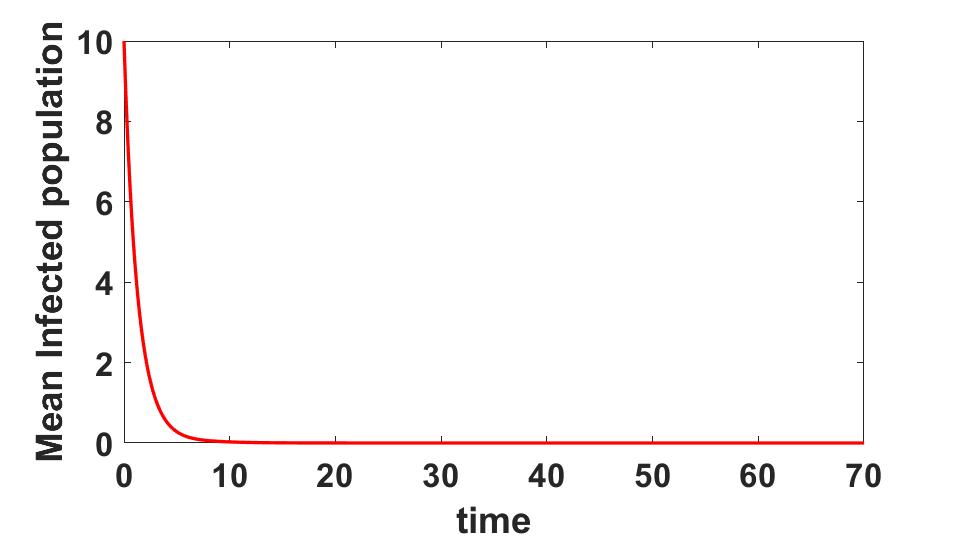}
			\includegraphics[width=3in, height=1.8in, angle=0]{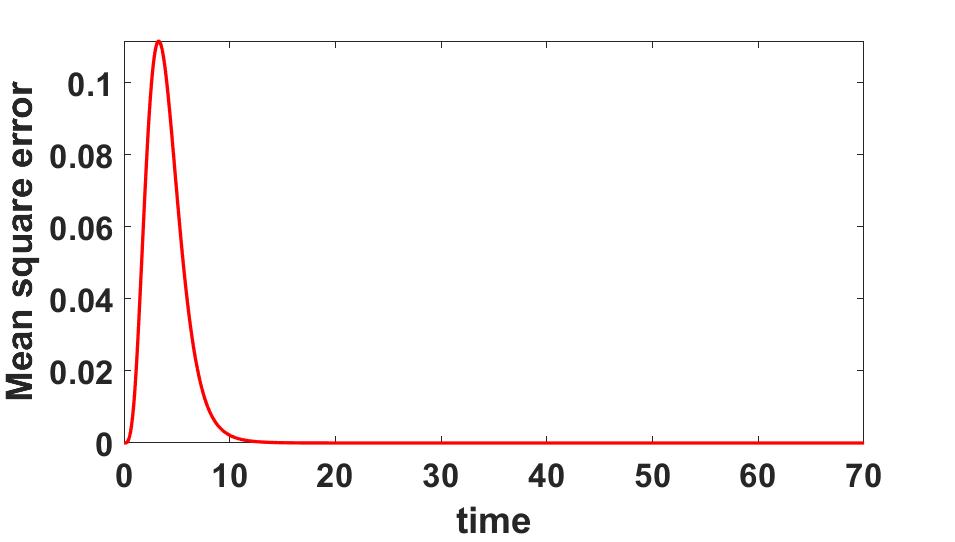}
			\caption{Sensitivity Analysis of $\mu_{1}$ in Interval III.}
			\label{sen_mu1_3}
		\end{center}
	\end{figure}
	
	\begin{figure}[ht]
		\begin{center}
			\subcaptionbox*{(a) Interval I}
			{\includegraphics[width=3in, height=1.8in, angle=0]{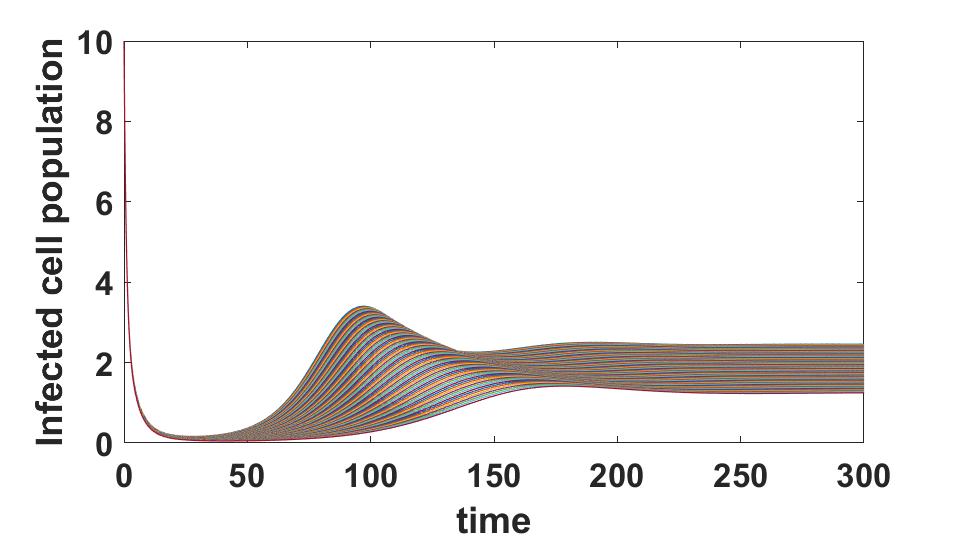}}
		\end{center}
	\end{figure}
	
	\addtocounter{figure}{-1}
	
	\begin{figure}[ht]
		\begin{center}
			\subcaptionbox*{(b) Interval II}
			{\includegraphics[width=3in, height=1.8in, angle=0]{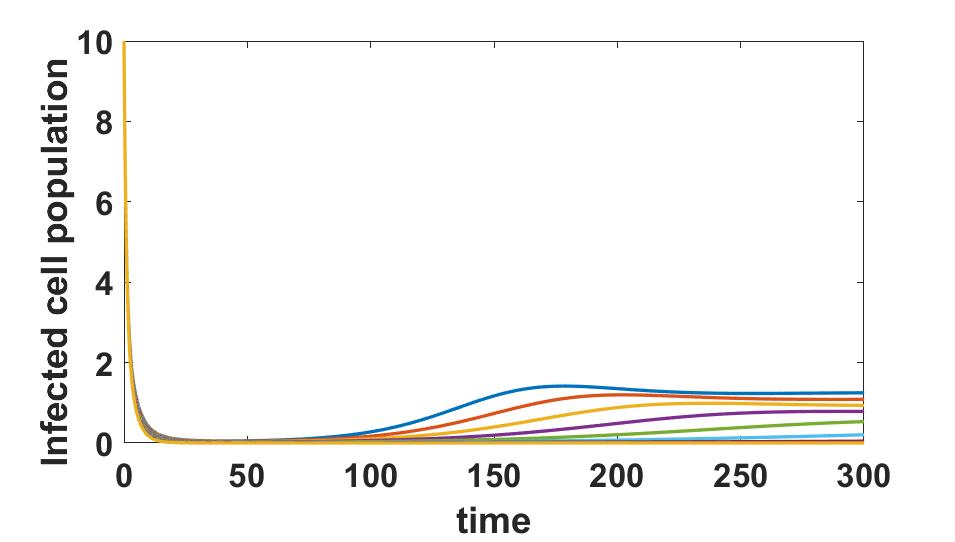}}
			\subcaptionbox*{(c) Interval III}
			{\includegraphics[width=3in, height=1.8in, angle=0]{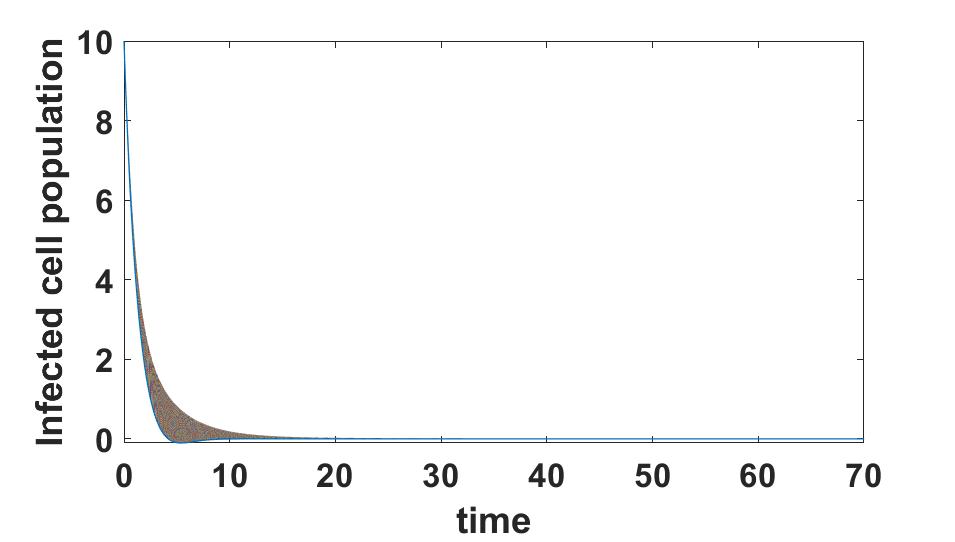}}
			\vspace{1\baselineskip}
			\caption{Sensitivity Analysis of $\mu_{1}$. Infected cell population in different intervals.}
			\label{sen_mu1}
		\end{center}
	\end{figure}
	
\newpage
	
	\vspace{.25cm}
	
	\begin{table}[hbt!]
		\caption{Summary of Sensitivity Analysis}
		\centering
		\label{sen_anl_summary}
		{
			\begin{tabular}{|l|l|l|}
				\hline
				\textbf{Parameter} & \textbf{Interval} & \textbf{Sensitivity} \\
				\hline
				
				$u_{11}$ & 0 to 0.4 & $\times$ \\ \cline{2-3}
				& 0.4 to 0.8 & $\times$ \\ \cline{2-3}
				& 0.8 to 1 & $\times$ \\ 
				\hline
				
				$u_{12}$ & 0 to 1 & $\times$ \\ \cline{2-3}
				& 1 to 2 &  $\times$   \\
				\hline
				
				$\beta$ & 0 to 0.01 & $\times$ \\ \cline{2-3}
				& 0.01 to 0.02 & $\times$ \\ \cline{2-3}
				& 0.02 to 0.03 & $\times$ \\ 
				\hline
				
				$\alpha$ & 0.5 to 1.2 & $\times$ \\ \cline{2-3}
				& 1.2 to 3 & $\checkmark$ \\ \cline{2-3}
				& 3 to 4 & $\times$ \\ 
				\hline
				
				$\omega$ & 5 to 15 & $\times$ \\ \cline{2-3}
				& 20 to 40 & $\times$ \\ \cline{2-3}
				& 40 to 60 & $\times$ \\ 
				\hline
				
				$\mu$ & 0.001 to 0.015 & $\times$ \\ \cline{2-3}
				& 0.01 to 0.03 & $\times$ \\ \cline{2-3}
				& 0.03 to 0.08 & $\times$ \\ 
				\hline
				
				$\mu_{1}$ & 0.01 to 0.1 & $\checkmark$ \\ \cline{2-3}
				& 0.1 to 0.4 & $\checkmark$ \\ \cline{2-3}
				& 0.4 to 1.8 & $\times$ \\ 
				\hline
				
			\end{tabular}
		}
	\end{table}


	\section{Drug Interventions and Optimal Control Studies} \vspace{.25cm}
	{\flushleft{  \textbf{OPTIMAL CONTROL PROBLEM} }} \vspace{.25cm}
	
	In this section, we will formulate an optimal control problem for the model 2 with drug interventions as control. The controls to be considered are:
	
	\begin{enumerate}
		\item \textbf{Drug intervention to boost  immune response:}  The innate immune response that  is inherent in the human body is the first  to counter and  reduce the growth of infected cells and viral load. In this intervention, we provide medication that boosts this innate immune system. This is modelled using two controls based on their role:
		\begin{enumerate}
			\item Control intervention that reduces the growth of the infected cells by releasing the necessary cytokines and chemokines which act on these cells. We denote this intervention by control variable $u_{11}(t)$.
			\item  Control intervention that reduces the  the viral load by releasing the necessary cytokines and chemokines. We denote this intervention by control variable $u_{12}(t)$.
		\end{enumerate}
	    \item \textbf{Drug intervention to prevent viral replication:} We know that one of the main causes of severity of disease caused by viruses is their rapid replication in the human body. We consider the intervention of providing medication that directly acts on the virus cells and prevent its replication which in turn reduces the birth rate of the virus.  We use the control variable $u_{2}(t)$ to denote this intervention.
	\end{enumerate}
	
	The set of all admissible controls is given by \\
	
	$U = \left\{(u_{11}(t),u_{12}(t),u_{2}(t)) : u_{11}(t) \in [0,u_{11} max] , u_{12}(t) \in [0,u_{12} max] , u_{2}(t) \in [0,u_{2} max] ,t \in [0,T] \right\}$
	
	Without medical interventions, $u_{11},\; u_{12},\;$ and $u_2$ are just constant parameters with $u_2 = 0$. $u_{11}$ and $ u_{12}$ have some value based on the inherent release of these cytokines and chemokines by the body. In the control problem, these variables are considered as functions of time. The upper bounds of control variables are based on the resource limitation based on availability and  the limit to which these drugs would be prescribed to the patients.
	
	Since these antiviral drugs administered are not inherently present in the human body and  being foreign particles to the body, there could be  side effects once administered. Also cost of these drugs could be an issue that needs to be addressed.
	
	Based on these we now propose and  define the optimal control problem with the goal to reduce the cost functional 
	
	\begin{equation}
		J(u_{11}(t),u_{12}(t),u_{2}(t)) = \int_{0}^{T} (I(t)+V(t)+A_{1}u_{11}(t)^2+A_{2}u_{12}(t)^2+A_{3}u_{2}(t)^2) dt   \label{obj}
	\end{equation} 
	
	such that $\textbf{u} = (u_{11}(t),u_{12}(t),u_{2}(t)) \in U$ \\
	
	subject to the system 
	
	\begin{eqnarray}
		\frac{dS}{dt}  & = &    \omega  -\beta SV  - \mu S   \nonumber \\
		\frac{dI}{dt} & = &  \beta SV  - \mu I  - (u_{11}(t))I \nonumber \\
		\frac{dV}{dt} & = &  (\alpha - u_{2}(t)) I    - \mu_{1} V - (u_{12}(t))V   \label{sys}
	\end{eqnarray}
	
	The integrand of the cost function (\ref{obj}), denoted by 
	$$L(S,I,V,u_{11},u_{12},u_2) = (I(t)+V(t)+A_{1}u_{11}(t)^2+A_{2}u_{12}(t)^2+A_{3}u_{2}(t)^2)$$
	
	is called the Lagrangian or the running cost.
	
	Here, the cost function represents the number of infected cells and viral load throughout the observation period,	and the side effects of the drug on the body. Effectively, we want to minimize the infected cells and the virus load in the body with the optimal medication that is also least harmful to the body. Since the drugs administered have multiple effects, the non-linearity for the control variables in the objective become justified \cite{Joshi2002}.
	
	The admissible solution set for the Optimal Control Problem (\ref{obj})-(\ref{sys}) is given by
	
	$$\Omega = \left\{ (S, I, V, u_{11}, u_{12}, u_2)\; | \; S,  I \ and \ V  \text{that satisfy (\ref{sys}), } \forall \; \textbf{u} \in U \right\}$$

	
	
	{\textbf{EXISTENCE OF OPTIMAL CONTROL}}\vspace{.25cm}

	We will show the existence of optimal control functions that minimize the cost functions within a finite time span $[0,T]$ showing that we satisfy the conditions stated in Theorem 4.1 of \cite{Wendell}.
	
	\begin{theorem}
		There exists a 3-tuple of optimal controls $(u_{11}^{*}(t) , u_{12}^{*}(t) , u_{2}^{*}(t))$ in the set of admissible controls U such that the cost functional is minimized i.e., 
		
		$$J[u_{11}^{*}(t) , u_{12}^{*}(t) , u_{2}^{*}(t)] = \min_{(u_{11}^{*} , u_{12}^{*} , u_{2}^{*}) \in U} \bigg \{ J[u_{11},u_{12},u_{2}]\bigg\}$$ 
		corresponding to the optimal control problem (\ref{obj})-(\ref{sys}).
	\end{theorem}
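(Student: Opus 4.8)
The plan is to verify the hypotheses of the Fleming--Rishel existence theorem (Theorem 4.1 of \cite{Wendell}) for the controlled system (\ref{sys}) together with the cost functional (\ref{obj}); there are five things to check --- nonemptiness of the admissible class, closedness and convexity of the control set, a linear growth/affine structure of the right-hand side in the controls, convexity of the running cost in the controls, and a coercivity bound on the running cost --- and the proof is essentially a sequence of routine verifications once the a priori bounds on the state are in hand.

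First I would establish that the class of admissible pairs (control in $U$, corresponding state trajectory) is nonempty. For any measurable $\mathbf{u}\in U$ the right-hand side of (\ref{sys}) is continuous in $(S,I,V)$ and locally Lipschitz, so Picard--Lindel\"of gives a local solution. Repeating the positivity and boundedness argument already carried out for Model~1 --- note that $u_{11},u_{12}$ only add nonnegative loss terms and $u_2$ only decreases the virus source, so under the standing assumptions $x>\alpha$ and $\mu=\mu_1$ the same comparison $\frac{dN}{dt}\le\omega-\mu N$ holds --- the solution stays in the compact set $\Omega$ for all $t\in[0,T]$ and is therefore global. Hence the admissible class is nonempty. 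Next, $U=[0,u_{11}\max]\times[0,u_{12}\max]\times[0,u_{2}\max]$ is a product of closed bounded intervals, hence closed and convex.

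Writing the state system as $\dot{\mathbf{x}}=f(t,\mathbf{x},\mathbf{u})$, I would then observe that on the invariant region $\Omega$ we have $S,I,V\le\omega/\mu$, so the single nonlinear term $\beta SV$ is bounded there and one gets $|f(t,\mathbf{x},\mathbf{u})|\le C_1\big(1+|\mathbf{x}|+|\mathbf{u}|\big)$ for a constant $C_1$ depending only on the parameters and $\omega/\mu$; moreover, for each fixed $\mathbf{x}$ the map $\mathbf{u}\mapsto f(t,\mathbf{x},\mathbf{u})$ is affine (the control enters only through the terms $-u_{11}I$, $-u_{2}I$, $-u_{12}V$), hence convex, which gives the required structural condition on the dynamics. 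For the running cost $L(S,I,V,u_{11},u_{12},u_2)=I+V+A_1u_{11}^2+A_2u_{12}^2+A_3u_2^2$, convexity in $(u_{11},u_{12},u_2)$ is immediate since the quadratic form in the controls is positive definite and the $I+V$ part is independent of $\mathbf{u}$; and, using positivity ($I,V\ge 0$), $L\ge A_1u_{11}^2+A_2u_{12}^2+A_3u_2^2\ge c_1|\mathbf{u}|^2\ge c_1|\mathbf{u}|^2-c_2$ with $c_1=\min\{A_1,A_2,A_3\}>0$ and $c_2$ any nonnegative constant, which is the coercivity bound $L\ge c_1|\mathbf{u}|^{\rho}-c_2$ with exponent $\rho=2>1$.

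Having verified all the hypotheses, the conclusion --- existence of an optimal triple $(u_{11}^{*},u_{12}^{*},u_{2}^{*})\in U$ minimizing $J$ --- follows directly from Theorem~4.1 of \cite{Wendell}. The only place where genuine work is needed is the a priori boundedness of $(S,I,V)$ on $[0,T]$: without it the bilinear term $\beta SV$ obstructs both the global existence of state trajectories and the linear growth estimate on $f$, so this step is the main obstacle; happily it is supplied almost verbatim by the boundedness analysis already done for Model~1, and the remaining items (closedness, convexity, coercivity) are straightforward.
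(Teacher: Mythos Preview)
Your proposal is correct and follows essentially the same route as the paper: both invoke Theorem~4.1 of \cite{Wendell} and verify nonemptiness of the admissible class via the positivity/boundedness argument from Model~1 together with Picard--Lindel\"of, closedness and convexity of $U$, affine dependence of the dynamics on the controls, and the coercivity lower bound $L\ge c\,|\mathbf{u}|^{2}$ with $c=\min\{A_1,A_2,A_3\}$. Your write-up is in fact somewhat more careful than the paper's --- you explicitly note that the added control terms only strengthen the decay estimate used for boundedness, and you spell out the linear growth bound on $f$ --- but the strategy and the key choices are the same.
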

	
	
	
	\begin{proof}
		
		 In order to show the existence of optimal control functions, we will show that the following conditions are satisfied : 
		
		\begin{enumerate}
			\item  The solution set for the system (\ref{sys}) along with bounded controls must be non-empty, $i.e.$, $\Omega \neq \phi$.
			
			\item  U is closed and convex and system should be expressed linearly in terms of the control variables with coefficients that are functions of time and state variables.
			
			\item The Lagrangian L should be convex on U and $L(S,I,V,u_{11},u_{12},u_2) \geq g(u_{11},u_{12},u_{2})$, where $g(u_{11},u_{12},u_{2})$ is a continuous function of control variables such that $|(u_{11},u_{12},u_{2})|^{-1} g(u_{11},u_{12},u_{2}) \to \infty$ whenever  $|(u_{11},u_{12},u_{2})| \to \infty$, where $|.|$ is an $l^2(0,T)$ norm.
		\end{enumerate}

		Now we will show that each of the conditions are satisfied : 
		
		1. From Positivity and boundedness of solutions of the system(\ref{sys}), all solutions are bounded for each bounded control variable in U.
		
		Also,the right hand side of the system (\ref{sys}) satisfies Lipschitz condition with respect to state variables. 
		
		Hence, using the positivity and boundedness condition and the existence of solution from Picard-Lindelof Theorem\cite{makarov2013picard}, we have satisfied condition 1.
		
		2. $U$ is closed and convex by definition. Also, the system (\ref{sys}) is clearly linear with respect to controls such that coefficients are only state variables or functions dependent on time. Hence condition 2 is satisfied.
		
		3. Choosing $g(u_{11},u_{12},u_{2}) = c(u_{11}^{2}+u_{12}^{2}+u_{2}^{2})$ such that $c = min\left\{A_{1},A_{2},A_{3}\right\}$, we can satisfy the condition 3.
		
		Hence there exists a control 3-tuple $(u_{11}^{*},u_{12}^{*},u_{2}^{*})\in U$ that minimizes the cost function (\ref{obj}).
	\end{proof}

	\textbf{CHARACTERIZATION OF OPTIMAL CONTROL}\vspace{.25cm}
	
	We will obtain the necessary conditions for optimal control functions using the Pontryagin's Maximum Principle \cite{liberzon2011calculus} and also obtain the characteristics of the optimal controls.
	
	The Hamiltonian for this problem is given by 
	
	$$H(S,I,V,u_{11},u_{12},u_{2},\lambda) := L(S,I,V,u_{11},u_{12},u_{2}) + \lambda_{1} \frac{\mathrm{d} S}{\mathrm{d} t} +\lambda _{2}\frac{\mathrm{d} I}{\mathrm{d} t}+ \lambda _{3} \frac{\mathrm{d} V}{\mathrm{d} t}$$
	
	Here $\lambda$ = ($\lambda_{1}$,$\lambda_{2}$,$\lambda_{3}$) is called co-state vector or adjoint vector.
	
	Now the Canonical equations that relate the state variables to the co-state variables are  given by 
	
	\begin{equation}
	\begin{aligned}
	 \frac{\mathrm{d} \lambda _{1}}{\mathrm{d} t} &= -\frac{\partial H}{\partial S}\\
	 \frac{\mathrm{d} \lambda _{2}}{\mathrm{d} t} &= -\frac{\partial H}{\partial I}\\
	 \frac{\mathrm{d} \lambda _{3}}{\mathrm{d} t} &= -\frac{\partial H}{\partial V}
	\end{aligned}
	\end{equation}

	
	Substituting the Hamiltonian value gives the canonical system 
	
	\begin{equation}
	\begin{aligned}
	\frac{\mathrm{d} \lambda _{1}}{\mathrm{d} t} &= \lambda _{1}(\beta V+\mu)-\lambda _{2} \beta V\\
	\frac{\mathrm{d} \lambda _{2}}{\mathrm{d} t} &= -1+\lambda _{2}(\mu+u_{11})-\lambda _{3} (\alpha -u_{2})\\
	\frac{\mathrm{d} \lambda _{3}}{\mathrm{d} t} &= -1+\lambda _{1}\beta S-\lambda _{2}\beta S+\lambda _{3} (\mu _{1}+b_{7}+u_{12})
	\end{aligned}
	\end{equation}
	
	along with transversality conditions
	$ \lambda _{1} (T) = 0, \  \lambda _{2} (T) = 0, \  \lambda _{3} (T) = 0. $
	
	Now, to obtain the optimal controls, we will use the Hamiltonian minimization condition 
	$ \frac{\partial H}{\partial u_{i}}$ = 0 , at  $u_{i} = u_{i}^{*}$  for i = 11, 12 , 2.
	
	Differentiating the Hamiltonian and solving the equations, we obtain the optimal controls as 
	
	\begin{eqnarray*}
	u_{11}^{*} &=& \min\bigg\{ \max\bigg\{\frac{\lambda _{2}I}{2A_{1}},0 \bigg\}, u_{11}max\bigg\}\\
	u_{12}^{*} &= &\min\bigg\{ \max\bigg\{\frac{\lambda _{3}V}{2A_{2}},0 \bigg\}, u_{12}max\bigg\}\\
	u_{2}^{*}& = &\min\bigg\{ \max\bigg\{\frac{\lambda _{3}I}{2A_{3}},0 \bigg\}, u_{2}max\bigg\}
	\end{eqnarray*}

	{\flushleft{  \textbf{NUMERICAL SIMULATIONS} }} \vspace{.25cm}
	
	In this section, we perform numerical simulations to understand the efficacy of multiple drug interventions. This is done by studying the effect of control  on the  dynamics of the system. These simulations also validate the theoretical results obtained in the previous section.
	
The various combinations of controls considered are: 
	
	1. Providing medication that only boosts the innate immune system by reducing the number of infected cells and viral load. \vspace{.25cm}
	
	2. Providing medication and treatment that only prevents viral replication. \vspace{.25cm}
	
	3. Providing medication and treatment that execute both the above. \vspace{.25cm}
	
	For our simulations, we have taken the total number of days as $T = 30.$ The parameter values considered are as follows :
	$\omega$ = 10, $\mu$= 0.05, $\mu_{1}$ = 1.1, $\beta$ = 0.005, $\alpha$ = 0.5
	
	We first solve the state system numerically using Fourth Order Runge-Kutta method in MATLAB without any interventions. We take the initial values of state variables to be $S(0) = 3.2 \times 10^5, I(0) = 0, V(0) = 5.2$ \cite{ben2015minimal, ea2020host} with the control parameters as constant values $u_{11}$ = 0.8866, $u_{12}$ = 0.56, $u_{2}$ = 0.
	
	Now, to simulate the system with controls, we use the Forward-Backward Sweep method stating with the initial values of controls and solve the state system forward in time. Following this we solve the adjoint state system backward in time due to the transversality conditions, using the optimal state variables and initial values of optimal control.
	
	Now, using the values of adjoint state variables, the values of optimal control are updated and with these updated control variables, we go through this process again. We continue this till the convergence criterion is met \cite{liberzon2011calculus}. The positive weights chosen for objective coefficients are $A_{1}$ = 80, $A_{2}$ = 80, $A_{3}$ = 7480. $A_{3}$ is chosen high compared to $A_1$ and $A_2$ as the effort to stop the viral replication process is higher than the effort to enhance innate immune response as it is already existing in human body.
	\begin{figure}[h!]
		\centering
		\includegraphics[height = 6.5cm, width = 15cm]{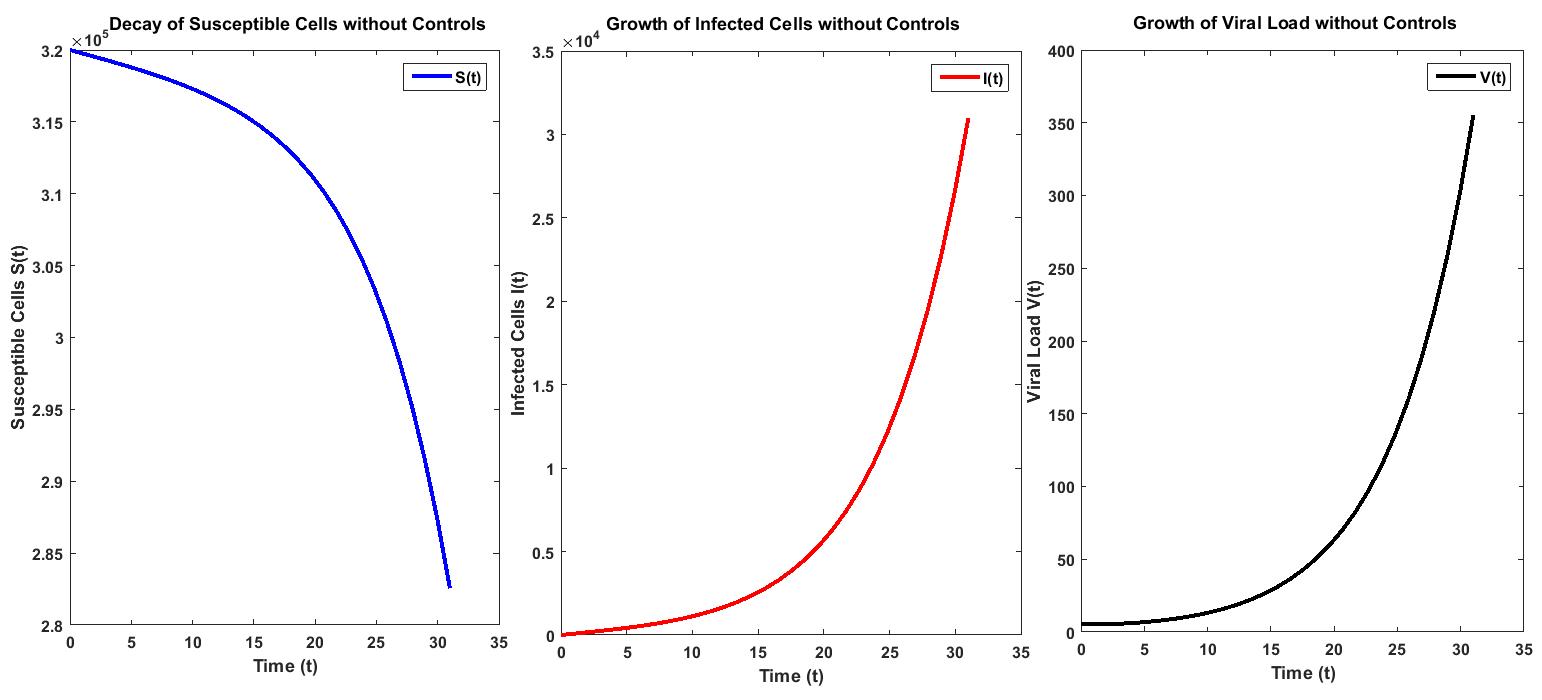} 
		\vspace{3mm}
		\caption{$S, I, V$  without controls}
	\end{figure}
	

	Figure 15 shows the change in the cell population $S(t), I(t), V(t)$ respectively with time. We observed that the susceptible cells reduce and the infected cells increase exponentially due to the increase in viral load over a period of time.
	
	\begin{figure}[hbt!]
		\centering
		\includegraphics[height = 6.5cm, width = 15cm]{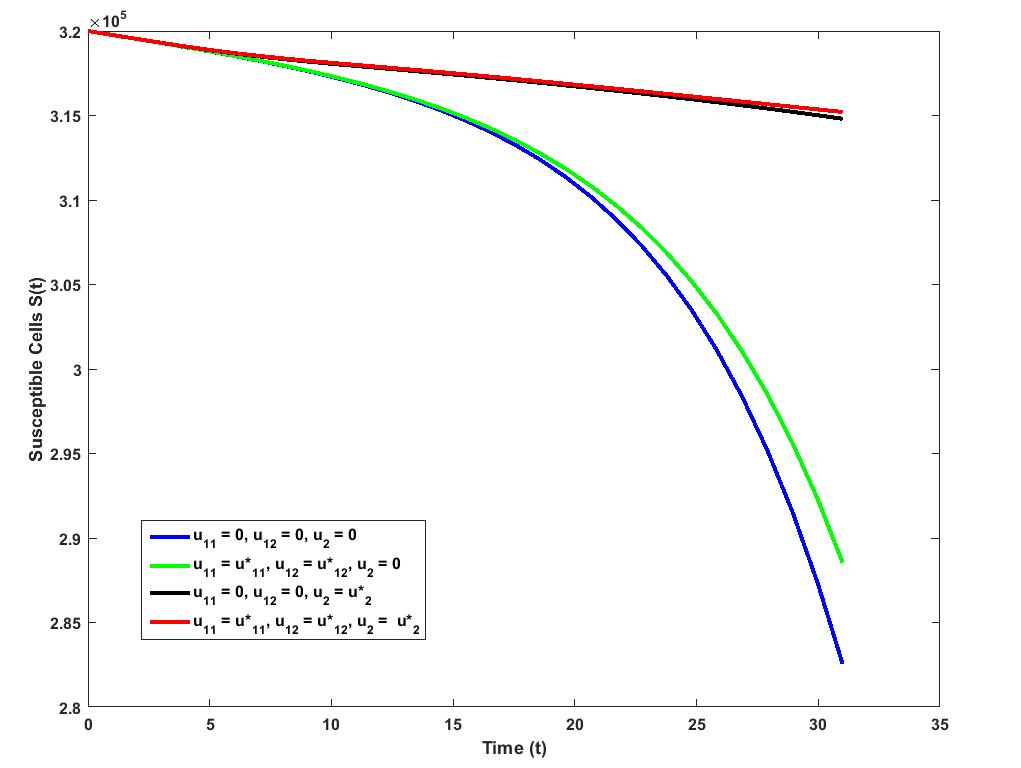} 
		\vspace{3mm}
		\caption{$S$ under optimal controls u$_{1}^{*}$, u$_{2}^{*}$}
	\end{figure}  
	
	Figure 16 shows the behaviour of susceptible cells under all possible combinations of control. We observe that in the presence of immune boosting medication only, there is a slight reduction in the number of susceptible cells getting infected by the virus but when viral replication is prevented the impact is more.  Even lesser damage is seen on susceptible cells when all control interventions are applied together.
	
	\begin{figure}[hbt!]
		\centering
		\includegraphics[height = 6.5cm, width = 15cm]{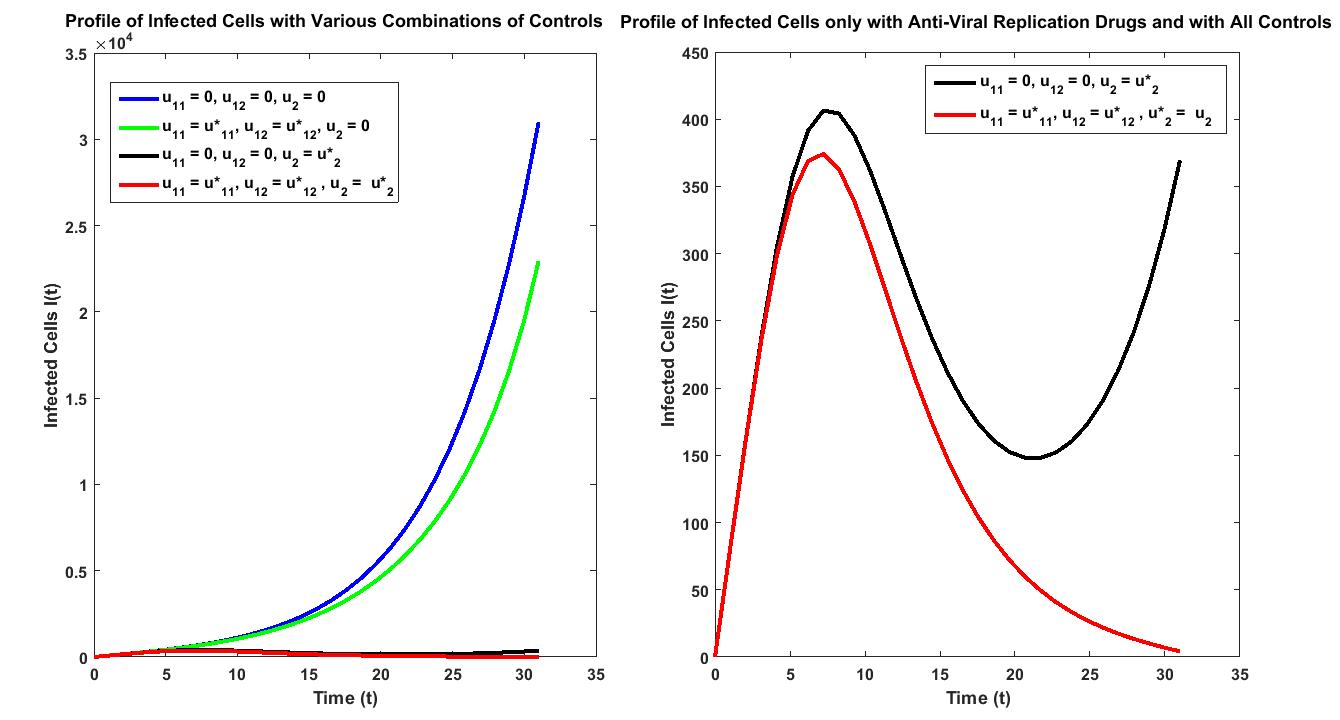} 
		\vspace{3mm}
		\caption{$I$ under optimal controls u$_{1}^{*}$, u$_{2}^{*}$}
	\end{figure}

	In figure 17 the first frame shows how infected cells grow or decay with various combinations of controls. The increase in infected cells is slightly lesser in the presence of immunity boosters. There is a huge difference if viral replication preventing medicines are used. Here too, the best results are obtained only when all the controls are applied. The second frame in figure 17 gives detailed view of the cases when only antiviral replication medicines are used and all 3 controls are used. We see that when only control u$_{2}$ is used, the infected cells increase slowly, and reduce later but increase again after 20 days. This may be harmful to the patient. When all 3 controls are used, the infected cells reach peak around the 8th day and start decaying then on and even become nearly zero after 30 days. 
	
	\begin{figure}[h!]
		\centering
		\includegraphics[height = 6.5cm, width = 15cm]{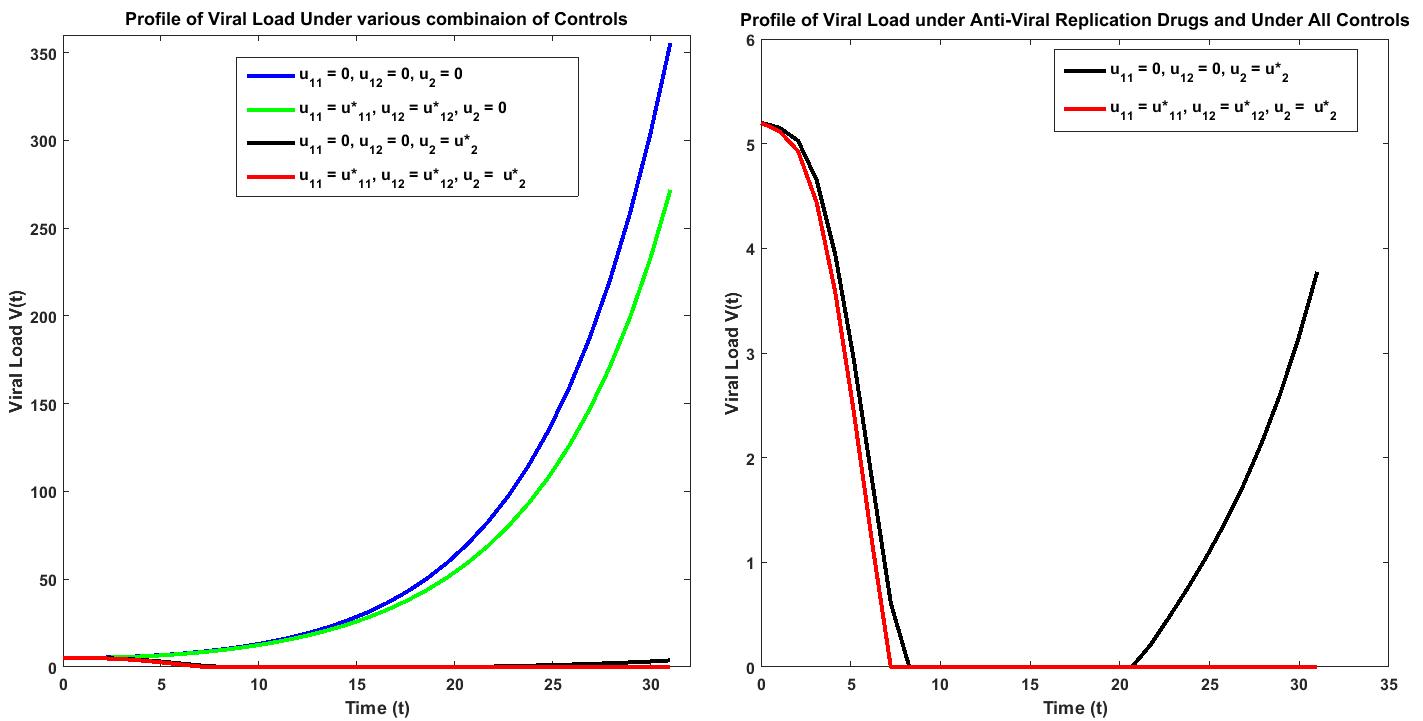} 
		\vspace{3mm}
		\caption{$V$ under optimal controls u$_{1}^{*}$, u$_{2}^{*}$}
	\end{figure}

	In figure 18 the first frame shows the viral load under all combinations of control interventions. When no medication is provided, there is exponential increase in the viral load but when only immunity boosting medication is provided ($u_2 = 0$), there is a little reduction seen in the increase of viral load. The  combinations of both these drug interventions show much better results. From the second frame in figure 18 it can be seen that   when only anti-viral replication medication is provided ($u_{11} = u_{12} = 0$),  the viral load becomes very less by the 5th day but due to the absence of immunity boosters, the viral load tends to increase around 25th day. The best results are shown when all the control interventions are administered together. The detailed view of the figure explains how the viral population tends to become nearly zero around 7th day and remains there throughout. These results are in line with the clinical findings discussed in  \cite{caly2020fda}.
	
	 Thus the optimal control studies and the numerical simulations help us to infer the following :
    
    1. Innate immunity boosters do affect the way infection spreads in the body but they become effective only along with the prevention of viral replication medicines.
    
    2. Viral replication, when prevented, reduces the infected cells and the viral load drastically but the immune system needs to be sufficiently boosted if we want to completely cure the patient.
    
    3. When all controls are used effectively, then the patient can be cured completely, with minimal/optimal dosage of drugs which minimizes the side effects caused to the patient when administered.

\section{discussions and conclusions} \vspace{.25cm}

\qquad The outbreak of novel coronavirus in Wuhan, China marked the introduction of a virulent coronavirus into human society. On Feb. 11, 2020, the World Health Organization named novel corona viral pneumonia induced disease as Coronavirus disease (COVID-19), which is caused by Severe Acute Respiratory Syndrome coronavirus-2 (SARS-CoV-2). Soon this grew into a global pandemic. As on 02 May 2020, 2, 37, 996  people lost their lives and more than 33 lakh people have been affected due to COVID-19 all over the world  \cite{1}. Although a lot of research is being done, effective approaches to treatment and epidemiological control are still lacking.

\vspace{.25cm}

\quad In this context, the invivo mathematical modelling studies can be extremely helpful in understanding the efficacy of the drug interventions. These studies also can be helpful in understanding the role of the crucial inflammatory mediators and the behaviour of immune response towards this novel corona virus. Motivated by these facts, in this paper, we study the invivo dynamics of Covid-19. \vspace{.25cm}

\quad  Based on the pathogenesis of Covid-19, we have proposed two models. The first model deals with natural history and the course of infection while the second model incorporates the drug interventions. The results of these studies show that the disease system admits two steady states: one being the disease free equilibrium and the other being the infected equilibrium. The dynamics of the system show that the disease takes it course to one of these states based on the reproduction number $R_0$. Specifically when $R_0 < 1$ the system tends to stabilize around the disease free equilibrium and when $R_0 > 1$ the system tends to stabilize around the infected equilibrium.  The system also undergoes a trans-critical bifurcation at $R_0 = 1$. This result is inline with the conclusions made at the population level for Covid-19 \cite{khan2020modeling}. From the sensitivity analysis it is seen  that the burst rate of virus particles and the natural death rate of the virus are sensitive parameters of the system. From the sensitivity analysis it is seen that the burst rate of virus particles and the natural death rate of the virus are the sensitive parameters of the system. We also validate the proposed model 1 using two-parameter heat plots that reproduce the characteristics of Covid-19.\vspace{.25cm}

\quad Results from the optimal control studies suggest that the antiviral drugs that target on viral replication and the drugs that enhance the immune system response both reduce the infected cells and viral load when taken individually. In particular, the antiviral drugs that target viral replication seem to yield better results than the drugs that enhance the innate immune response. From figure 18, it can be seen that on administering control intervention $u_2$ there is a substantial decrease in the viral load from day 2. This result validates the clinical findings in  \cite{caly2020fda} which states that "$Ivermectin$, an FDA-approved anti-parasitic previously shown to have broad-spectrum anti-viral activity in vitro, is an inhibitor of the causative virus (SARS-CoV-2), with a single addition to Vero-hSLAM cells 2 h post infection with SARS-CoV-2 able to effect ~5000-fold reduction in viral RNA at 48 h." \vspace{.25cm}

\quad When applied in combination, these drugs yield the best possible results. Hence, the optimal control strategy and the best drug regime would be to use the combination of both these drugs which help in patient's recovery with minimal/optimal dosage that reduce the side effects caused due to these drugs.  \vspace{.25cm}

\quad This invivo modelling study involving the crucial biomarkers of Covid-19 is first of its kind for Covid-19 and the results obtained from this can be helpful to researchers, epidemiologists, clinicians and doctors working in this field.

\vspace{.25cm}

\section{future research} \vspace{.25cm}

This work is an initial attempt to understand the basic dynamics of Covid-19 and its course. The consequences and the outcomes of the two main functions of antiviral drug interventions is modelled and discussed. Further research can be focused on incorporating the side effects of these drugs into the model.  Future studies can also focus on the various other drug interventions along with their effects on Covid-19 dynamics.

{\flushleft{  \textbf{ACKNOWLEDGEMENTS} }}\vspace{.25cm}

The authors from SSSIHL and SSSHSS dedicate this paper to the founder chancellor of SSSIHL, Bhagawan Sri Sathya Sai Baba. The corresponding author also dedicates this paper to his loving elder brother D. A. C. Prakash who still lives in his heart and the first author also dedicates this paper to his loving father  Purna Chhetri . \vspace{.25cm}

\lhead{\emph{Bibliography}}
\bibliographystyle{amsplain}
\bibliography{references}

\providecommand{\bysame}{\leavevmode\hbox to3em{\hrulefill}\thinspace}
\providecommand{\MR}{\relax\ifhmode\unskip\space\fi MR }
\providecommand{\MRhref}[2]{%
  \href{http://www.ams.org/mathscinet-getitem?mr=#1}{#2}
}
\providecommand{\href}[2]{#2}
\begin{thebibliography}{10}

\bibitem{10}
\emph{https://pib.gov.in/newsite/printrelease.aspx?relid=201174}.

\bibitem{8}
\emph{https://www.bloomberg.com/news/articles/2020-04-10/two-thirds-of-severe-covid-19-improved-on-gilead-s-remdesivir}.

\bibitem{9}
\emph{https://www.nih.gov/news-events/news-releases/nih-clinical-trial-investigational-vaccine-covid-19-begins}.

\bibitem{1}
\emph{https://www.worldometers.info/coronavirus/}.

\bibitem{ben2015minimal}
Rotem Ben-Shachar and Katia Koelle, \emph{Minimal within-host dengue models
  highlight the specific roles of the immune response in primary and secondary
  dengue infections}, Journal of the Royal Society Interface \textbf{12}
  (2015), no.~103, 20140886.

\bibitem{caly2020fda}
Leon Caly, Julian~D Druce, Mike~G Catton, David~A Jans, and Kylie~M Wagstaff,
  \emph{The fda-approved drug ivermectin inhibits the replication of sars-cov-2
  in vitro}, Antiviral research (2020), 104787.

\bibitem{GLB}
Zhilan~Feng Carlos Castillo-Chavez and Wenzhang Huang, \emph{On the computation
  of reproduction number and its role in global stability.}, Institute for
  Mathematics and Its Applications \textbf{125} (2002), no.~2, 229--250.

\bibitem{BIF}
Carlos Castillo-Chavez and Baojun Song, \emph{Dynamical models of tuberculosis
  and their applications.}, Mathematical Biosciences and Engineering \textbf{1}
  (2004), no.~2, 361--404.

\bibitem{5}
Tian-Mu Chen, Jia Rui, Qiu-Peng Wang, Ze-Yu Zhao, Jing-An Cui, and Ling Yin,
  \emph{A mathematical model for simulating the phase-based transmissibility of
  a novel coronavirus}, Infectious Diseases of Poverty \textbf{9} (2020),
  no.~1, 1--8.

\bibitem{p4}
Xiaohua Chen, Binghong Zhao, Yueming Qu, Yurou Chen, Jie Xiong, Yong Feng, Dong
  Men, Qianchuan Huang, Ying Liu, Bo~Yang, et~al., \emph{Detectable serum
  sars-cov-2 viral load (rnaaemia) is closely associated with drastically
  elevated interleukin 6 (il-6) level in critically ill covid-19 patients},
  medRxiv (2020).

\bibitem{diekmann2010construction}
Odo Diekmann, JAP Heesterbeek, and Michael~G Roberts, \emph{The construction of
  next-generation matrices for compartmental epidemic models}, Journal of the
  Royal Society Interface \textbf{7} (2010), no.~47, 873--885.

\bibitem{ea2020host}
Hernandez~Vargas EA and JX~Velasco-Hernandez, \emph{In-host modelling of
  covid-19 kinetics in humans}, medRxiv (2020).

\bibitem{ehmannvirological}
Katrin~Zwirglmaier Ehmann, Christian Drosten, Clemens Wendtner, MD~Zange,
  Patrick Vollmar, DVM Rosina~Ehmann, Katrin Zwirglmaier, MD~Guggemos, Michael
  Seilmaier, Daniela Niemeyer, et~al., \emph{Virological assessment of
  hospitalized cases of coronavirus disease 2019}.

\bibitem{Wendell}
Wendell~H Fleming and Raymond~W Rishel, \emph{Deterministic and stochastic
  optimal control}, vol.~1, Springer Science \& Business Media, 2012.

\bibitem{Joshi2002}
Hem~Raj Joshi, \emph{Optimal control of an hiv immunology model}, Optimal
  control applications and methods \textbf{23} (2002), no.~4, 199--213.

\bibitem{khan2020modeling}
Muhammad~Altaf Khan and Abdon Atangana, \emph{Modeling the dynamics of novel
  coronavirus (2019-ncov) with fractional derivative}, Alexandria Engineering
  Journal (2020).

\bibitem{7}
Adam~J Kucharski, Timothy~W Russell, Charlie Diamond, Yang Liu, John Edmunds,
  Sebastian Funk, Rosalind~M Eggo, Fiona Sun, Mark Jit, James~D Munday, et~al.,
  \emph{Early dynamics of transmission and control of covid-19: a mathematical
  modelling study}, The lancet infectious diseases (2020).

\bibitem{p2}
Xiaowei Li, Manman Geng, Yizhao Peng, Liesu Meng, and Shemin Lu,
  \emph{Molecular immune pathogenesis and diagnosis of covid-19}, Journal of
  Pharmaceutical Analysis (2020).

\bibitem{liberzon2011calculus}
Daniel Liberzon, \emph{Calculus of variations and optimal control theory: a
  concise introduction}, Princeton University Press, 2011.

\bibitem{4}
Qianying Lin, Shi Zhao, Daozhou Gao, Yijun Lou, Shu Yang, Salihu~S Musa,
  Maggie~H Wang, Yongli Cai, Weiming Wang, Lin Yang, et~al., \emph{A conceptual
  model for the coronavirus disease 2019 (covid-19) outbreak in wuhan, china
  with individual reaction and governmental action}, International journal of
  infectious diseases \textbf{93} (2020), 211--216.

\bibitem{p3}
Yingxia Liu, Yang Yang, Cong Zhang, Fengming Huang, Fuxiang Wang, Jing Yuan,
  Zhaoqin Wang, Jinxiu Li, Jianming Li, Cheng Feng, et~al., \emph{Clinical and
  biochemical indexes from 2019-ncov infected patients linked to viral loads
  and lung injury}, Science China Life Sciences \textbf{63} (2020), no.~3,
  364--374.

\bibitem{p1}
Roujian Lu, Xiang Zhao, Juan Li, Peihua Niu, Bo~Yang, Honglong Wu, Wenling
  Wang, Hao Song, Baoying Huang, Na~Zhu, et~al., \emph{Genomic characterisation
  and epidemiology of 2019 novel coronavirus: implications for virus origins
  and receptor binding}, The Lancet \textbf{395} (2020), no.~10224, 565--574.

\bibitem{makarov2013picard}
Evgeny Makarov and Bas Spitters, \emph{The picard algorithm for ordinary
  differential equations in coq}, International Conference on Interactive
  Theorem Proving, Springer, 2013, pp.~463--468.

\bibitem{2}
Arti Mishra and Sunita Gakkhar, \emph{A micro-epidemic model for primary dengue
  infection}, Communications in Nonlinear Science and Numerical Simulation
  \textbf{47} (2017), 426--437.

\bibitem{misra2011modeling}
AK~Misra, Anupama Sharma, and JB~Shukla, \emph{Modeling and analysis of effects
  of awareness programs by media on the spread of infectious diseases},
  Mathematical and Computer Modelling \textbf{53} (2011), no.~5-6, 1221--1228.

\bibitem{3}
SD~Perera and SSN Perera, \emph{Simulation model for dynamics of dengue with
  innate and humoral immune responses}, Computational and mathematical methods
  in medicine \textbf{2018} (2018).

\bibitem{p6}
Furong Qi, Shen Qian, Shuye Zhang, and Zheng Zhang, \emph{Single cell rna
  sequencing of 13 human tissues identify cell types and receptors of human
  coronaviruses}, Biochemical and biophysical research communications (2020).

\bibitem{qin2020dysregulation}
Chuan Qin, Luoqi Zhou, Ziwei Hu, Shuoqi Zhang, Sheng Yang, Yu~Tao, Cuihong Xie,
  Ke~Ma, Ke~Shang, Wei Wang, et~al., \emph{Dysregulation of immune response in
  patients with covid-19 in wuhan, china}, Clinical Infectious Diseases (2020).

\bibitem{tay2020trinity}
Matthew~Zirui Tay, Chek~Meng Poh, Laurent R{\'e}nia, Paul~A MacAry, and Lisa~FP
  Ng, \emph{The trinity of covid-19: immunity, inflammation and intervention},
  Nature Reviews Immunology (2020), 1--12.

\bibitem{tu2020review}
Yung-Fang Tu, Chian-Shiu Chien, Aliaksandr~A Yarmishyn, Yi-Ying Lin, Yung-Hung
  Luo, Yi-Tsung Lin, Wei-Yi Lai, De-Ming Yang, Shih-Jie Chou, Yi-Ping Yang,
  et~al., \emph{A review of sars-cov-2 and the ongoing clinical trials},
  International Journal of Molecular Sciences \textbf{21} (2020), no.~7, 2657.

\bibitem{6}
Chayu Yang and Jin Wang, \emph{A mathematical model for the novel coronavirus
  epidemic in wuhan, china}, Mathematical Biosciences and Engineering
  \textbf{17} (2020), no.~3, 2708--2724.

\bibitem{p5}
Sherif~R Zaki and Cynthia~S Goldsmith, \emph{Sars coronavirus infection:
  pathology and pathogenesis of an emerging virus disease}, Coronaviruses with
  Special Emphasis on First Insights Concerning SARS, Springer, 2005,
  pp.~87--99.

\end{thebibliography}

		\section{appendix - a}
	
	{\flushleft{  \textbf{SENSITIVITY PLOTS FOR OTHER PARAMETERS} }}\vspace{.25cm}	
		
		For all the plots in this section, the time scale is the following: $x-$axis: 10 units = 1 day, $y-$ axis: 1 unit = 1 cell. 
		
		\subsection{Parameter $\boldsymbol{u_{12}}$}
		
		\begin{figure}[hbt!]
			\begin{center}
				\includegraphics[width=3in, height=1.8in, angle=0]{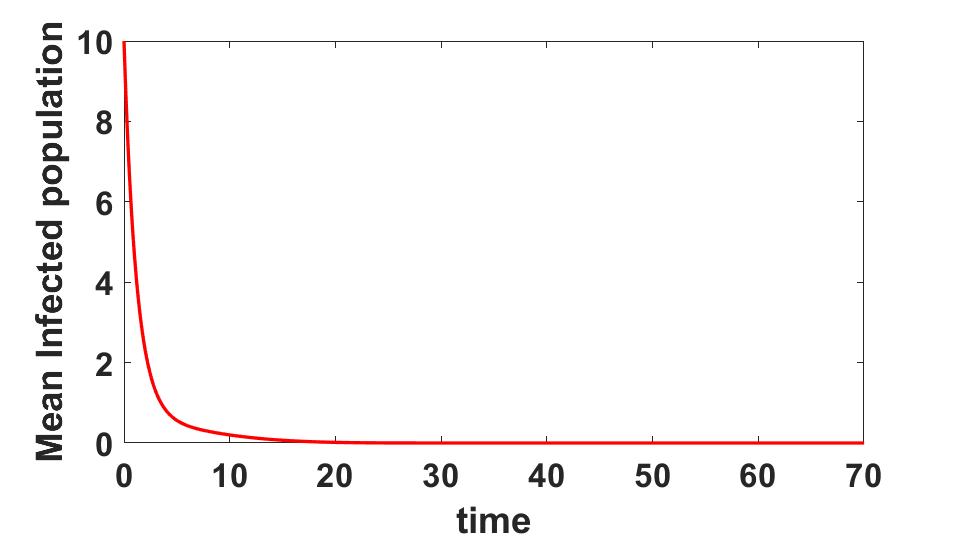}
				\includegraphics[width=3in, height=1.8in, angle=0]{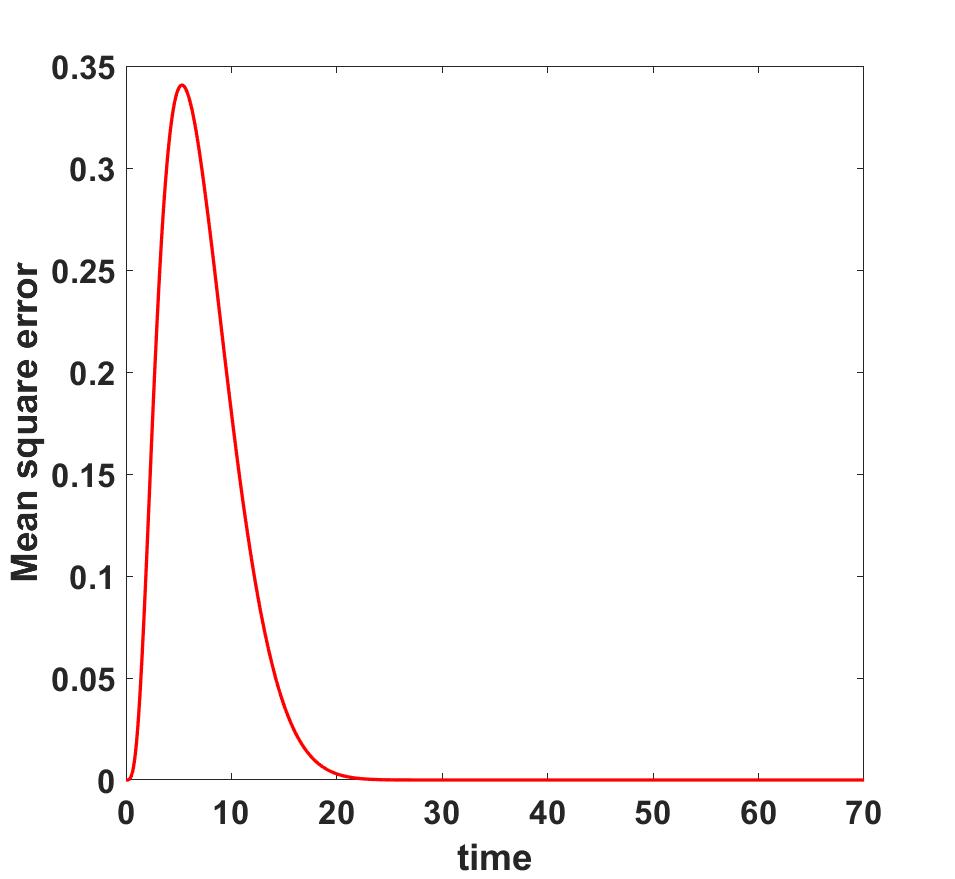}
				\caption{Sensitivity Analysis of $u_{12}$ in Interval I.}
				\label{sen_u12_1}
			\end{center}
		\end{figure}
		
		\begin{figure}[hbt!]
			\begin{center}
				\includegraphics[width=3in, height=1.8in, angle=0]{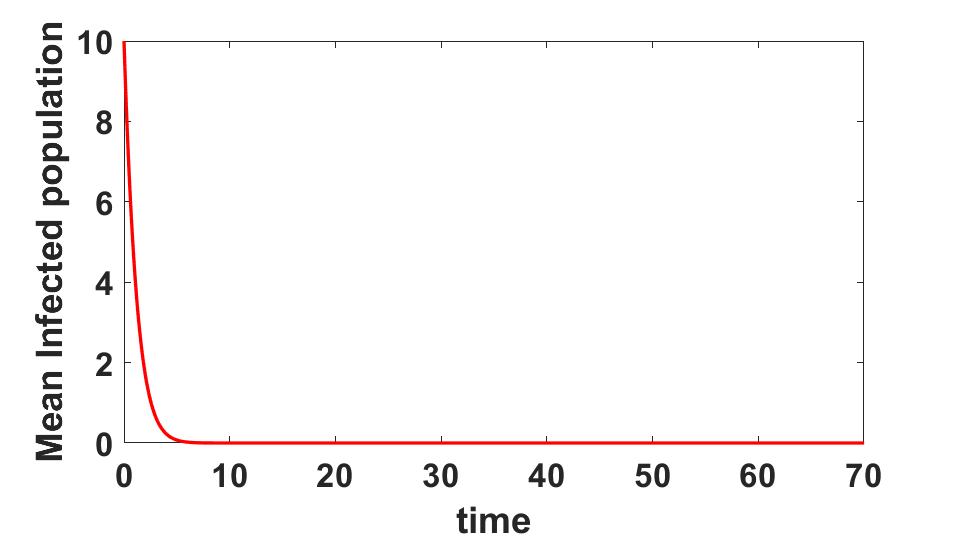}
				\includegraphics[width=3in, height=1.8in, angle=0]{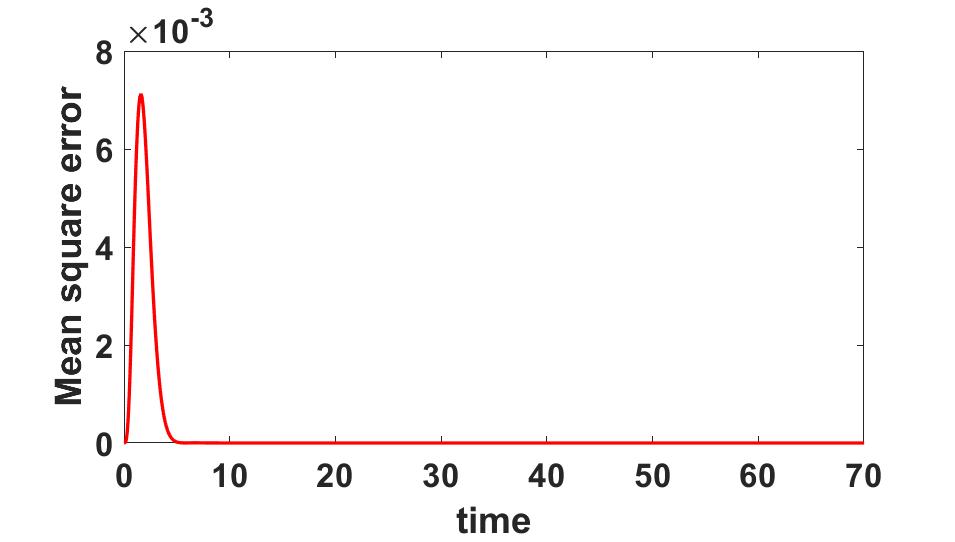}
				\caption{Sensitivity Analysis of $u_{12}$ in Interval II.}
				\label{sen_u12_2}
			\end{center}
		\end{figure}
		
		\begin{figure}[hbt!]
			\begin{center}
				\subcaptionbox*{(a) Interval I}
				{\includegraphics[width=3in, height=1.8in, angle=0]{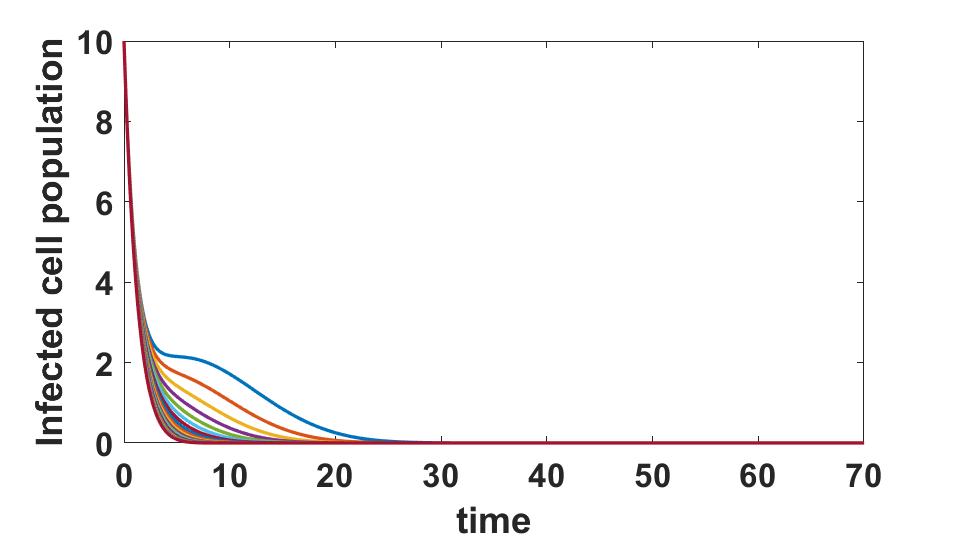}}
				\subcaptionbox*{(b) Interval II}
				{\includegraphics[width=3in, height=1.8in, angle=0]{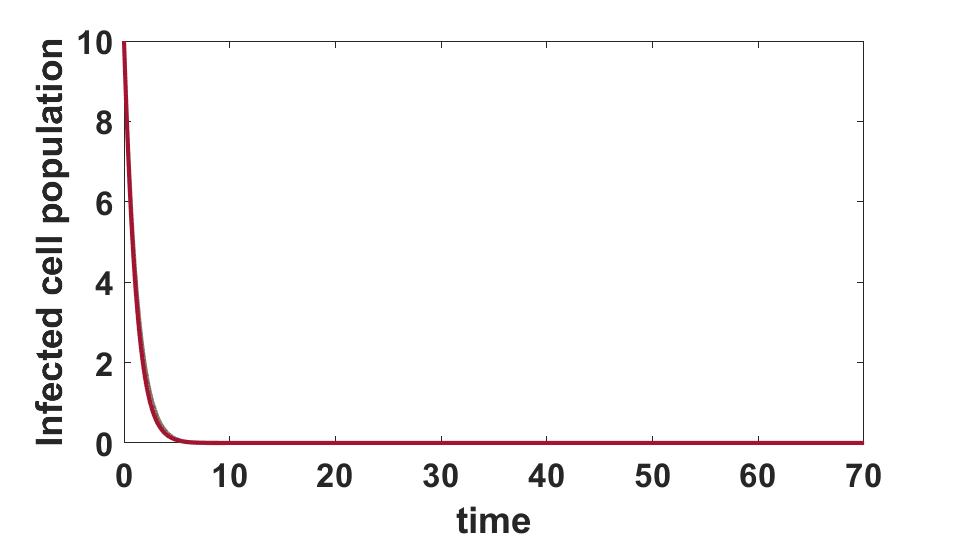}}
				\vspace{1\baselineskip}
				\caption{Sensitivity Analysis of $u_{12}$. Infected cell population in different intervals.}
				\label{sen_u12}
			\end{center}
		\end{figure}
		
		\newpage
		
		\subsection{Parameter $\boldsymbol{u_{11}}$}
		
		\begin{figure}[hbt!]
			\begin{center}
				\includegraphics[width=3in, height=1.8in, angle=0]{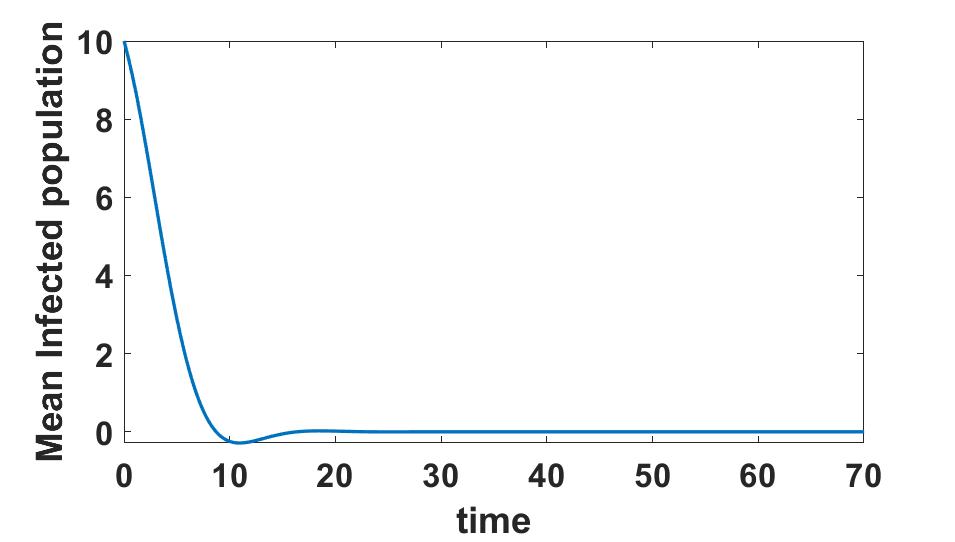}
				\includegraphics[width=3in, height=1.8in, angle=0]{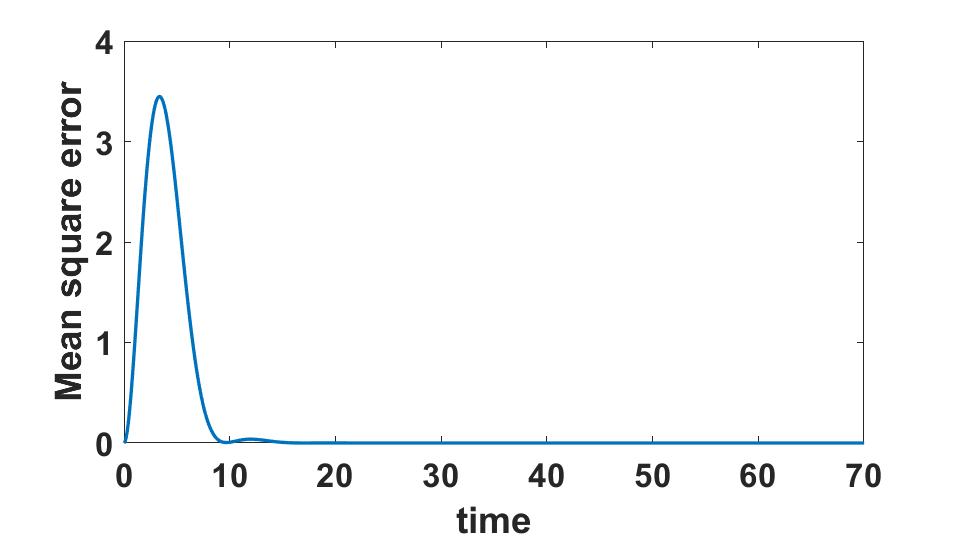}
				\caption{Sensitivity Analysis of $u_{11}$ in Interval I.}
				\label{sen_u11_1}
			\end{center}
		\end{figure}
		
		\begin{figure}[hbt!]
			\begin{center}
				\includegraphics[width=3in, height=1.8in, angle=0]{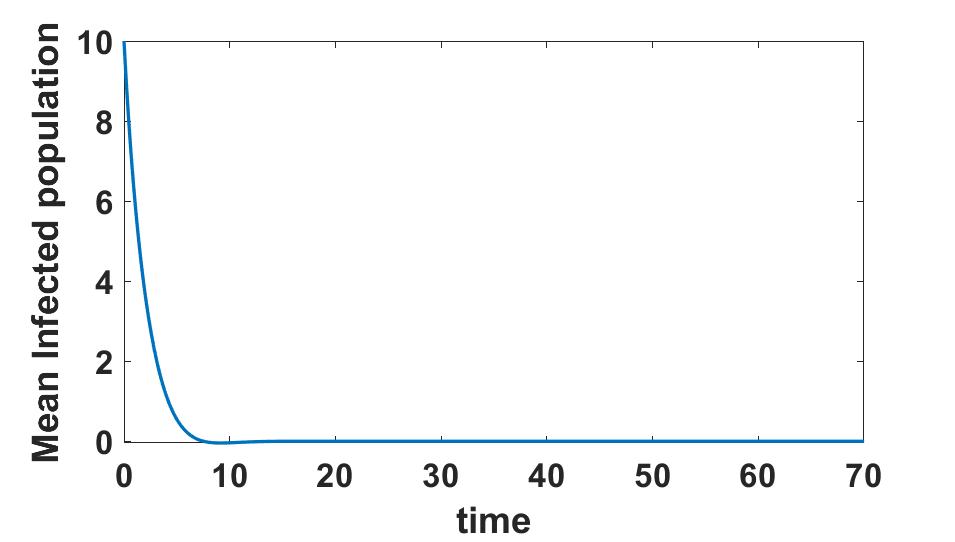}
				\includegraphics[width=3in, height=1.8in, angle=0]{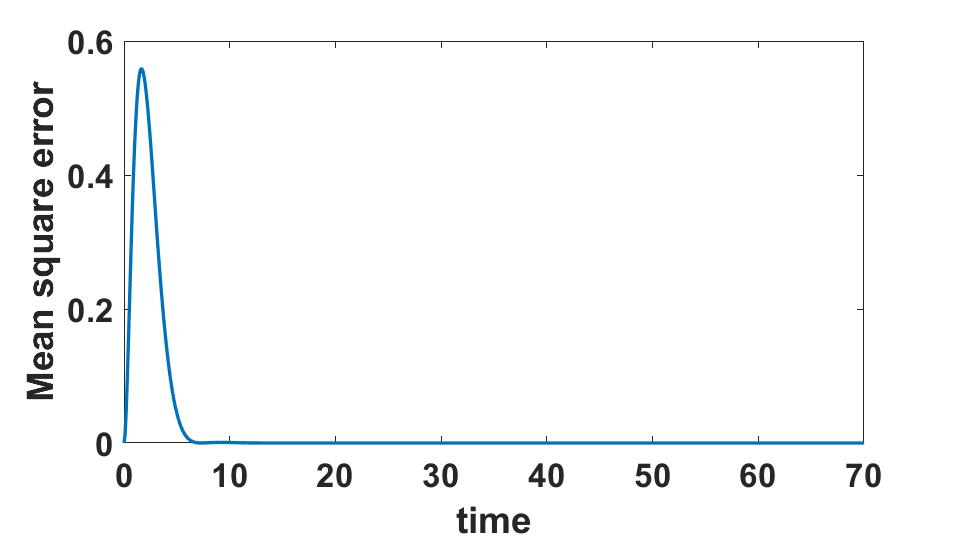}
				\caption{Sensitivity Analysis of $u_{11}$ in Interval II.}
				\label{sen_u11_2}
			\end{center}
		\end{figure}
		
		\begin{figure}[hbt!]
			\begin{center}
				\includegraphics[width=3in, height=1.8in, angle=0]{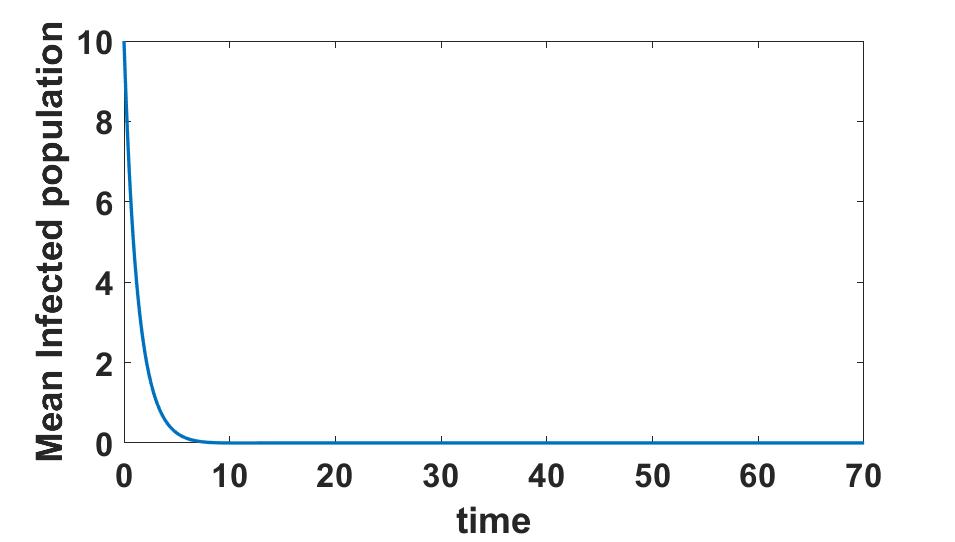}
				\includegraphics[width=3in, height=1.8in, angle=0]{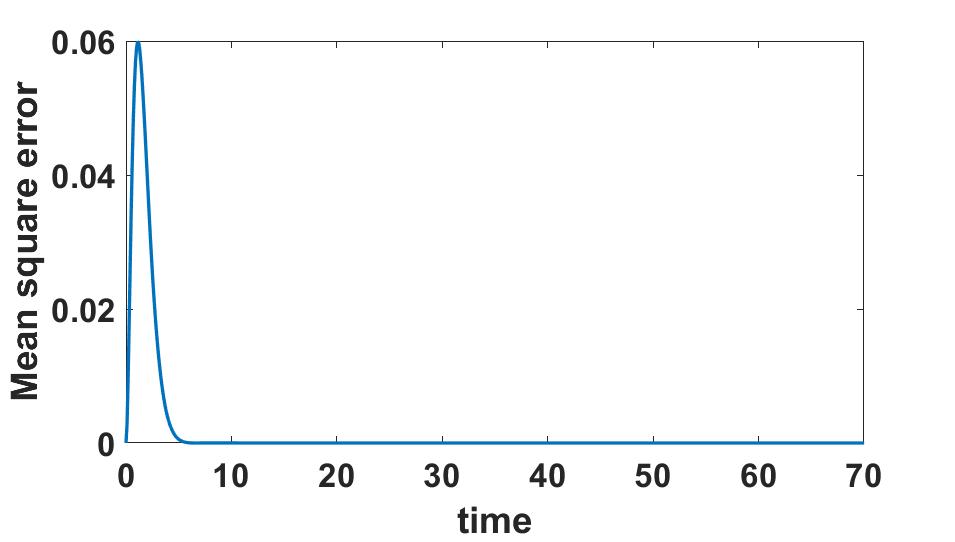}
				\caption{Sensitivity Analysis of $u_{11}$ in Interval III.}
				\label{sen_u11_3}
			\end{center}
		\end{figure}
		
		\begin{figure}[hbt!]
			\begin{center}
				\subcaptionbox*{(a) Interval I}
				{\includegraphics[width=3in, height=1.8in, angle=0]{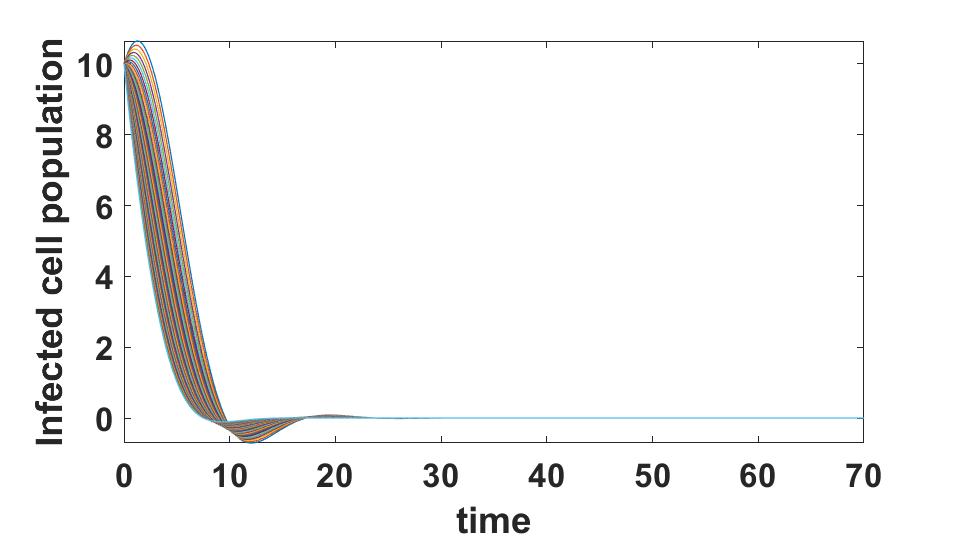}}
			\end{center}
		\end{figure}
		
		\addtocounter{figure}{-1}
		
		\begin{figure}[hbt!]
			\begin{center}
				\subcaptionbox*{(b) Interval II}
				{\includegraphics[width=3in, height=1.8in, angle=0]{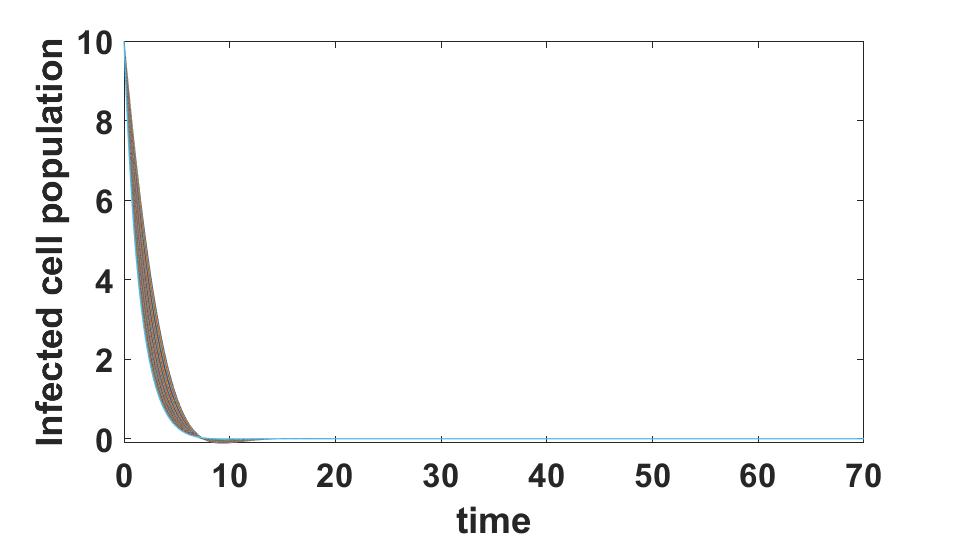}}
				\subcaptionbox*{(c) Interval III}
				{\includegraphics[width=3in, height=1.8in, angle=0]{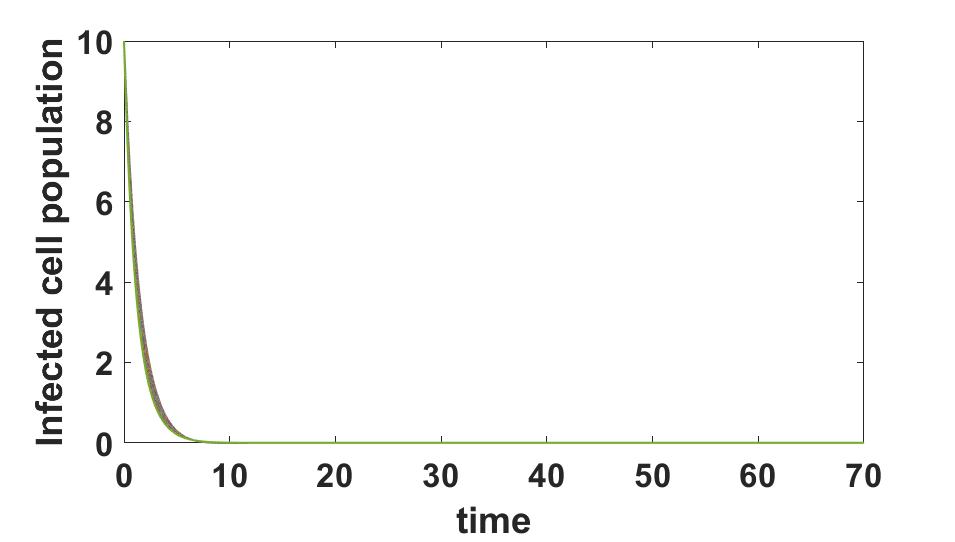}}
				\vspace{0.75\baselineskip}
				\caption{Sensitivity Analysis of $u_{11}$. Infected cell population in different intervals.}
				\label{sen_u11}
			\end{center}
		\end{figure}
		
		\newpage
		\subsection{Parameter $\boldsymbol{\beta}$}
		
		\begin{figure}[hbt!]
			\begin{center}
				\includegraphics[width=3in, height=1.8in, angle=0]{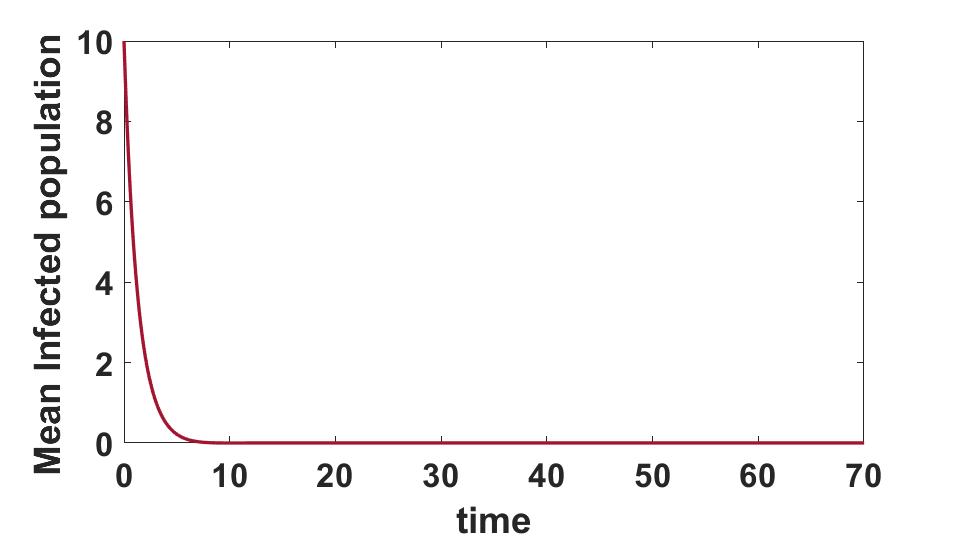}
				\includegraphics[width=3in, height=1.8in, angle=0]{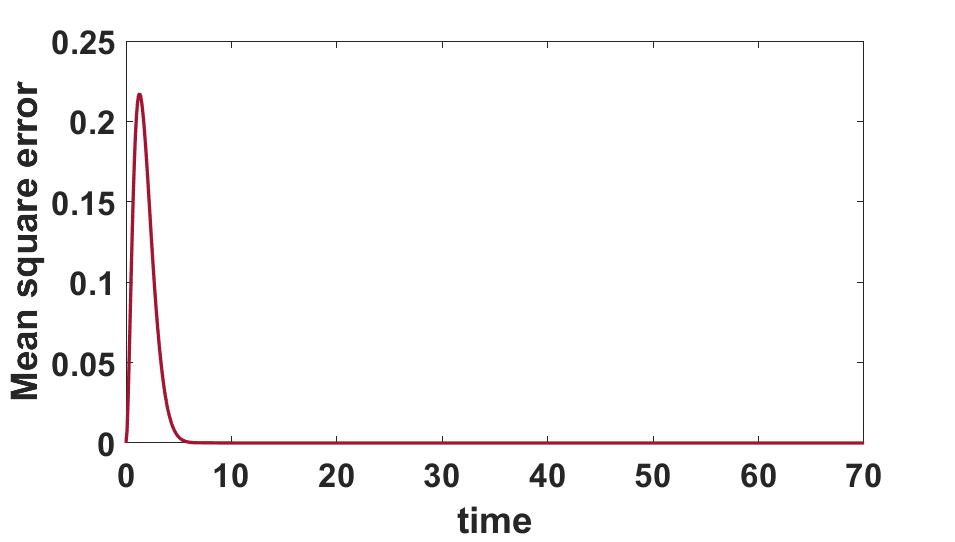}
				\caption{Sensitivity Analysis of $\beta$ in Interval I.}
				\label{sen_beta_1}
			\end{center}
		\end{figure}
		
		\begin{figure}[hbt!]
			
			\begin{center}
				\includegraphics[width=3in, height=1.8in, angle=0]{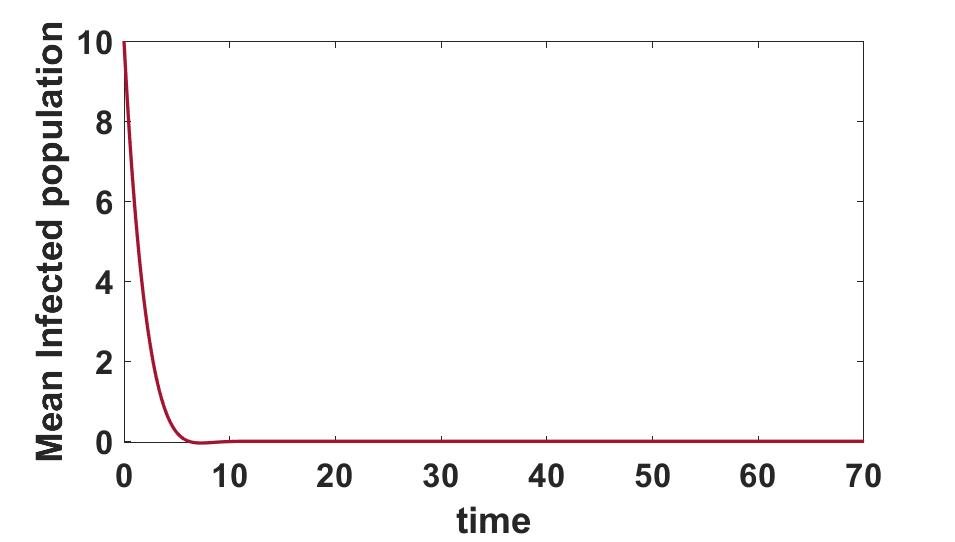}
				\includegraphics[width=3in, height=1.8in, angle=0]{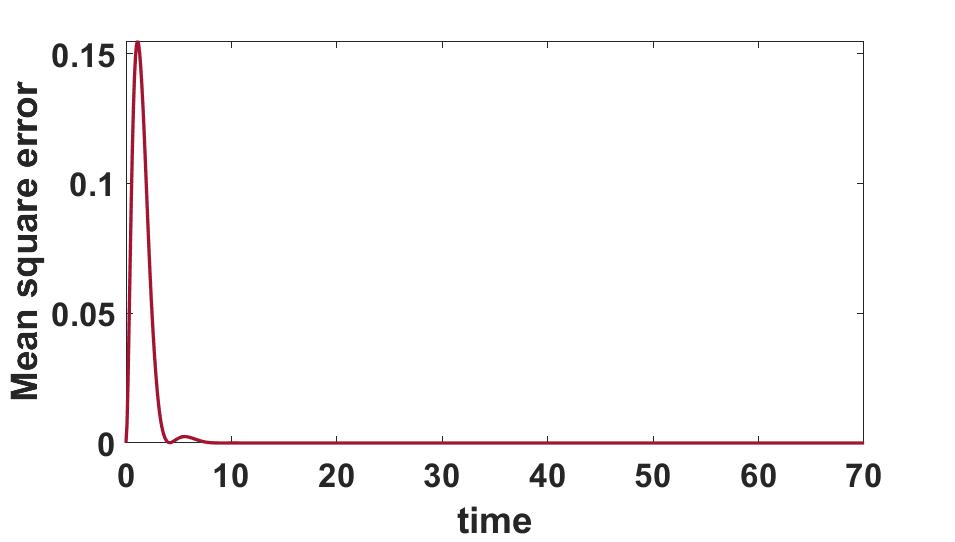}
				\caption{Sensitivity Analysis of $\beta$ in Interval II.}
				\label{sen_beta_2}
			\end{center}
		\end{figure}
		
		\begin{figure}[hbt!]
			\begin{center}
				\includegraphics[width=3in, height=1.8in, angle=0]{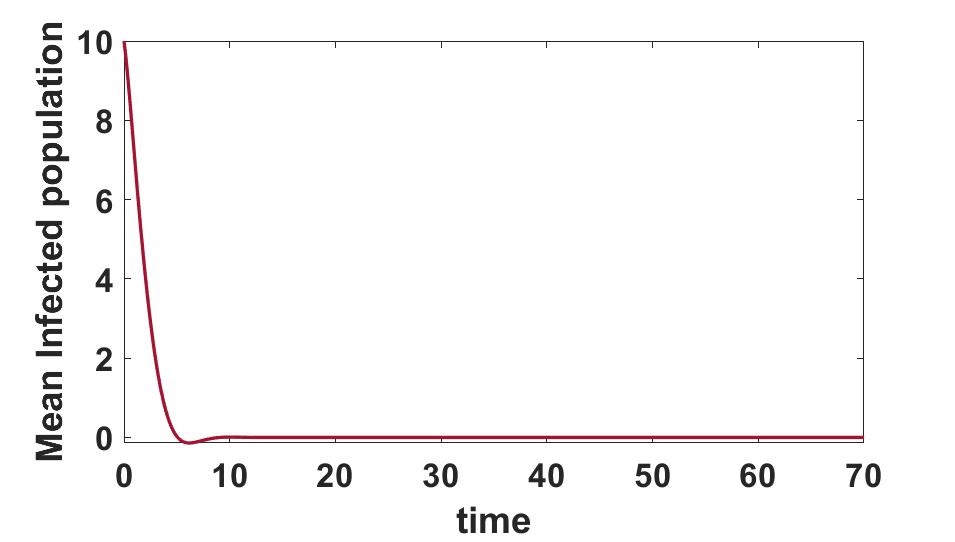}
				\includegraphics[width=3in, height=1.8in, angle=0]{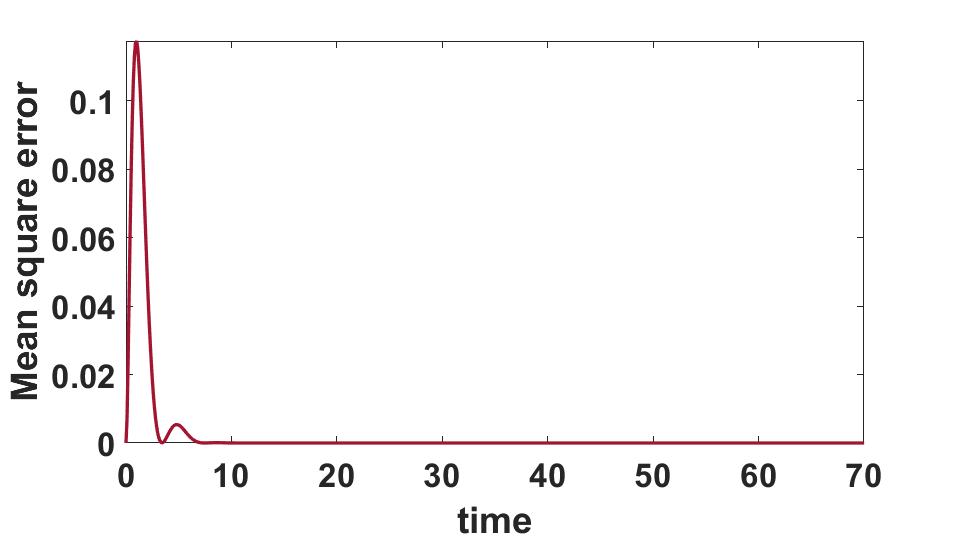}
				\caption{Sensitivity Analysis of $\beta$ in Interval III.}
				\label{sen_beta_3}
			\end{center}
		\end{figure}
		
		\begin{figure}[hbt!]
			\begin{center}
				\subcaptionbox*{(a) Interval I}
				{\includegraphics[width=3in, height=1.8in, angle=0]{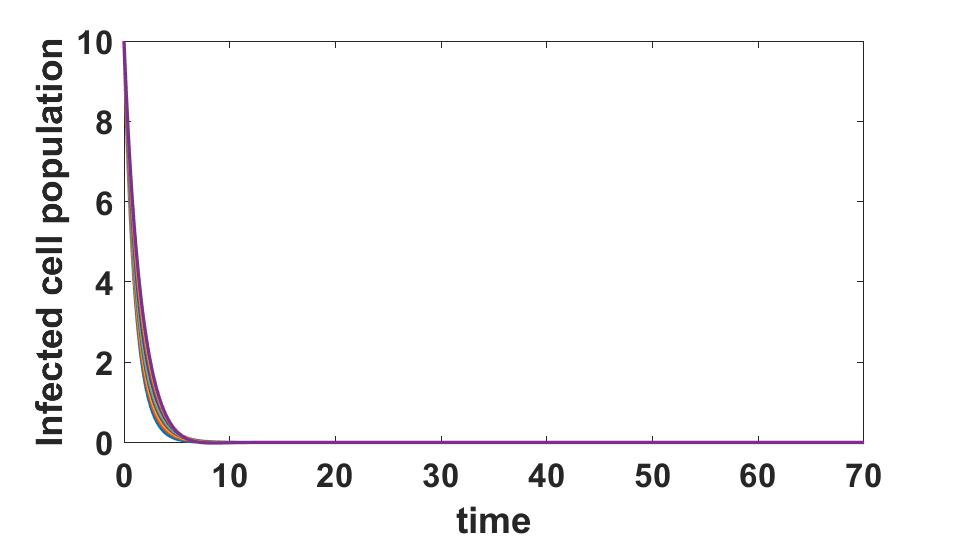}}
			\end{center}
		\end{figure}
		
		\addtocounter{figure}{-1}
		
		\begin{figure}[hbt!]
			\begin{center}
				\subcaptionbox*{(b) Interval II}
				{\includegraphics[width=3in, height=1.8in, angle=0]{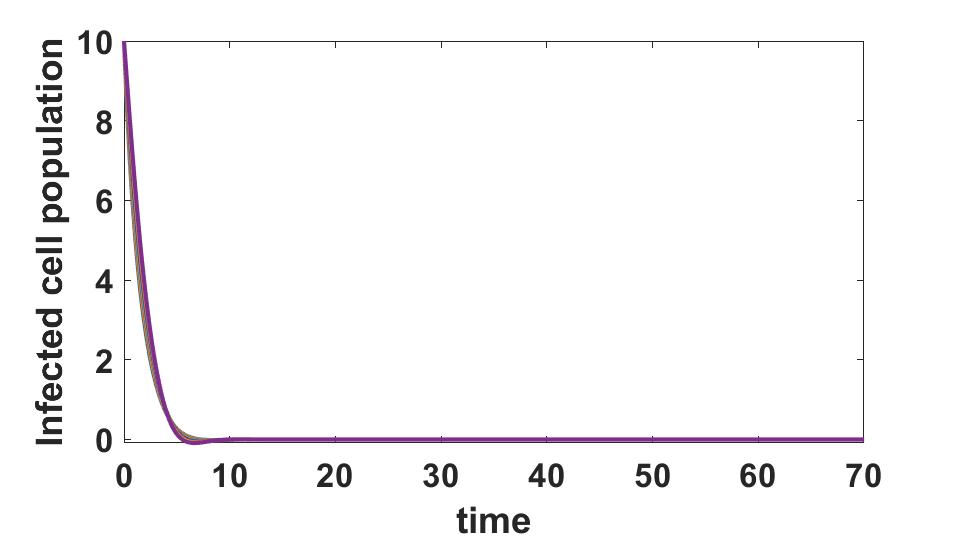}}
				\subcaptionbox*{(c) Interval III}
				{\includegraphics[width=3in, height=1.8in, angle=0]{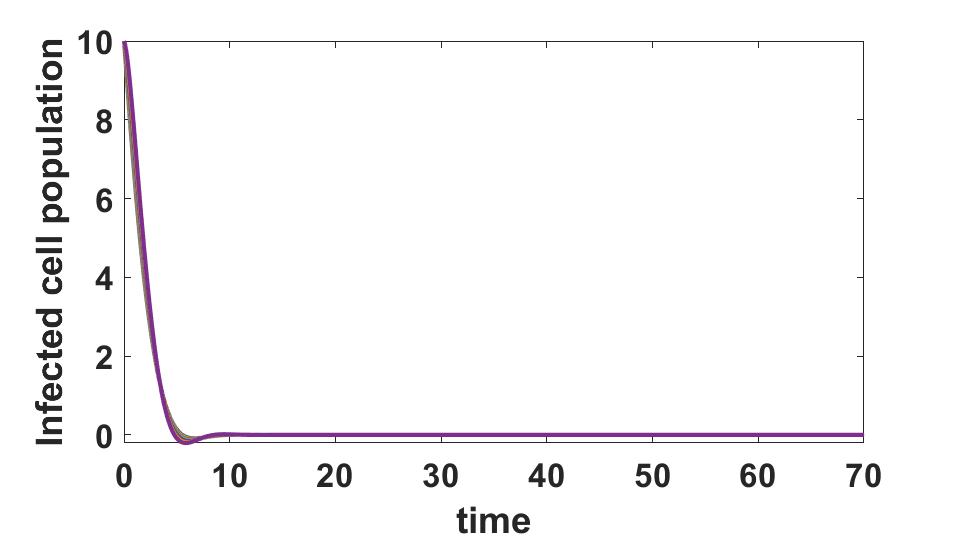}}
				\vspace{0.5\baselineskip}
				\caption{Sensitivity Analysis of $\beta$. Infected cell population in different intervals.}
				\label{sen_beta}
			\end{center}
		\end{figure}
		
		\newpage
		\subsection{Parameter $\boldsymbol{\omega}$}
		
		\begin{figure}[hbt!]
			\begin{center}
				\includegraphics[width=3in, height=1.8in, angle=0]{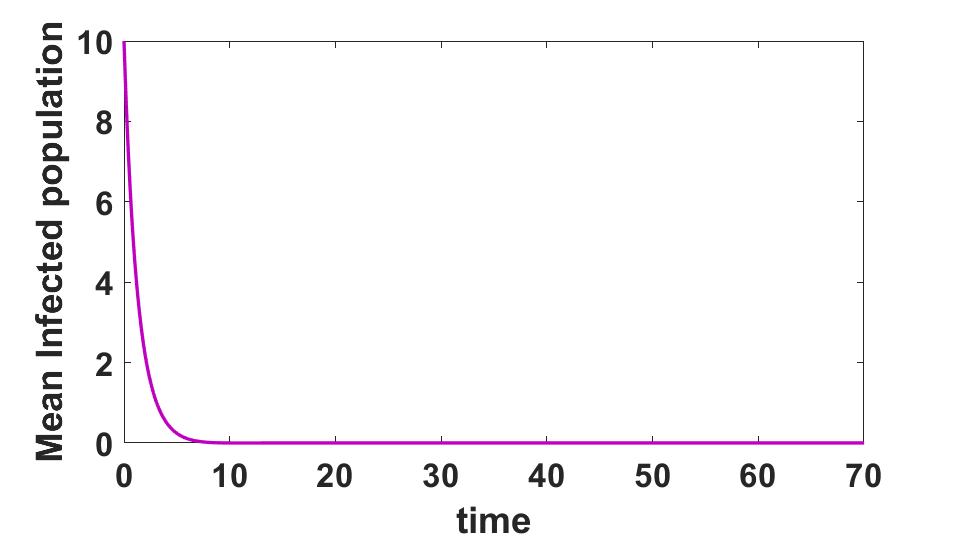}
				\includegraphics[width=3in, height=1.8in, angle=0]{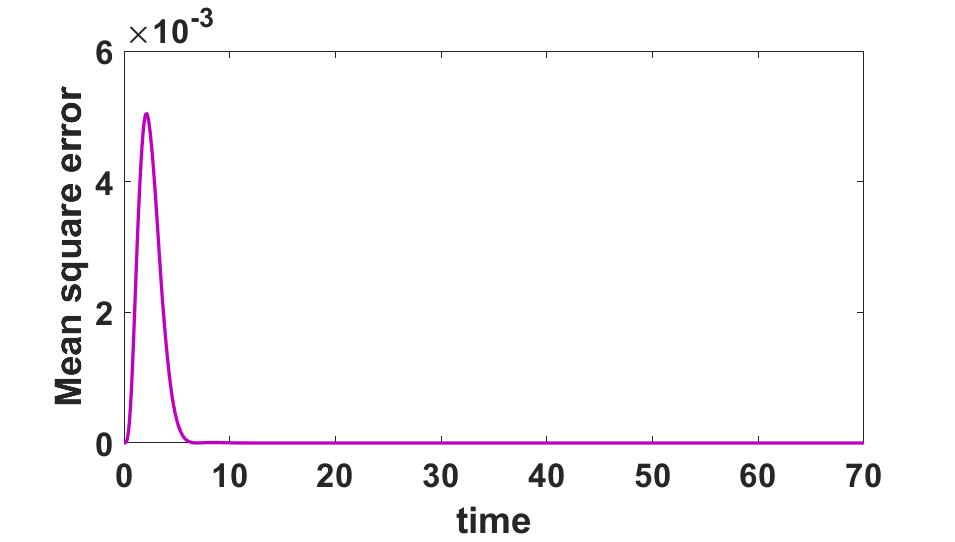}
				\caption{Sensitivity Analysis of $\omega$ in Interval I.}
				\label{sen_omega_1}
			\end{center}
		\end{figure}
		
		\begin{figure}[hbt!]
			
			\begin{center}
				\includegraphics[width=3in, height=1.8in, angle=0]{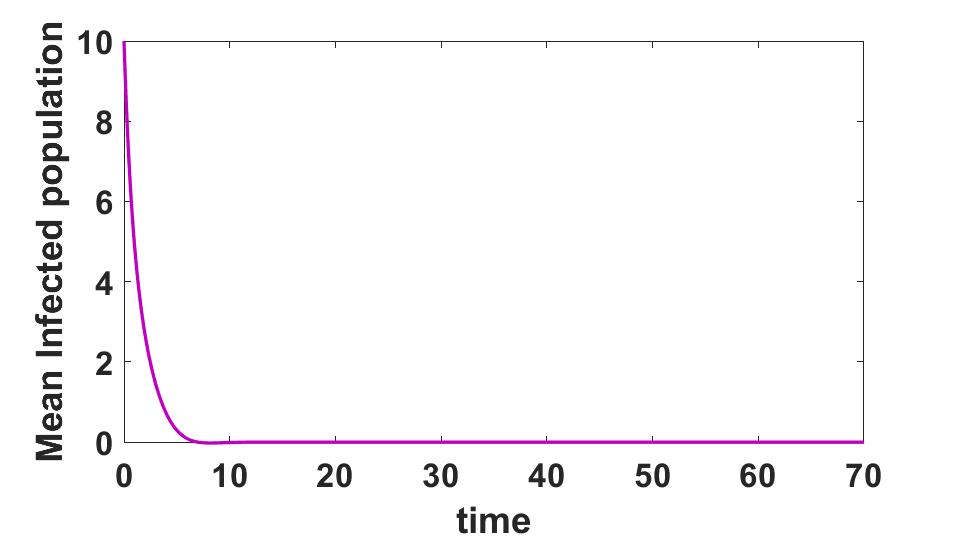}
				\includegraphics[width=3in, height=1.8in, angle=0]{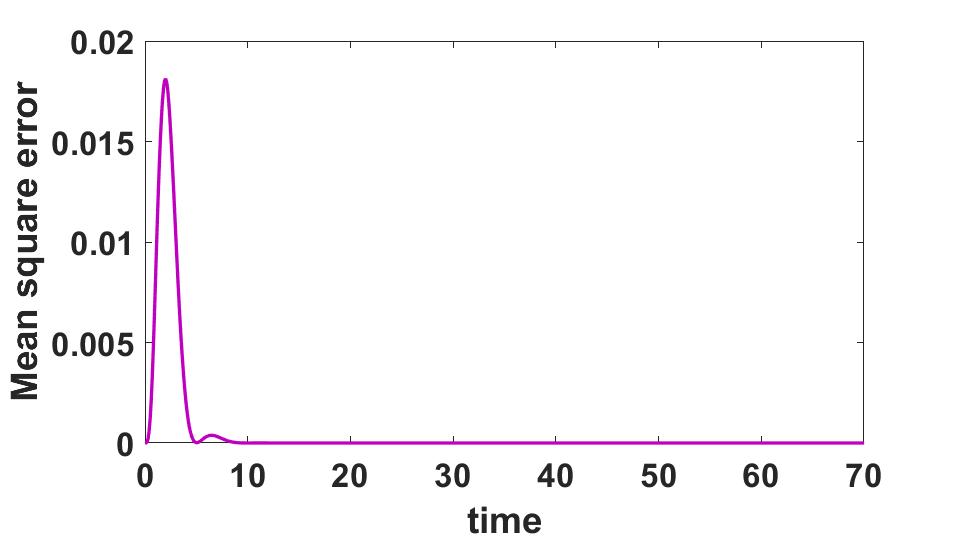}
				\caption{Sensitivity Analysis of $\omega$ in Interval II.}
				\label{sen_omega_2}
			\end{center}
		\end{figure}
		
		\begin{figure}[hbt!]
			\begin{center}
				\includegraphics[width=3in, height=1.8in, angle=0]{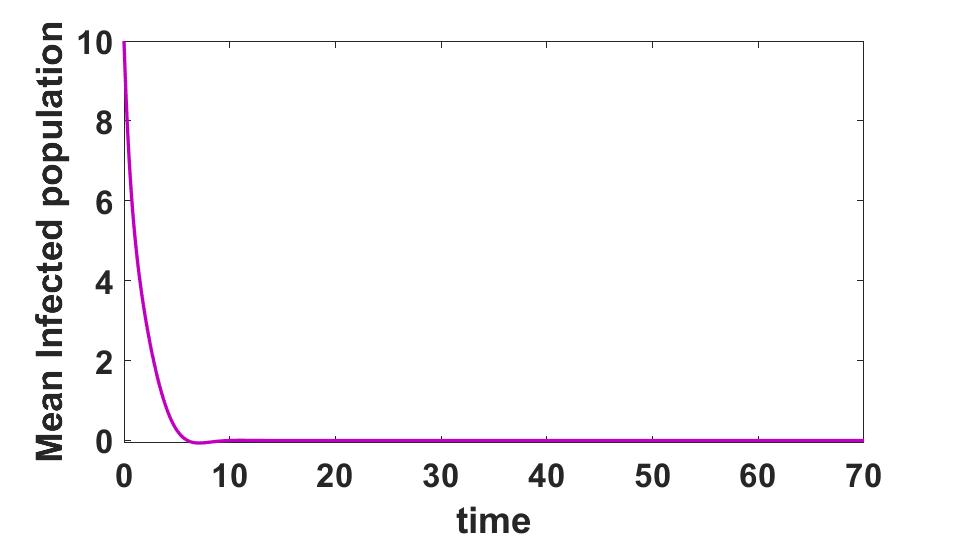}
				\includegraphics[width=3in, height=1.8in, angle=0]{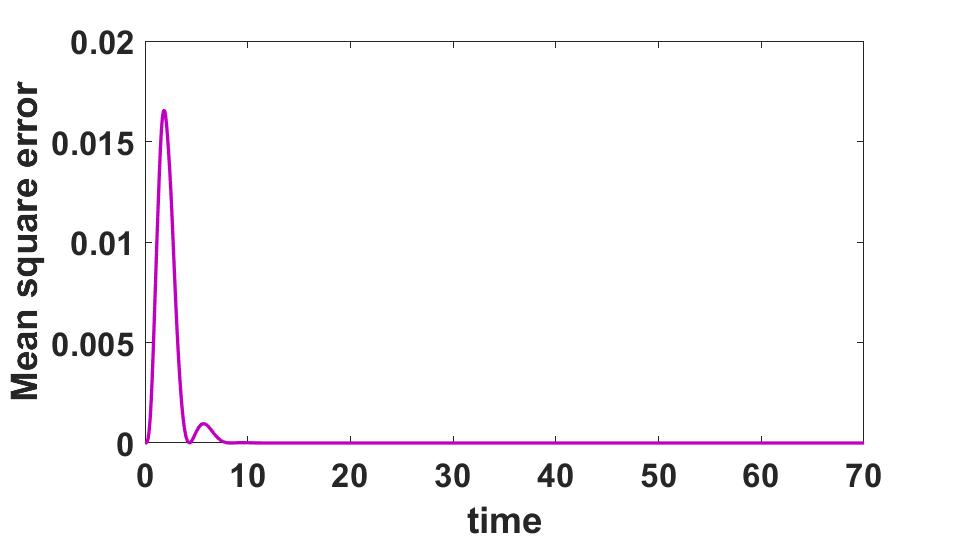}
				\caption{Sensitivity Analysis of $\omega$ in Interval III.}
				\label{sen_omega_3}
			\end{center}
		\end{figure}
		
		\begin{figure}[hbt!]
			\begin{center}
				\subcaptionbox*{(a) Interval I}
				{\includegraphics[width=3in, height=1.8in, angle=0]{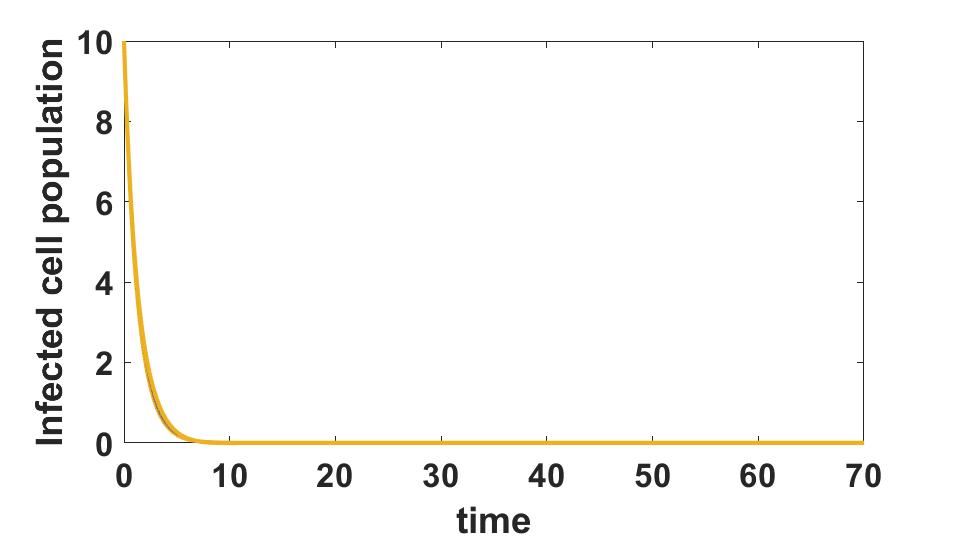}}
			\end{center}
		\end{figure}
		
		\addtocounter{figure}{-1}
		
		\begin{figure}[hbt!]
			\begin{center}
				\subcaptionbox*{(b) Interval II}
				{\includegraphics[width=3in, height=1.8in, angle=0]{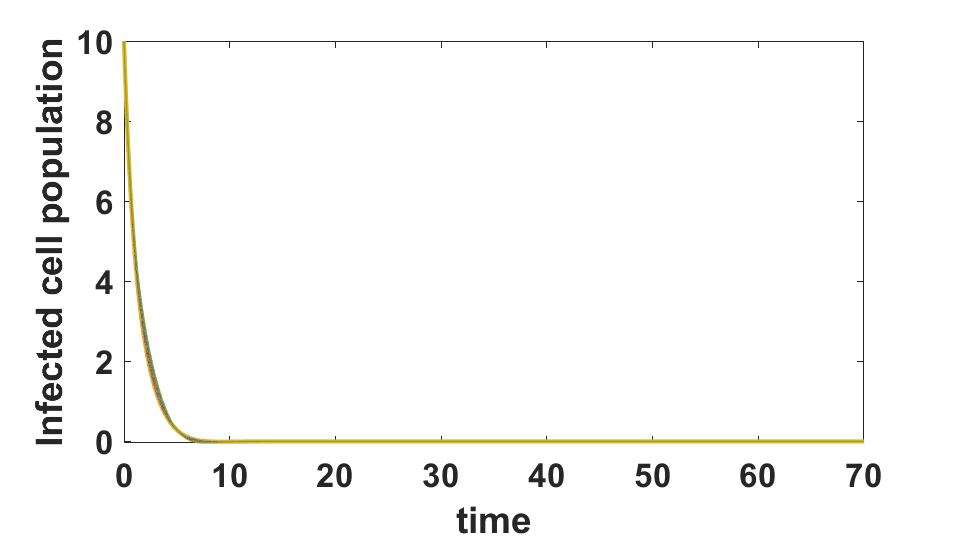}}
				\subcaptionbox*{(c) Interval III}
				{\includegraphics[width=3in, height=1.8in, angle=0]{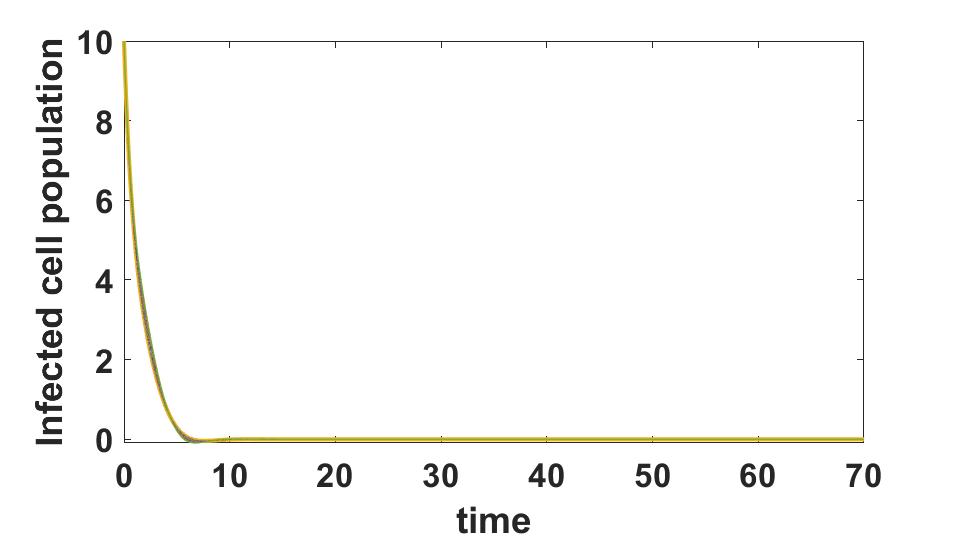}}
				\vspace{0.75\baselineskip}
				\caption{Sensitivity Analysis of $\omega$. Infected cell population in different intervals.}
				\label{sen_omega}
			\end{center}
		\end{figure}
		
		\newpage
		\subsection{Parameter $\boldsymbol{\mu}$}
		
		\begin{figure}[hbt!]
			\begin{center}
				\includegraphics[width=3in, height=1.8in, angle=0]{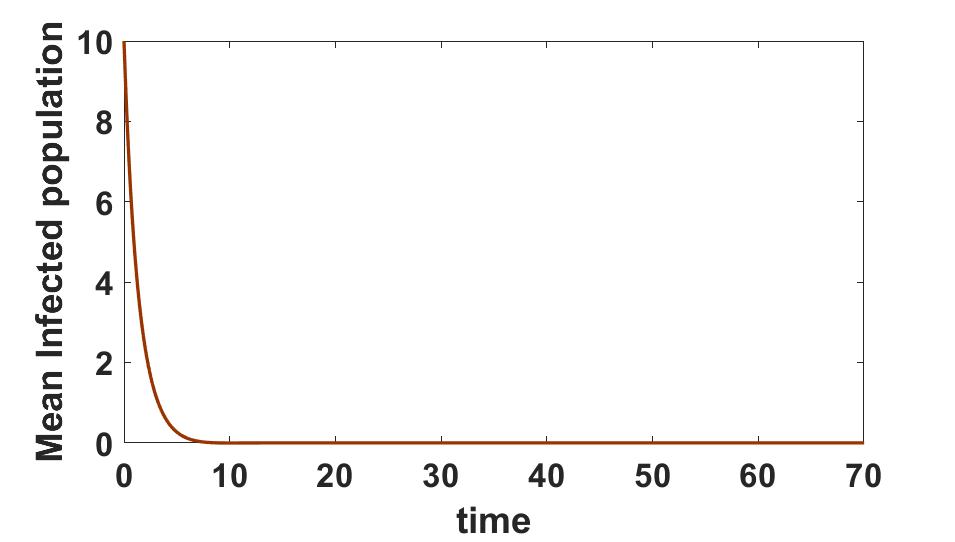}
				\includegraphics[width=3in, height=1.8in, angle=0]{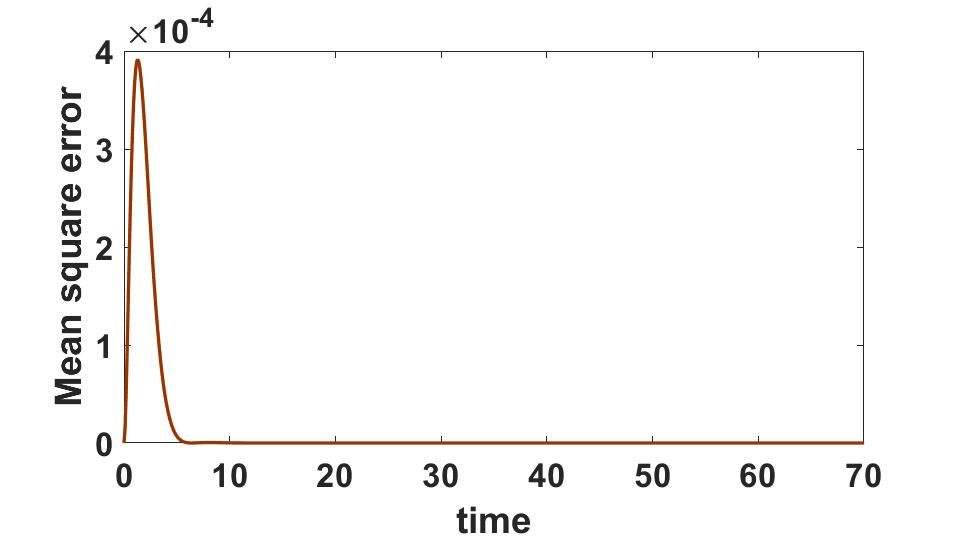}
				\caption{Sensitivity Analysis of $\mu$ in Interval I.}
				\label{sen_mu_1}
			\end{center}
		\end{figure}
		
		\begin{figure}[hbt!]
			
			\begin{center}
				\includegraphics[width=3in, height=1.8in, angle=0]{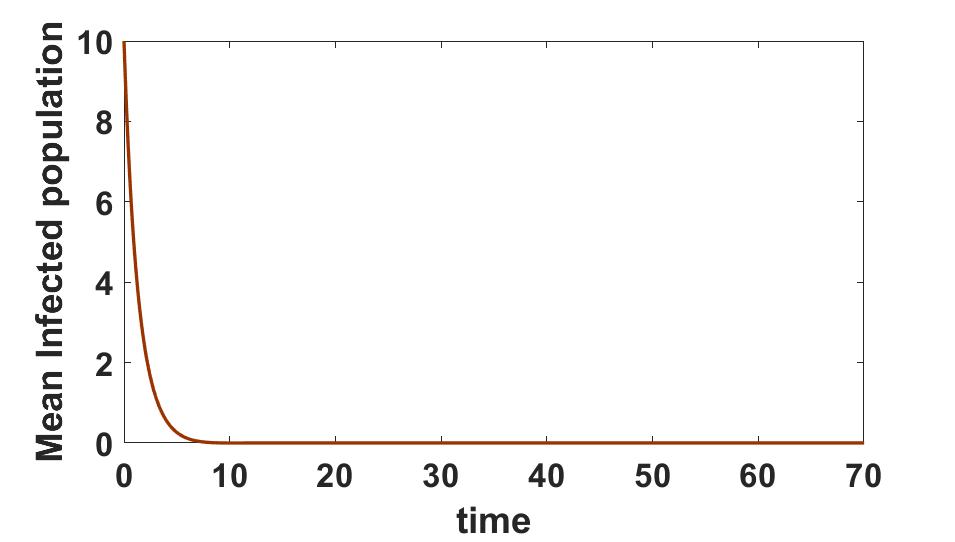}
				\includegraphics[width=3in, height=1.8in, angle=0]{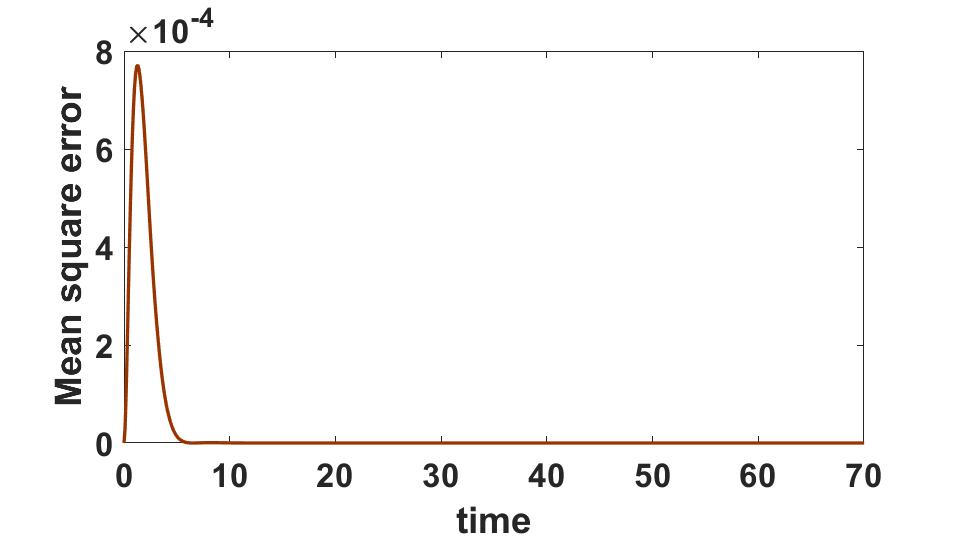}
				\caption{Sensitivity Analysis of $\mu$ in Interval II.}
				\label{sen_mu_2}
			\end{center}
		\end{figure}
		
		\begin{figure}[hbt!]
			\begin{center}
				\includegraphics[width=3in, height=1.8in, angle=0]{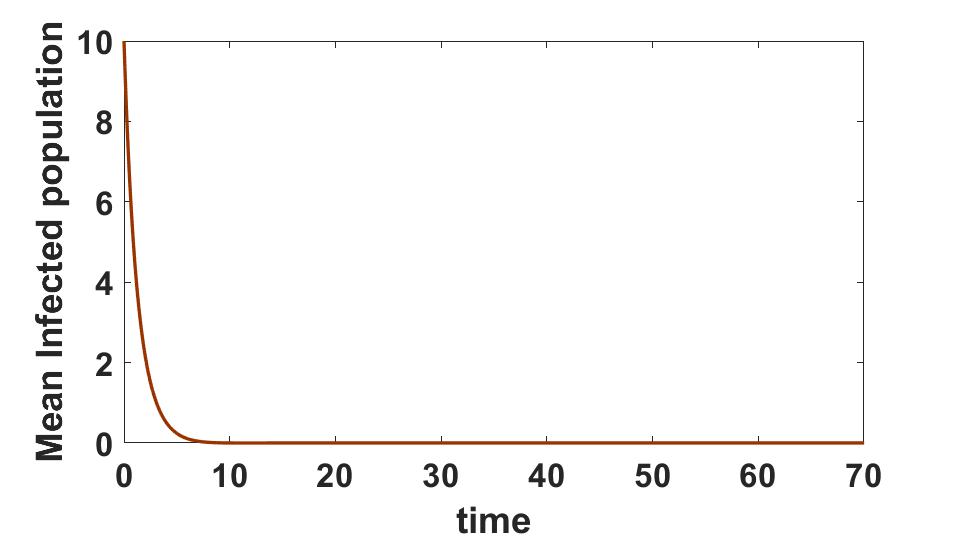}
				\includegraphics[width=3in, height=1.8in, angle=0]{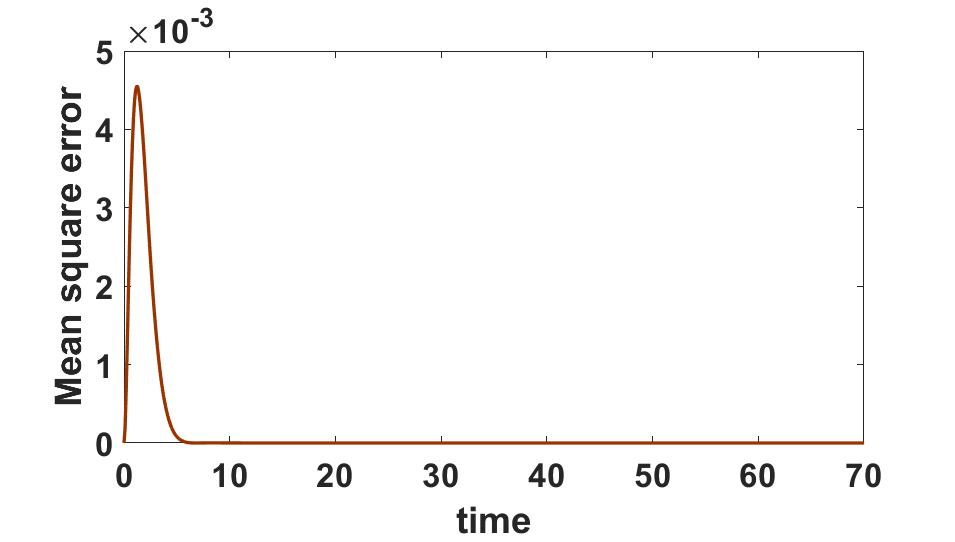}
				\caption{Sensitivity Analysis of $\mu$ in Interval III.}
				\label{sen_mu_3}
			\end{center}
		\end{figure}
		
		\begin{figure}[hbt!]
			\begin{center}
				\subcaptionbox*{(a) Interval I}
				{\includegraphics[width=3in, height=1.8in, angle=0]{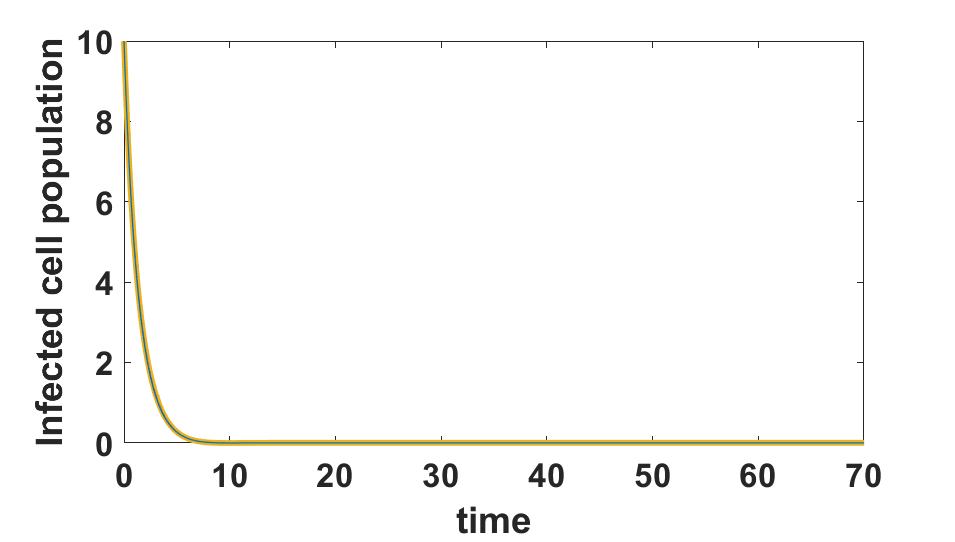}}
			\end{center}
		\end{figure}
		
		\addtocounter{figure}{-1}
		
		\begin{figure}[hbt!]
			\begin{center}
				\subcaptionbox*{(b) Interval II}
				{\includegraphics[width=3in, height=1.8in, angle=0]{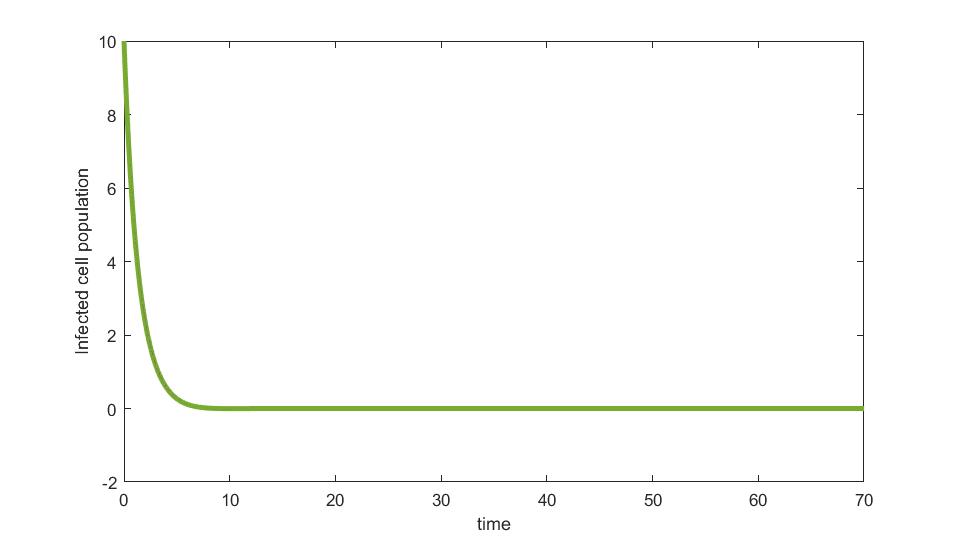}}
				\subcaptionbox*{(c) Interval III}
				{\includegraphics[width=3in, height=1.8in, angle=0]{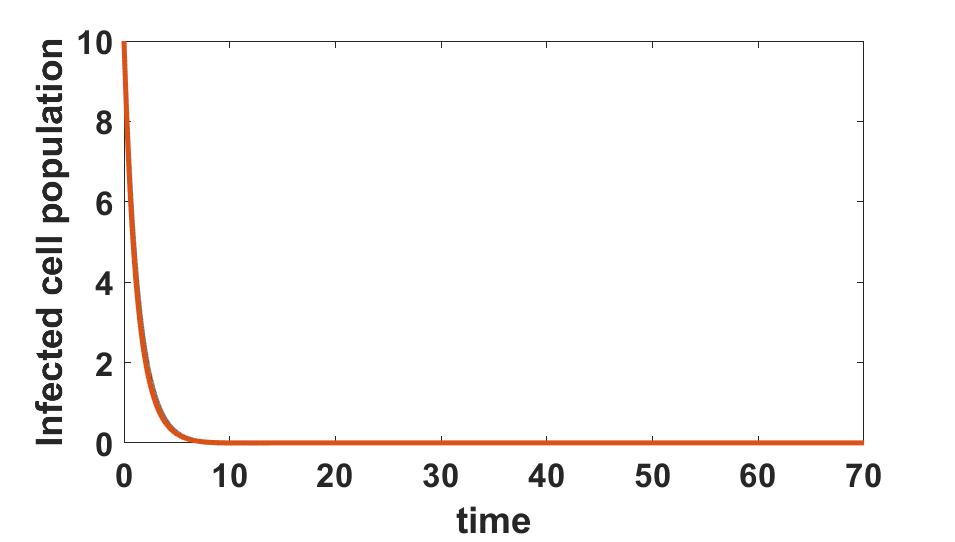}}
				\vspace{0.75\baselineskip}
				\caption{Sensitivity Analysis of $\mu$. Infected cell population in different intervals.}
				\label{sen_mu}
			\end{center}
		\end{figure}
		

\end{document}